\begin{document}

\newtheorem{remark}{Remark}
\newtheorem{todo}{Todo}
\newtheorem{question}{Question}

\newcommand{\tx}{\textsf{tx}}
\newcommand{\Ledger}{\mathbf{L}}
\newcommand{\LState}{\mathsf{L}}
\newcommand{\negl}{\textsf{negl}(\kappa)}
\newcommand{\view}{\textsc{view}}
\newcommand{\fullview}{\textsc{view}^{t,n}_{\Pi,\mathcal{A},\mathcal{Z}}}
\newcommand{\validityattack}{\textsc{persistence}\textsc{-}\textsc{game}}
\newcommand{\livenessattack}{\textsc{liveness}\textsc{-}\textsc{game}}
\newcommand{\true}{\mathsf{true}}
\newcommand{\false}{\mathsf{false}}
\newcommand{\approxn}{\widetilde{n}}
\newcommand{\dexpand}[1]{#1_{\pm\Delta}}
\newcommand{\validator}{\ensuremath{\mathsf{vl}}}
\newcommand{\valid}{valid\xspace}
\newcommand{\invalid}{invalid\xspace}

\newcommand{\ckpt}{\ensuremath{\mathsf{ts}}}
\newcommand{\powchain}{\ensuremath{\mathcal{B}}}
\newcommand{\TT}{\ensuremath{\mathcal{T}}}

\newcommand{\pk}{\mathsf{pk}}
\newcommand{\sk}{\mathsf{sk}}
\newcommand{\pr}{\textit{R}}
\newcommand{\sr}{\textit{k}}

\newcommand{\concat}{\,\|\,}
\newcommand{\eg}{\emph{e.g.}\xspace}
\newcommand{\ie}[0]{\emph{i.e.}\xspace}
\newcommand{\cf}[0]{\emph{cf.}\xspace}

\hyphenation{log-a-rith-mic}
\hyphenation{unique-ly}
\hyphenation{ex-tractabil-i-ty}

\algnewcommand\algorithmicon{\textbf{upon}}
\algblockdefx[ON]{On}{EndOn}[1]
{\algorithmicon\ #1\ \algorithmicdo}
{\algorithmicend\ \algorithmicon}

\algnewcommand\waitcommand{\textbf{wait until }}
\def\Wait#1{\State\waitcommand #1}

\def\chain{\mathcal{C}}

\newcommand{\getsrandomly}{\overset{\$}{\gets}}

\algnewcommand{\IfThen}[2]{%
  \State \algorithmicif\ #1\ \algorithmicthen\ #2}

\algdef{SE}[UPON]{Upon}{EndUpon}[1]{\textbf{upon}\ #1}{\textbf{end upon}}

\newcommand{\Tconfirm}[0]{\ensuremath{T_{\mathrm{cf}}}}
\newcommand{\Tconfirmbtc}[0]{\ensuremath{T^{\mathsf{btc}}_{\mathrm{cf}}}}
\newcommand{\Adv}[0]{\ensuremath{\mathcal A}}
\newcommand{\Env}[0]{\ensuremath{\mathcal Z}}
\newcommand{\fS}[0]{\ensuremath{f_{\mathrm{s}}}}
\newcommand{\fA}[0]{\ensuremath{f_{\mathrm{a}}}}
\newcommand{\fL}[0]{\ensuremath{f_{\mathrm{l}}}}
\newcommand{\client}{\ensuremath{\mathsf{c}}}
\newcommand{\poly}[0]{\ensuremath{\operatorname{poly}}}

\newcommand{\secret}{\ensuremath{\mathsf{sc}}}
\newcommand{\Secret}{\ensuremath{\mathcal{S}}}
\newcommand{\ct}{\ensuremath{\mathsf{ct}}}
\newcommand{\Ct}{\ensuremath{\mathcal{C}}}

\newcommand{\EOTS}{EOTS\xspace}
\newcommand{\hctx}{\ensuremath{H_\mathsf{ctx}}}
\newcommand{\hsig}{\ensuremath{H_\mathsf{sig}}}
\newcommand{\deq}{\ensuremath{\coloneqq}}

\newcommand{\sct}{\mathsf{sct}}
\newcommand{\pct}{\mathsf{pct}}
\newcommand{\Z}{\mathbb{Z}}
\newcommand{\G}{G}
\newcommand{\Tctx}{T_{\mathsf{ctx}}}
\newcommand{\OSch}{\mathcal{O}_{\mathsf{Sch}}}
\newcommand{\OEOTSctx}{\mathcal{O}_{\mathsf{EOTS\text{-}Sign\text{-}Ctx}}}
\newcommand{\OEOTS}{\mathcal{O}_{\mathsf{EOTS\text{-}Sign}}}
\newcommand{\Qctx}{Q_{\mathsf{ctx}}}
\newcommand{\Bad}{\mathsf{Bad}}
\newcommand{\Advantage}{\mathsf{Adv}}
\newcommand{\sample}{\mathrel{{\leftarrow}_\$}}
\newcommand{\hashctx}{\mathsf{hct}}

\def\Q{\mathcal{Q}}
\def\T{\mathcal{T}}
\title{
   Bitcoin Staking
}

\author{Xinshu Dong}
\affiliation{
\institution{Babylon Labs}
\city{}
\state{}
\country{}
}
\email{xinshu@babylonlabs.io}

\author{Orfeas Stefanos Thyfronitis Litos}
\affiliation{
\institution{Common Prefix}
\city{}
\state{}
\country{}
}
\email{orfeas.litos@proton.me}

\author{Ertem Nusret Tas}
\affiliation{
\institution{a16z Crypto Research}
\city{}
\state{}
\country{}
}
\email{nusret@stanford.edu}

\author{David Tse}
\affiliation{
\institution{Byzantine Research}
\city{}
\state{}
\country{}
}
\email{dntse@babylonchain.io}

\author{Robin Linus Woll}
\affiliation{
\institution{Stanford University and ZeroSync}
\city{}
\state{}
\country{}
}
\email{robin@zerosync.org}

\author{Lei Yang}
\affiliation{
\institution{MegaETH}
\city{}
\state{}
\country{}
}
\email{lei@yangl1996.com}

\author{Mingchao Yu}
\affiliation{
\institution{Babylon Labs}
\city{}
\state{}
\country{}
}
\email{fishermanymc@gmail.com}

\begin{abstract}

The idea of security sharing goes back to Nakamoto’s introduction of merge mining, a technique that enables Bitcoin miners to reuse their hash power to bootstrap and secure other Proof-of-Work (PoW) blockchains. However, with the rise of Proof-of-Stake (PoS) chains, there is a need for new methods of Bitcoin security sharing. We introduce {\em Bitcoin staking}, a protocol that allows Bitcoin holders to trustlessly use their idle asset to secure a PoS chain. The key challenge is to enable automatic slashing of bitcoins on the Bitcoin chain upon safety violations on the PoS chain. We achieve this using double-authentication-preventing signatures, finality gadgets and bi-directional timestamping between Bitcoin and the PoS chain. Our design is entirely modular and can be integrated with any PoS chain. A version of this protocol was deployed to secure the Babylon mainnet in April 2025 and currently has over 58,000 bitcoins staked (about 4 billion USD at current prices) while paying only $0.05\%$ APR reward to the stakers. This is $2$ orders of magnitude cheaper security cost than in PoS chains secured by their native token.

\end{abstract}

\maketitle

\section{Introduction}

\subsection{Bitcoin Security Sharing}
The concept of \emph{security sharing} is nearly as old as Bitcoin itself. In 2010, Satoshi Nakamoto proposed \emph{merge mining}, which enables Bitcoin miners to reuse their mining power to secure other blockchains \cite{nakamoto_merge}. The goal of merge mining is to achieve \emph{scalability} of Bitcoin: rather than hosting multiple applications on Bitcoin, the protocol supports Bitcoin as a dedicated payment system while using its mining power to secure separate blockchains for other use cases. Namecoin~\cite{namecoin}, Dogecoin~\cite{dogecoin}, and Rootstock (RSK)~\cite{rsk} are examples of chains using merge mining.
However, merge mining faces two major limitations:
\begin{enumerate}
    \item It is only suitable for sharing security with Proof-of-Work (PoW) chains. Since most modern blockchains are Proof-of-Stake (PoS), merge mining has limited applicability.
    \item It allows \emph{costless attacks}: a Bitcoin miner can attack the merge-mined chain without consequences on Bitcoin itself. Because miners are primarily invested in Bitcoin, they can attack the merge-mined chain with little economic risk. As such, merge mining shares \emph{hash power}, but not \emph{security}.
\end{enumerate}

\subsection{Remote Staking}

{\em Remote staking} is a recently emerged approach to security sharing in PoS systems. PoS blockchains typically rely on their native tokens for security. However, this inherently limits the economic security of the chain to the market capitalization of that token. By allowing remote staking—staking of assets from a different blockchain—PoS chains can increase their security through a higher total staked value. In such a protocol, crypto assets are locked in a smart bond contract on the \emph{security provider chain}, designating a preferred validator of the \emph{security consumer chain}. This bond contract enables slashing of the staked asset if and only if the validator commits a provable offense.

This concept underpins \emph{mesh security} in the Cosmos ecosystem~\cite{sunny-mesh, mesh-sec-github}, where assets from one Cosmos chain help secure another. It is also inspired by Ethereum's Eigenlayer \emph{restaking}~\cite{eigenlayer}, which uses ETH collateral to secure middleware components like bridges, data availability layers, and oracle networks.

\subsection{Bitcoin Staking}

Bitcoin is a PoW chain with immense hash power. It is also an asset with a market cap of over  \$1 trillion USD, comprising over 60\% of the total crypto market. Enabling remote staking of bitcoins could unlock this immense asset pool to secure PoS chains. Additionally, slashing bitcoins creates true economic cost for malicious behavior, thus overcoming the costless attack limitation of merge mining. However, Bitcoin lacks Turing-complete smart contracts, posing a challenge for automatic slashing.

The main contribution of this paper is {\em Bitcoin staking}, a remote staking protocol that uses Bitcoin as the provider chain while achieving \emph{optimal economic safety} for the PoS consumer chain. 
Bitcoins are locked in a bond contract on Bitcoin itself -- no bridging is required.  Slashing of the staked assets without smart contracts is enabled by a novel combination of cryptography, consensus design and the limited opcodes of the Bitcoin script. 

A version of the Bitcoin staking protocol secures the Babylon PoS chain \cite{btc-staking}, launched in April 2025. As of this writing, over $58,000$ BTC (about 4 billion USD at current prices) has been staked on the Babylon chain, making it the $8$th largest PoS chain by staking market capitalization (Figure \ref{fig:staking_cap}). Moreover, this significant economic security is obtained at a much lower cost than any of the PoS chains that obtain security from their native token (Figure \ref{fig:security_cost}).

\begin{figure}[h]
    \centering
    \includegraphics[width=\linewidth]{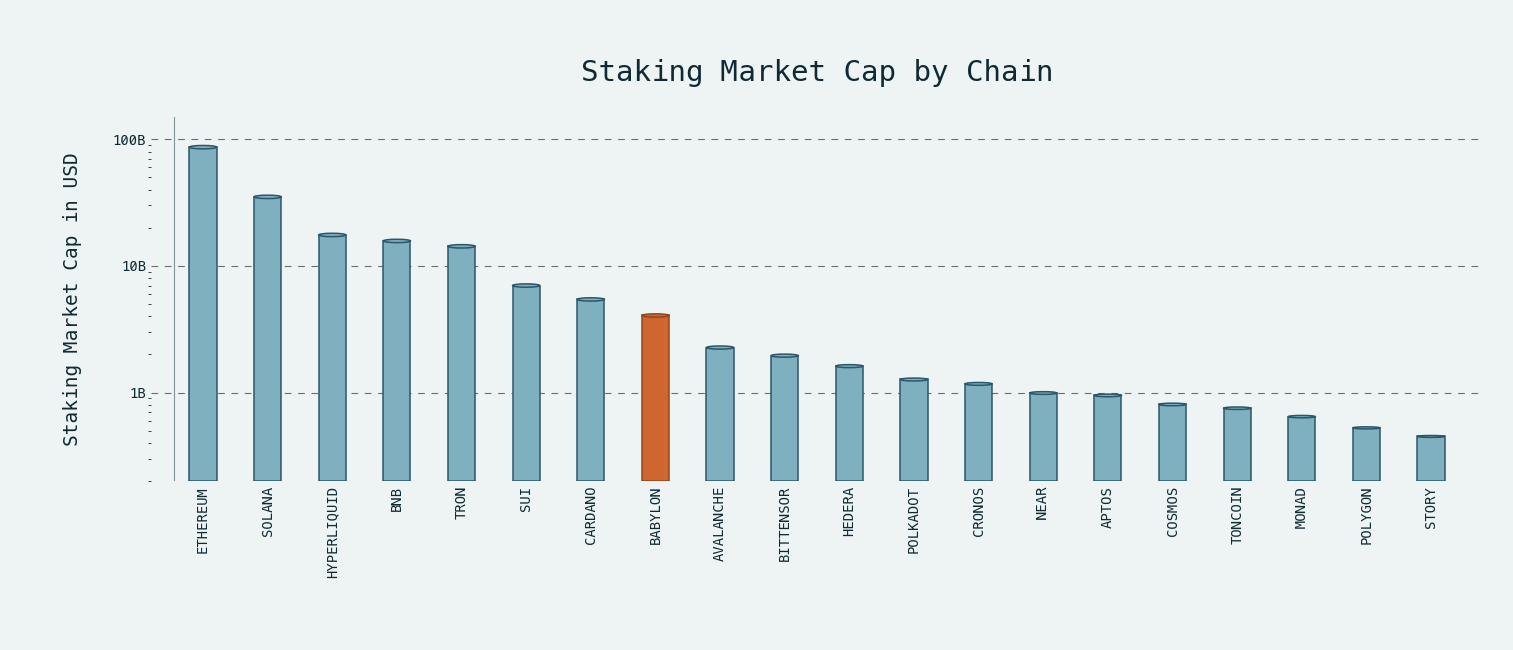}
    \caption{Staking market capitalizations of the top $20$ PoS chains in comparison to Babylon. Note that the $y$-axis is in log scale. Data is from~\cite{stakingreward}, accessed on April 27, 2026.}
\label{fig:staking_cap}
\end{figure}

\begin{figure}[h]
    \centering
    \includegraphics[width=\linewidth]{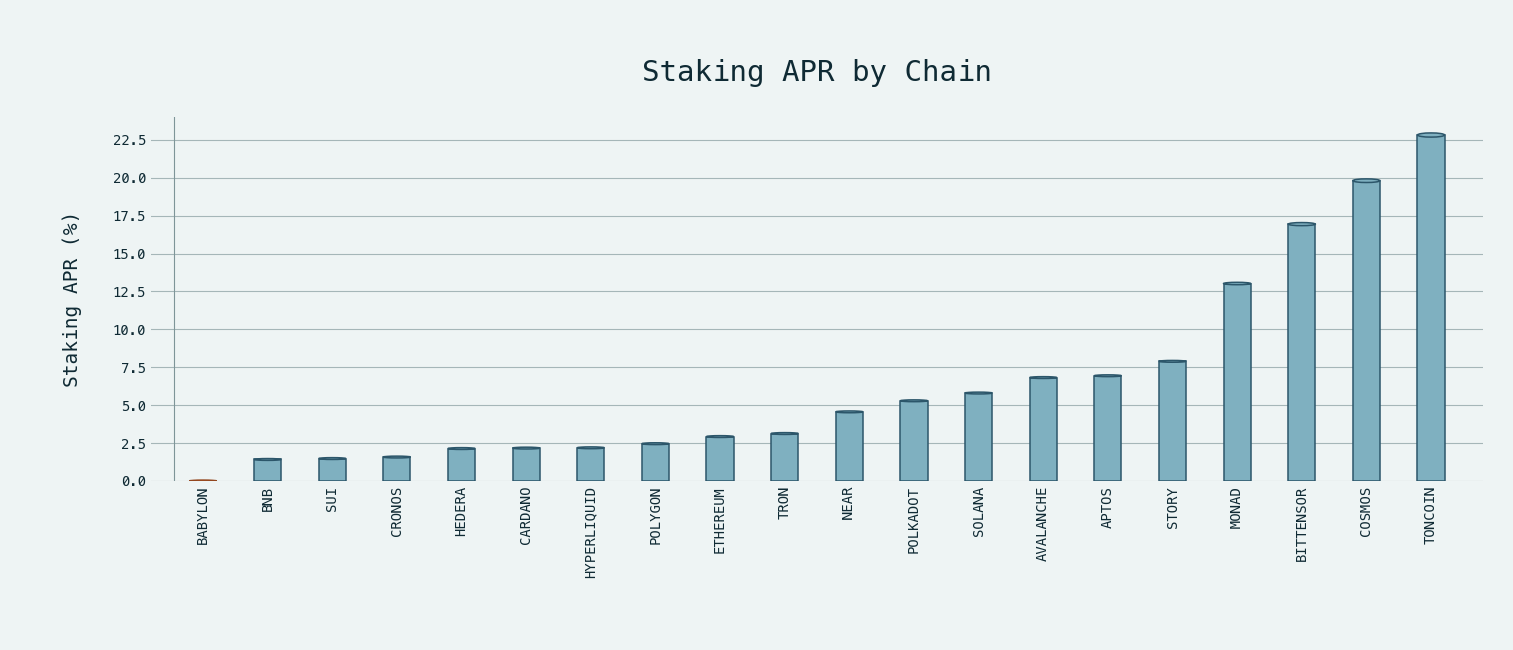}
    \caption{Current cost of security for the top $20$ PoS chains. PoS chains pay rewards programmatically to stakers for locking their capital, and the APR here refers to the annualized reward a staker earns as a percentage of the amount staked. The current APR earned by Bitcoin stakers on the Babylon chain is $0.05\%$, by far the smallest among all chains. Data is from~\cite{stakingreward}, accessed on April 27, 2026.}
\label{fig:security_cost}
\end{figure}

\subsection{Technical Innovations}

There are two main technical innovations that enable the security of the Bitcoin staking protocol:

\subsubsection{Cryptographic slashing without smart contracts}

Because of the lack of smart contracts on Bitcoin, one cannot just send any safety violation evidence and rely on Bitcoin to process such evidence. Our protocol instead allows the sending of an evidence that can directly lead to slashing: the staker's private key. To ensure that the private key of an adversarial staker is leaked whenever a safety violation occurs, we combine two ideas: (a) 
Extractable One-Time Signatures (EOTS) from cryptography, which are related to double-authentication-preventing signatures (DAPS)~\cite{PS14} and accountable assertions~\cite{BKS15}, and (b) finality gadgets, from blockchain consensus \cite{casper, NTT21}.

In EOTS (as in DAPS), each signature is associated with a public \emph{context}, and when a signer signs two messages using the same private key and context, then the private key can be extracted from the signatures on the two messages. 
Unlike DAPS, our EOTS variant splits the context into a private and a public component. The private component is used for signing, whereas the public component is used for verifying.

Accountable assertions and DAPS have been proposed as a general method for penalizing equivocation~\cite{BKS15}, such as double-spending the same bitcoin. However, slashing conditions for consensus protocols are more complex than equivocation on two specific messages. For example, in Casper FFG~\cite{casper}, the slashability module of the PoS Ethereum protocol, there are two sets of slashing conditions. The first set of conditions, that of signing two blocks at the same height, is an equivocation. The second set is more complex and cannot be expressed as an equivocation. Similarly, in Tendermint, there are again two sets of slashing conditions. One set of conditions refers to signing two blocks in the same round at the same height, but the other set of conditions comes from so-called amnesia attacks~\cite{tendermintacc}, which again are not directly expressible as equivocations.

We bypass this issue by not changing the PoS base consensus protocol itself, but instead add an extra signing round after the base consensus protocol has finalized a block, signed using EOTS. 
A block is considered truly finalized if it is both finalized by the base protocol {\em and} receives EOTS signed by more than  $2/3$ of the stake. 
One can interpret this extra round of signing as a type of {\em finality gadget}~\cite{NTT21}, an EOTS finality gadget. It is shown that if there is a safety violation in this modified protocol, then more than $1/3$ of the stake has signed two blocks at the same height using EOTS. This leads to the extraction of the private keys of those stakers. Moreover, the EOTS signature scheme can be implemented based on Schnorr signatures, the signature scheme used in Bitcoin (See~\Cref{sec:daps-constr}). Hence, these extracted private keys can be used to spend the slashing transactions.

One very important advantage of this finality-gadget-based solution is its {\em modular nature}: it can be used on top of all BFT consensus protocols with no change to the base consensus protocols themselves. This enables the technology to be PoS-chain-agnostic.

\subsubsection{Secure unbonding via bi-directional timestamping}

In a PoS chain secured by its native asset, the stake distribution, which tracks the evolution of who has how much stake as tokens change hands, is maintained by the consensus of the blockchain itself. This means that the voting powers of the validators on any block are well-defined.  However, for a remote staking protocol like Bitcoin staking, the stake distribution is maintained on the Bitcoin  chain while the voting is happening on the PoS chain, so there is no {\em a priori} connection between the two. An attacker can slow down the PoS chain, resulting in an out-of-date staker set used to validate a PoS block. This would mean that an attacker can unstake on the Bitcoin chain but still has the voting power to fork on the PoS chain. Even though the private key of the staker is leaked, it is too late to slash it because the staker has already unstaked on the Bitcoin chain. 

To avoid this attack, the PoS chain should be tightly synchronized with the Bitcoin chain. This can be achieved by {\em bi-directional timestamping}, where the PoS block hashes and the staker set voting for the blocks are recorded on the Bitcoin chain, and the Bitcoin block hashes are similarly recorded in the PoS blocks. Interestingly, (single-directional) timestamping of PoS blocks on Bitcoin is also very useful for solving long-range posterior corruption attacks~\cite{DPP19,BGK+18,DPS19} in PoS chains with native staking, by using Bitcoin as an external trust \cite{btc-pos}. Our bi-directional timestamping protocol simultaneously solves the long range attack problem and the stake distribution synchronization problem. In fact, this contribution applies more generally to remote staking protocols for any provider chain, not necessarily Bitcoin.

\begin{figure}[t]
\centering
\resizebox{\columnwidth}{!}{%
\begin{tikzpicture}[x=1cm,y=1cm,line cap=round,line join=round,font=\sffamily]

\draw[bond,line width=0.9pt] (0.55,4.60) rectangle (1.08,5.36);
\draw[bond,line width=0.8pt] (0.68,5.16) -- (0.94,5.16);
\draw[bond,line width=0.8pt] (0.68,5.04) -- (0.94,5.04);
\draw[bond,line width=0.8pt] (0.68,4.92) -- (0.90,4.92);
\draw[bond,line width=0.8pt] (0.68,4.80) -- (0.83,4.80);
\draw[bond,line width=1.0pt] (0.68,4.68) -- (0.77,4.77);
\draw[bond,line width=1.0pt] (0.77,4.68) -- (0.68,4.77);
\draw[bond,line width=1.0pt] (0.90,4.72) -- (1.00,4.72);

\node[align=center,text=bond] at (3.85,4.98)
  {{\bfseries Bond Contract:}\\Validators lock bitcoin for $k$\\blocks in the bond contract.};

\node[align=center,text=stamp] at (9.15,4.93)
  {{\bfseries Timestamping Protocol:}\\PoS block hashes and\\the EOTS signatures};

\node[anchor=west,text=bond] at (0.10,3.30)
  {{\bfseries Provider Chain: Bitcoin}};

\node[align=center,text=finality] at (1.85,1.75)
  {{\bfseries Finality Gadget:}\\Validators sign the finalized\\PoS blocks with EOTS.};

\node[anchor=west,text=consumer] at (1.15,0.45)
  {{\bfseries Consumer Chain}};

\def\xb{4.55}
\def\xc{6.10}
\def\xd{8.05}
\def\xe{10.90}

\draw[bond,line width=0.5pt,->] (4.30,4.28) -- (\xb,3.72);

\draw[bond,fill=bondfill,line width=0.7pt,rounded corners=0.08cm]
  (\xb-0.37,3.14) rectangle (\xb+0.37,3.48);
\fill[stamp] (\xb-0.13,3.14) rectangle (\xb+0.13,3.48);

\draw[bond,fill=bondfill,line width=0.7pt,rounded corners=0.08cm]
  (\xc-0.37,3.14) rectangle (\xc+0.37,3.48);

\draw[bond,fill=bondfill,line width=0.7pt,rounded corners=0.08cm]
  (\xd-0.37,3.14) rectangle (\xd+0.37,3.48);
\fill[stamp] (\xd-0.13,3.14) rectangle (\xd+0.13,3.48);

\draw[bond,fill=bondfill,line width=0.7pt,rounded corners=0.08cm]
  (\xe-0.37,3.14) rectangle (\xe+0.37,3.48);
\fill[stamp] (\xe-0.13,3.14) rectangle (\xe+0.13,3.48);

\draw[bond,line width=0.5pt,->] (\xc-0.42,3.31) -- (\xb+0.42,3.31);
\draw[bond,line width=0.5pt,->] (\xd-0.42,3.31) -- (\xc+0.42,3.31);
\node[text=bond] at (9.00,3.31) {{\bfseries\ldots}};
\draw[bond,line width=0.5pt,->] (\xe-0.42,3.31) -- (9.35,3.31);

\draw[stamp,line width=0.55pt,->] (\xd,3.49) -- (9.00,4.20);
\draw[stamp,line width=0.55pt,->] (\xe,3.49) -- (9.18,4.20);

\foreach \x/\y in {
  4.55/2.10, 4.55/1.55, 4.55/1.00,
  6.05/2.10, 6.05/1.55, 6.05/1.00,
  7.95/2.10, 7.95/1.55, 7.95/1.00,
  10.85/2.10,10.85/1.55,10.85/1.00
}{
  \begin{scope}[shift={(\x,\y)}]
    \draw[finality,line width=0.9pt] (-0.48,-0.19) rectangle (0.48,0.19);
    \draw[finality,line width=0.8pt] (-0.34,0.08) -- (-0.06,0.08);
    \draw[finality,line width=0.8pt] (-0.34,-0.02) -- (0.20,-0.02);
    \draw[finality,line width=0.8pt] (0.24,0.04) -- (0.42,0.04);
    \draw[finality,line width=0.8pt]
      (0.07,-0.09) .. controls (0.13,-0.03) and (0.19,-0.15) .. (0.25,-0.09)
                   .. controls (0.31,-0.03) and (0.35,-0.13) .. (0.41,-0.08);
  \end{scope}
}
\node[text=finality] at (9.45,1.55) {{\bfseries\ldots}};

\draw[stamp,line width=0.55pt] (3.90,2.47) -- (3.90,2.60) -- (5.20,2.60) -- (5.20,2.47);
\draw[stamp,line width=0.55pt,->] (4.55,2.60) -- (4.55,3.10);

\draw[stamp,line width=0.55pt] (7.30,2.47) -- (7.30,2.60) -- (8.60,2.60) -- (8.60,2.47);
\draw[stamp,line width=0.55pt,->] (7.95,2.60) -- (7.95,3.10);

\draw[stamp,line width=0.55pt] (10.20,2.47) -- (10.20,2.60) -- (11.50,2.60) -- (11.50,2.47);
\draw[stamp,line width=0.55pt,->] (10.85,2.60) -- (10.85,3.10);

\foreach \x in {4.55,6.05,7.95,10.85}{
  \draw[consumer,fill=consumerfill,line width=0.7pt,rounded corners=0.08cm]
    (\x-0.40,0.25) rectangle (\x+0.40,0.59);
}
\draw[consumer,line width=0.5pt,->] (5.57,0.42) -- (4.98,0.42);
\draw[consumer,line width=0.5pt,->] (7.45,0.42) -- (6.48,0.42);
\node[text=consumer] at (9.00,0.42) {{\bfseries\ldots}};
\draw[consumer,line width=0.5pt,->] (10.35,0.42) -- (9.55,0.42);

\end{tikzpicture}%
}
\caption{Bitcoin staking protocol. Validators lock their stake in a \emph{bond contract}. They then become eligible to run the consensus protocol of the PoS chain. During this time, they sign the PoS blocks confirmed by the underlying consensus protocol with EOTS as part of the \emph{finality gadget}.  Hashes of the PoS blocks are periodically timestamped on Bitcoin along with the EOTS signatures on them as part of the \emph{bi-directional timestamping protocol}. The stake is either slashed and sent to a burnt address if the EOTS finality gadget extracts a private key, or eventually unbonded.}
\label{fig:dumb-contracts}
\end{figure}

\vspace{0.3in}
\noindent
The full Bitcoin staking protocol is illustrated in Figure \ref{fig:dumb-contracts}. A bond contract on the Bitcoin chain, written in Bitcoin script,  controls the bonding, unbonding and slashing of a Bitcoin staker using information from the EOTS finality gadget and timestamping protocol. The bond contract can be instantiated with timelocks and \emph{covenants}, a new primitive that enables restricting the spending addresses of Bitcoin contracts, or in lieu of covenants, with a \emph{covenant committee} that emulates the functionality of covenants with an external committee of signers. The covenant functionality is needed to enforce that the slashed stake goes to an unspendable burn address.

An important property of the Bitcoin staking protocol is that it does not require changing the rules of the underlying consumer chain's consensus protocol (it is an \emph{add-on} rather than a change of the rules); it can be combined with any consumer chain, including existing ones already secured by native stake.

\subsection{Security Properties}
To capture slashing after a safety violation, we extend the notion of \emph{accountable safety}~\cite{casper,DBLP:conf/ccs/ShengWNKV21} to \emph{economic safety}, which includes both identification and slashing of adversarial validators. Our protocol guarantees $1/3$-economic safety: after any safety violation, the bitcoin stake of at least $f+1$ adversarial validators is slashed, where $n = 3f+1$ is the validator-set size. With homogeneous unit stake, this is at least $1/3$ of the total stake on the consumer chain.

\begin{theorem}[Security, Informal]
\label{thm:economic-safety-informal}
Suppose Bitcoin is secure.
Then, the Bitcoin staking protocol with a covenant committee satisfies $1/3$-economic safety, if one of the committee members is honest.
The protocol satisfies liveness, provided that the fraction of adversarial validators remains below $1/3$. 
The protocol also satisfies \emph{trustless staker safety}: an honest validator's bitcoin cannot be slashed under any circumstances.
\end{theorem}
To ensure the slashing of the adversary's stake, we require one of the covenant committee members to be honest (existential honesty).
If covenants are enabled on Bitcoin in the future, the protocol would not need existential honesty for slasing adversarial stake.

A remote staking protocol satisfying $1/3$-economic safety means that no matter how many adversarial validators there are, at least $1/3$ of them are guaranteed to be slashed after a safety violation.
On the other hand, no PoS blockchain secured only by its native stake can slash the validators if the fraction of adversarial validators exceeds $2/3$~\cite{roughgarden_talk}.
Indeed, by borrowing a sufficient amount of stake, the adversary can temporarily control over $2/3$ of the validator set, gaining complete power over the PoS chain for some time.
It can then cause a safety violation, and subsequently withdraw its stake to pay back its loan, thus violating safety without any financial cost.
This attack highlights the circularity in the security argument for native staking: the chain where the native stake is locked is the same as the chain secured by this stake.
The Bitcoin staking protocol overcomes the limitations of native staking by breaking the circularity above, \ie, by separating the consumer chain secured by the Bitcoin stake, and the Bitcoin chain maintaining this stake.

As a strengthening of accountable safety, $1/3$-economic safety implies $1/3$-accountable safety.
Moreover, no consensus protocol can simultaneously satisfy $1/3$-accountable safety and liveness with resilience greater than $1/3$~\cite[Theorem B.1]{DBLP:conf/ccs/ShengWNKV21}.
Therefore, Theorem~\ref{thm:economic-safety-informal} implies the optimality of the Bitcoin staking protocol in terms of economic safety and liveness resiliences.

In this paper, we focus on the case when the consumer chain is entirely secured by the remote Bitcoin assets.
Variation of the design incorporating both the remote and the native asset in securing the consumer chain (dual staking) is possible but is beyond the scope of this paper.

\subsection{Paper Outline}

\Cref{sec:related-work} discusses related work, and~\Cref{sec:model} provides the preliminaries.
\Cref{sec:daps-constr} describes our EOTS construction.
\Cref{sec:finality-gadget} explains the finality gadget, and \Cref{sec:bond-contract} focuses on the bond contract.
\Cref{sec:timestamping-protocol} describes the bi-directional timestamping protocol.
\Cref{sec:implementation} reports on our proof-of-concept implementation, and \Cref{sec:cost} analyzes the security cost of Bitcoin staking via a case study on the Babylon chain.

\section{Related Work}
\label{sec:related-work}

\paragraph{Accountability and Slashing.}
\emph{Accountable safety}, 
\ie, the ability to identify a fraction of the adversarial validators in the event of a safety violation, is central to the design of PoS protocols such as Ethereum~\cite{casper,gasper} and Cosmos zones based on Tendermint~\cite{buchman2016tendermint,tendermintacc}.
These protocols require the validators to stake the native tokens as collateral; so that these tokens can be \emph{slashed}, \ie, taken away, if the validator is found responsible for a safety violation.
However, identifying adversarial validators is not sufficient to ensure their slashing. 
For instance, adversarial validators can \emph{unbond} their stake and then build a conflicting chain as if they were part of the validator set (posterior corruption or long range attack~\cite{DPP19,BGK+18,DPS19}).
Although these validators can be identified as adversarial, they cannot be slashed.
Indeed, \cite{btc-pos} proved that without external trust assumptions, blockchains cannot guarantee slashing of the adversarial validators even when they can eventually be identified.

As a solution, \cite{btc-pos} also proposed using a separate chain as a secure timestamping server for checkpointing the confirmed PoS blocks.
Through these timestamps, the PoS chain obtains \emph{slashable safety}, \ie, the ability to identify the adversarial validators not only after a safety violation, but also \emph{before they unbond their stake}.
However, slashable safety does not imply the act of slashing either.
Indeed, when over $1/3$ of the validators are adversarial, they can censor the evidence of safety violation and their culpability from being included on the chain and enforcing their slashing. 
In such cases, a complex social consensus process has to happen off-chain so that the violators can be slashed and kicked out of the validator set.
In contrast, our Bitcoin staking protocol does not suffer from this issue as the remote stake resides on Bitcoin, not on the PoS consumer chain, and it is automatically slashed if the safety of the consumer chain breaks down.
It also builds on ideas from \cite{btc-pos} to mitigate long range attacks.

\paragraph{Finality Gadgets.}
The term ``finality gadget'' denotes a broad class of protocols deployed on top of existing consensus protocols to provide extra guarantees such as safety under network partitions and accountable safety; our finality gadget is an example.
An early example is Casper FFG used in Ethereum on top of a dynamically available consensus protocol (LMD GHOST) to checkpoint blocks.
These checkpoints then constitute an accountably-safe prefix of the Ethereum ledger~\cite{casper,gasper}.
Other examples include~\cite{SWK+21,NTT22}.
Our finality gadget also provides accountable safety to the underlying protocol; however unlike Casper FFG, it is instantiated on top of a partially-synchronous protocol with a fixed-sized validator set (without dynamic availability).
A similar gadget was used in~\cite{oflex} with the purpose of enabling clients to opt for higher safety resilience at the expense of reduced liveness resilience.

Stakechain~\cite{stakechain} explored remote staking for consensus protocols without using a finality gadget.
For accountable safety, it relies directly on a quorum intersection argument over the validators' signatures on the consumer blocks.
However, without a view change mechanism, the construction loses liveness when there is an adversarial block proposer.

\paragraph{DAPS and Accountable Assertions.}
DAPS~\cite{PS14} provide an efficient algorithm for extracting a signer's secret key if it signs two different messages using the same \emph{context}, a public attribute of the signatures.
DAPS were later generalized to lattice-based \emph{predicate authentication preventing signatures} that support general predicates for exposing the keys~\cite{BKN17}.
However, the signing keys in~\cite{PS14,BKN17} do not match those of ECDSA or Schnorr signatures used to authorize Bitcoin transactions, thus these schemes cannot be used in the Bitcoin staking protocol.

A related primitive to DAPS is accountable assertions~\cite{BKS15}, which allow extracting a user's Bitcoin secret key if it makes multiple assertions with the same context.
Unlike early works, accountable assertions enable using an ECDSA or Schnorr key pair to make assertions, thus enabling extraction of a user's Bitcoin secret key if it equivocates.
However, the assertion size is $4$ kB, which is much larger than the state-of-the-art signatures used in blockchains (\eg, $32$ Bytes for Schnorr). 

In~\Cref{sec:daps-constr}, we design EOTS, a simple variant of DAPS based on Schnorr signatures~\cite{DBLP:conf/eurocrypt/Seurin12}.
EOTS enables signing using the Bitcoin secret key and verifying signatures with the corresponding public key.
It also has the same size as Schnorr signatures.
In EOTS, the context consists of a private component required for signing, and a public component used in verification.
As such, the EOTS scheme can be viewed as a weakening of DAPS --- it is nevertheless sufficient for the purpose of our Bitcoin staking protocol.

\paragraph{Covenants.}
\label{sec:covenants}
Covenants are powerful primitives to express Bitcoin contracts. 
Bitcoin is based on the `UTXO model', which is inherently stateless: when a transaction is executed, its input coins are destroyed, and new output coins are created. 
In a regular transaction, the owner of the input coin chooses which output coins are created. 
Covenants limit this freedom and restrict a coin such that the owner can send it only to a certain recipient or contract. 
Covenants have been discussed in the Bitcoin community since at least 2013~\cite{gmax_covenants} and in academic literature since 2016~\cite{moser2016bitcoin}. 
There are Bitcoin improvement proposals~\cite{bip118,op_cat} currently in discussion that can be used to emulate covenants~\cite{cat_csfs}.

\section{Preliminaries}
\label{sec:model}

Let $\kappa \in \mathbb{N}$ denote the security parameter. An event has overwhelming probability (w.o.p.) if its probability is at least $1-\negl$.

\paragraph{State Machine Replication.}
In state machine replication, validators (also known as replicas) receive transactions and execute a protocol to impose a total order on these transactions.
Then, clients collect consensus messages (\eg, blocks, votes) from the validators, and 
invoke a \emph{confirmation rule} to output a sequence of \emph{confirmed} transactions called the \emph{ledger}.
The set of clients includes honest validators
and external observers (\eg, wallets) that might query for the ledger at arbitrary times.

The validator set of an SMR protocol can be \emph{static} or \emph{dynamic}.
In the \emph{static} case, we assume a public-key infrastructure (PKI) that assigns unique and publicly known identities to a fixed set of validators.
In the \emph{dynamic} case, we assume a proof-of-stake Sybil resistance mechanism, wherein a node becomes a validator
upon bonding some minimum amount of stake in the protocol.
A validator can also leave the validator set by unbonding its stake.
Although validators can bond different amounts in practice,
we consider validators with homogeneous unit-stake in this work, as an entity with a large stake can be represented as multiple unit-stake entities controlled by the same validator.

\paragraph{Blocks and Chains.}
Transactions are often batched into \emph{blocks} to ensure higher throughput, and the SMR protocol outputs a chain of blocks denoted by $\chain$.
There is a publicly-known \emph{genesis} block $B_0$.
Each block points to a \emph{parent} block via a collision-resistant hash function.
A block $B$ is an ancestor of $B'$, denoted by $B' \preceq B$, if $B' = B$, or $B$ can be reached from $B'$ via a path of parent pointers.
Thus, each block $B$ identifies a unique chain that starts at the genesis block and ends at $B$.
Two blocks $B$ and $B'$ (and their chains) are said to \emph{conflict} if neither $B \preceq B'$ nor $B' \preceq B$.

\paragraph{Adversary.}
The adversary $\Adv$ is an efficient algorithm that corrupts a subset of the validators, hereafter called \emph{adversarial}.
It gains access to the internal states of the corrupted validators and can cause them to deviate from the protocol in an arbitrary and coordinated fashion (Byzantine faults).
The remaining validators are called \emph{honest} and execute the prescribed protocol.
$f$ denotes the maximum number of adversarial validators.

\paragraph{Networking.}
Messages are sent over 
authenticated point-to-point channels~\cite{DBLP:journals/toplas/LamportSP82}.
The adversary controls the timing of message delivery and can inspect the messages before they are delivered.
Upon becoming online, clients receive all messages delivered to them while asleep.
We say that a validator \emph{broadcasts} a message if its intended recipients include all other validators and clients.

The network is \emph{partially synchronous}: 
the adversary has total control over message delays until an adversarially determined, finite global stabilization time (GST).
After GST, the adversary must deliver the messages sent by an honest validator within a known delay bound $\Delta$.

\paragraph{Security.}
Let $\chain^{\client}_t$ denote the confirmed chain output by a client $\client$ at time $t$.
\begin{definition}
\label{def:security}
We say that an SMR protocol is secure with latency $\Tconfirm = \poly(\kappa)$ if:

\noindent    
\textbf{Safety:} For any time slots $t,t'$ and clients $\client,\client'$, either $\chain_{t}^{\client} \preceq \chain_{t'}^{\client'}$ or vice versa. 
For any client $\client$, $\chain^{\client}_{t} \preceq \chain^{\client}_{t'}$ for all times $t$ and $t' \geq t$.

\noindent
\textbf{Liveness:} If a transaction $\tx$ is input to an honest validator\footnote{In the case of dynamic validator set, the transaction is input to an honest validator eligible to participate in the protocol in its local view.} at some slot $t$, then $\tx \in \chain_{t'}^{\client}$ for all $t' \geq \max(t,\mathsf{GST})+\Tconfirm$ and all $\client$.
\end{definition}
\begin{definition}[\cite{casper,snapchat,DBLP:conf/ccs/ShengWNKV21}]
\label{def:accountable-safety}
A protocol provides accountable safety with resilience $\fA$, if when there is a safety violation, clients can generate a proof that identifies (i) at least $\fA$ adversarial validators as protocol violators, and (ii) no honest validator.
Such a protocol provides \emph{$\fA$-accountable-safety}.
\end{definition}
A protocol provides $\fS$-safety if it satisfies safety w.o.p. for all efficient $\Adv$ with $f \leq \fS$.
When the safety of a protocol with $\fA$-accountable-safety is violated, at least $\fA$ adversarial validators must be identified, which cannot happen if fewer than $\fA$ validators are adversarial.
Thus, $\fA$-accountable safety implies $(\fA-1)$-safety.

\paragraph{Bitcoin.}
\label{sec:btc}

Let $\powchain^\client_t$ denote the confirmed Bitcoin chain in the view of a client $\client$ at time $t$, \ie, the $k$-deep block and its prefix within the longest Bitcoin chain held by $\client$ at time $t$, where $k$ is the Bitcoin confirmation-depth parameter.
We denote the confirmed consumer chain by $\chain^\client_t$ and use capital $B$ and small $b$ for the consumer and Bitcoin blocks respectively.
The following proposition is used in the description and analysis of the timestamping protocol (Section~\ref{sec:timestamping-protocol}).
\begin{proposition}
\label{prop:bitcoin-secure}
Suppose Bitcoin is safe and live with latency $\Tconfirmbtc$.
Then, there exists a parameter $k_f$ such that if a transaction is broadcast to the Bitcoin network at the time some honest client's confirmed Bitcoin chain has height $h$, then for any honest client $\client_2$, the transaction appears in $\powchain^{\client_2}_t$ by the first time $t$ at which $|\powchain^{\client_2}_t| \geq h+k_f$.
Moreover, there exists a function $k_c(\Delta t)$, linear in $\Delta t$, such that for any $\Delta t > 0$, any honest client $\client$, and any time $t$, $\powchain^\client$ grows by at most $k_c(\Delta t)$ blocks during the interval $[t,\, t+\Delta t]$ in $\client$'s view.
\end{proposition}
Proposition~\ref{prop:bitcoin-secure} follows from the security analysis of Bitcoin provided by~\cite{backbone}.
Going forward, we assume that Bitcoin is secure and \Cref{prop:bitcoin-secure} holds.

Note that we assume a synchronous network for the communication among Bitcoin miners and clients (these clients include consumer chain validators), while assuming a partially-synchronous network for the communication among consumer validators and clients.
Although the consumer validators and clients could have uses Bitcoin as a synchronous bullet-in board for every block (which essentially implies consensus), this is infeasible in practice due to Bitcoin's limited throughput and long latency.

\section{EOTS Construction}
\label{sec:daps-constr}

\begin{definition}[Extractable One-Time Signatures]%
\label{EOTS:syntax}
EOTS consist of six algorithms:

\smallskip
\noindent
$\sk \overset{\$}{\gets} \mathsf{EOTS\textnormal{-}KeyGen}(1^{\kappa}) \colon$ The key generation algorithm 
outputs a secret signing key.

\smallskip
\noindent
$\pk \gets \mathsf{EOTS\textnormal{-}PK}(\sk) \colon$ The public key derivation algorithm takes $\sk$ and outputs a public verification key.

\smallskip
\noindent
$(\sct, \pct) \overset{\$}{\gets} \mathsf{EOTS\textnormal{-}ContextGen}(1^{\kappa}) \colon$ The context generation algorithm outputs the secret and public context components.

\smallskip
\noindent
$\sigma \overset{\$}{\gets} \mathsf{EOTS\textnormal{-}Sign}(\sk, \sct, m) \colon$ The signing algorithm is a probabilistic algorithm that outputs a signature $\sigma \in \Sigma$ given a secret signing key $\sk$, a message $m \in \mathcal{M}$ and a secret context $\sct$.

\smallskip
\noindent
$\{0, 1\} \gets \mathsf{EOTS\textnormal{-}Ver}(\pk, m, \pct, \sigma) \colon$ The verification algorithm is a deterministic algorithm that outputs $1$ if a given signature $\sigma$ is verified against a public verification key $\pk$, a message $m$ and a public context $\pct$ ($0$ otherwise).

\smallskip
\noindent
$\sk \gets \mathsf{EOTS\textnormal{-}Ext}(\pk, m_1, \sigma_1, m_2, \sigma_2, \pct) \colon$ The extraction algorithm is a probabilistic algorithm that outputs the secret signing key $\sk$ of a validator given two distinct message-signature pairs $(m_1,\sigma_1)$ and $(m_2,\sigma_2)$, where the signatures are valid under the same public context $\pct$.
\end{definition}

We separate the secret key generation $\mathsf{EOTS\textnormal{-}KeyGen}()$ from public key derivation $\mathsf{EOTS\textnormal{-}PK}(.)$ to facilitate the $\mathsf{EXT\textnormal{-}SCMA}$ security property that is analogous to extractability~\cite{PS14,BKS15}.
This is necessary for an EOTS scheme without a trusted setup, when there may be many different (hard to find) secret signing keys corresponding to a given public verification key.
For the same reason, \cite{PS14} assumes the existence of an efficient algorithm akin to our $\mathsf{EOTS\textnormal{-}PK}(.)$ (without explicitly defining the algorithm), which verifies that a given secret key $\sk$ is the key corresponding to a public key $\pk$.
In turn, \cite{BKS15} assumes that for each $\pk$, there is a unique $\sk$.

\begin{algorithm}
\caption{(Informal) Game for Strong Existential Unforgeability under 
Chosen Message Attacks with Distinct Contexts (\textsf{sEUF-CMADC}). The full game is presented in~\Cref{daps:game:seuf-cmadr} with the difference that the adversary is also allowed to query an $\OEOTS$ oracle that samples a new context tuple and returns the public component along with the signature.}
\label{daps:game:seuf-cmadr-main-body}
\begin{algorithmic}[1]\small
\State $\sk \overset{\$}{\gets} \mathsf{EOTS\textnormal{-}KeyGen}(1^{\kappa})$; $\pk
\gets \mathsf{EOTS\textnormal{-}PK}(\sk)$
\For{$i = 1$ to $N$} \Comment{Here, $N$ is $\poly(\kappa)$.}
    \State $(\sct_i, \pct_i) \overset{\$}{\gets} \mathsf{EOTS\textnormal{-}CtxGen}(1^{\kappa})$
\EndFor
\State $Q \gets \emptyset$
\State $\Qctx \gets \emptyset$
\State $(m^*, \pct^*, \sigma^*) \gets \mathcal{A}^{\OEOTSctx(\cdot)}(\pk, \{\pct_i\}_{i=1,\ldots,N})$ \Comment{$\mathcal{A}$ can call the oracle multiple times}
\State \Return $(m^*, \pct^*, \sigma^*) \notin Q \wedge \mathsf{EOTS\textnormal{-}Verify}(\pk, \pct^*, m^*,
\sigma^*)$
\Statex

\Function{$\OEOTSctx$}{$m, \pct$} \Comment{Signing oracle with context}
  \If{$\pct \in \{\pct_i\}_{i = 1, \ldots, N}$ and $\pct \notin \Qctx$}
    \State $\sigma \overset{\$}{\gets} \mathsf{EOTS\textnormal{-}Sign}(\sk, \sct_{i'}, m)$ where $\pct = \pct_{i'}$ \Comment{$\sct_{i'}$ is the secret context component corresponding to $\pct_{i'}$.}
    \State $Q \gets Q \cup \{(m, \pct, \sigma)\}$
    \State $\Qctx \gets \Qctx \cup \{\pct\}$
    \State \Return $(\pct, \sigma)$
  \Else
    \State \Return $\bot$
  \EndIf
\EndFunction
\end{algorithmic}
\end{algorithm}

\paragraph{Security Definitions.}
Security of our EOTS scheme is characterized by three properties: correctness, $\mathsf{sEUF\textnormal{-}CMADC}$ security (strong existential unforgeability under adaptive chosen message attacks), and $\mathsf{EXT\textnormal{-}SCMA}$ security (extractability under single chosen message attacks).
Intuitively, correctness guarantees that a correctly generated signature always passes verification.
$\mathsf{EXT\textnormal{-}SCMA}$ security guarantees that two valid signatures on distinct messages under the same key and \emph{same context} can be used to extract the secret key.
This is a crucial requirement to slash the adversarial parties' stake which requires obtaining their secret keys.
Formal definitions are stated in~\Cref{sec:eots-proof}.

$\mathsf{sEUF\textnormal{-}CMADC}$ security, captured by the game in~\Cref{daps:game:seuf-cmadr-main-body}, is analogous to existential unforgeability and ensures that signatures are unforgeable when the secret key is unknown, even after querying for multiple signatures in different contexts.
The main difference is that the adversary receives a set of public context components at the setup of the game and is allowed to query for the signatures tied to these components.
This mimics how EOTS are used within the Bitcoin staking protocol, where each validator creates a sequence of context tuples for different heights and publishes the public context components on Bitcoin for consistency.

\paragraph{Construction.} We next provide an EOTS construction based on Schnorr signatures~\cite{DBLP:conf/eurocrypt/Seurin12}, specifically the flavor used by Bitcoin\footnote{\url{https://github.com/bitcoin/bips/blob/master/bip-0340.mediawiki}}, which is assumed to be \textsf{sEUF-CMADC} secure.
The system parameters are a group $G$ of prime order $q$ (the size of which is
determined by the security parameter) with a generator $g$, as well as two hash functions $\hctx \colon G \mapsto \{0, 1\}^\kappa$ and $\hsig \colon \{0, 1\}^* \mapsto \mathbb{Z}_q$, where $\kappa$ is a security parameter.
In practice, $G$ and $\hctx$ must match the group and the hash function used by Bitcoin's Schnorr implementation.

Intutively, each Schnorr signature consists of two components: a public random nonce $r=kG$ and $s=k+sk\cdot e$, with $e$ depending on the signed message. Our key observation is that Schnorr signatures require a fresh $r$ to sign each message, and reusing randomness across signatures leaks the secret key. To turn Schnorr into EOTS, we use this hazard as a feature: for each consumer chain block height $H$, each validator fixes a unique $k$ and publishes the \emph{hash of its public nonce} $r \deq g^{k}$ as the public context component for height $H$: $\pct_H \deq \hctx(r)$.
Consumer chain validators and clients require the signature $\sigma \deq (r, s)$ on a block at height $H$ to also match $\pct_H$: $\pct_H =_? \hctx(r)$.
When $\hctx$ is collision-resistant, equivocation implies two signatures with the same value $r$, which allows extraction of $\sk$, a standard Schnorr private key. This can in turn be used to slash the stake on Bitcoin.
Our scheme is formally presented by~\Cref{daps:keygen,daps:pk-der,daps:randgen,daps:sign,daps:alg:verify,daps:ext}.

\begin{algorithm}[t]
\caption{$\mathsf{EOTS\textnormal{-}KeyGen}(1^{\kappa})$:}
\label{daps:keygen}
\begin{algorithmic}[1]\small
    \State $x \overset{\$}{\gets} U(\mathbb{Z}^{\times}_q)$ \Comment{$\mathbb{Z}^{\times}_q$: Multiplicative group of size $q$}
    \State \Return $x$
\end{algorithmic}
\end{algorithm}

\begin{algorithm}[t]
\caption{$\mathsf{EOTS\textnormal{-}PK}(x)$:}
\label{daps:pk-der}
\begin{algorithmic}[1]\small
    \State \Return $g^x$ \Comment{$g$ is a generator of group $G$ of prime order $q$.}
\end{algorithmic}
\end{algorithm}

\begin{algorithm}[t]
\caption{$\mathsf{EOTS\textnormal{-}ContextGen}(1^{\kappa})$:}
\label{daps:randgen}
\begin{algorithmic}[1]\small
    \State $k \overset{\$}{\gets} U(\mathbb{Z}^{\times}_q)$
    \State $r \gets g^k$
    \State $\hashctx_r \gets \hctx(r)$ \Comment{$\hctx$ is a hash function.}
    \State \Return $(\sct \deq k,\ \pct \deq \hashctx_r)$
\end{algorithmic}
\end{algorithm}

\begin{algorithm}[t]
\caption{$\mathsf{EOTS\textnormal{-}Sign}(x, k, m), m \in \{0, 1\}^*$:}
\label{daps:sign}
\begin{algorithmic}[1]\small
    \State $r \gets g^k$ 
    \State $e \gets \hsig(r || m)$ \Comment{$\hsig$ is a hash function.}
    \State $s \gets k + xe$ 
    \State \Return $\sigma \deq (r, s)$ 
\end{algorithmic}
\end{algorithm}

\begin{algorithm}[t]
\caption{$\mathsf{EOTS\textnormal{-}Verify}(y, m, \hashctx_r, \sigma)$:}
\label{daps:alg:verify}
\begin{algorithmic}[1]\small
    \State $(r, s) \gets \sigma$
    \State $r_v \gets g^s (y^{\hsig(r || m)})^{-1}$ 
    \State \Return $\hctx(r) \overset{?}{=} \hashctx_r$ and $r_v \overset{?}{=} r$
    \Comment{The first check is unique to EOTS.}
\end{algorithmic}
\end{algorithm}

\begin{algorithm}[t]
\caption{$\mathsf{EOTS\textnormal{-}Extract}(y, \hashctx_r, m_1, \sigma_1, m_2, \sigma_2)$:}
\label{daps:ext}
\begin{algorithmic}[1]\small
    \State $(r_1, s_1) \gets \sigma_1$; $(r_2, s_2) \gets \sigma_2$
    \State Abort if $r_1 \neq r_2$; else let $r \deq r_1 = r_2$
    \State $e_1 \gets \hsig(r || m_1)$; $e_2 \gets \hsig(r || m_2)$ \Comment{When $s_1, s_2$ made with} 
    \State \Return $\frac{s_1 - s_2}{e_1 - e_2}$\Comment{same nonce, output is equal to secret key $x$}
\end{algorithmic}
\end{algorithm}

\paragraph{Security.}
We prove the security of our scheme in~\Cref{sec:eots-proof}.
We show that it satisfies correctness, and when $\hctx$ and $\hsig$ are collision-resistant, $\mathsf{EXT\textnormal{-}SCMA}$ security.
We also prove that it is $\mathsf{sEUF\textnormal{-}CMADC}$ secure in the random oracle model via a reduction to the security of Schnorr signatures.
In the proof, $\hctx$ is modeled as a random oracle (so is $\hsig$ for the original security proof of Schnorr signatures).
This enables the adversary trying to forge a Schnorr signature to simulate the $\OEOTSctx$ oracle in~\Cref{daps:game:seuf-cmadr-main-body}: upon receiving an input $(\pct, m)$, the adversary queries the Schnorr signing oracle, obtains a new signature $\sigma = (r, s)$, programs $\hctx(r) \deq \pct$, and returns $(\pct, \sigma)$.

\section{Finality Gadget}
\label{sec:finality-gadget}

We next describe how to use the EOTS in~\Cref{sec:daps-constr} to expose the secret signing keys of the adversarial validators after a safety violation.
We first consider a static validator set 
and extend our construction to a dynamic set in~\Cref{sec:timestamping-protocol}.

\paragraph{Tendermint.} Tendermint is a PBFT-style~\cite{DBLP:conf/osdi/CastroL99} partially-synchronous SMR protocol.
It proceeds in \emph{rounds}, each with a known leader and $n = 3f+1$ validators.
Each round in turn consists of three steps: $\mathsf{Proposal}$, $\mathsf{Prevote}$ and $\mathsf{Precommit}$.
At the beginning of the $\mathsf{Proposal}$ step, the leader sends a signed $\mathsf{Proposal}$ message, $\langle \mathsf{Proposal}, H, r, B, vr \rangle$ for some block $B$.
Here, $H$ and $r$ denote the leader's current height and round number respectively.
Upon observing a proposal, each validator enters the $\mathsf{Prevote}$ step and sends a $\mathsf{Prevote}$ message $\langle \mathsf{Prevote}, H, r, s \rangle$ for either the proposed block ($s = \mathsf{hash}(B)$), or a special \emph{nil} value ($s = \bot$), depending on the proposal and its internal state.
If the validator observes $2f+1$ prevotes for a block $B$ (or the \emph{nil} value), it subsequently enters the $\mathsf{Precommit}$ step and sends a $\mathsf{Precommit}$ message $\langle \mathsf{Precommit}, H, r, \mathsf{hash}(B) \rangle$ for $B$ or the \emph{nil} value.
Finally, a client \emph{confirms} $B$ for height $H$ upon observing $2f+1$ precommits with $H$ and $B$.

A validator locks on a block $B \neq \bot$ at some round $r$ upon sending a round $r$ precommit for $B$.
In future rounds of the same height, the validator does not send prevotes for other blocks $B' \neq B$, unless it observes $2f+1$ or more prevotes for $B'$ from a round $r' > r$.

We assume that the length of a Tendermint round is lower bounded by some $\epsilon > 0$, matching practical implementations.

\paragraph{Challenges with Exposing the Adversarial Keys.}
We next describe a strawman scheme to extract the adversarial keys in the event of a safety violation.
Suppose the Tendermint validators use EOTS, with context equal to the height-round tuple $(H,r)$, to sign prevotes and precommits.
Consider an adversarial validator that sends two prevotes or precommits for different blocks $B$ and $B' \neq B$ with the same context $(H, r)$ to cause a safety violation.
Then, by the extractability property of EOTS (Def.~\ref{EOTS:sec:ext-scma}), its secret key can be extracted (w.o.p.).
However, 
in the case of Tendermint and other PBFT-style protocols (\eg, HotStuff~\cite{hotstuff}), there can be safety violations that are not due to the adversary double-signing messages with the same context.
For instance, in Tendermint, $f+1$ adversarial validators can first send precommits $\langle \mathsf{Precommit}, H, r, \mathsf{hash}(B) \rangle$ for a block $B$ at round $r$, and then prevotes $\langle \mathsf{Prevote}, H, r+1, \mathsf{hash}(B') \rangle$ for a conflicting block $B' \neq B$ at the next round $r+1$ to release their locks on $B$.
They do this despite not having observed $2f+1$ prevotes for $B'$ from rounds $>r$.
Similarly, in HotStuff, 
$f+1$ adversarial validators can send commit messages in a view $v$ for a block $B$ and then prepare messages in view $v+1$ for a conflicting block $B' \neq B$ with a $\mathsf{highQC}$ (quorum certificate of $2f+1$ prepare messages) from some view $v''<v$, again releasing their lock on $B$ without justification.
In both cases, both blocks $B$ and $B'$ eventually become confirmed (for height $H$), and these $f+1$ validators can be provably identified as protocol violators.
However, as the messages for the conflicting blocks were signed with \emph{different} contexts (view/height and round numbers), their secret signing keys cannot be extracted.

\begin{algorithm}
\caption{A validator $\validator$'s execution of the finality gadget. The function $\textsc{Broadcast}$ broadcasts the provided messages. A message $m$ and a signature on it by the validator is denoted by $\langle m \rangle_\validator$. 
Each validator keeps track of the latest $\mathsf{height}$ for which a \EOTS signature was broadcast.}
\label{alg:validator-finality}
\begin{algorithmic}[1]\small
    \Let{\mathsf{height}}{0}
    \Upon{$\text{decision}[H]$}~\Comment{A block is confirmed at height $H$.}
        \If{$H = \mathsf{height} + 1$}
            \State $\textsc{Broadcast} \ \langle \mathsf{Final}, H, \mathsf{hash}(\text{decision}[H]) \rangle_\validator$
            \Let{\mathsf{height}}{\mathsf{height}+1}
        \EndIf
    \EndUpon
\end{algorithmic}
\end{algorithm}

\begin{algorithm}
\caption{The finalization algorithm run by a client $\client$ on Tendermint with the finality gadget. The inputs $\T$ and $\mathsf{sigs}$ denote the blocks and \EOTS signatures received by $\client$. The input $\chain$ denotes the chain of blocks previously finalized by $\client$ ($\chain = B_0$ if no block has been finalized yet). The function $\textsc{GetBlocks}(\T)$ returns the sequence of blocks within $\T$ in increasing order of heights, ties broken arbitrarily.
Height of a block $B$ is denoted by $|B|$. The algorithm returns a chain of finalized (valid) blocks. 
}
\label{alg:client-finality}
\begin{algorithmic}[1]\small
\Function{\sc OutputChain}{$\T, \mathsf{sigs}, \chain$}
    \For{$B = B_1, \ldots, B_H \xleftarrow{} \textsc{GetBlocks}(\T)$}
        \If{$|B| = |\chain|+1\ \land\ \chain \preceq B\ \land\ \exists (2f+1)\ \langle \mathsf{Final}, |B|, \mathsf{hash}(B) \rangle \in \mathsf{sigs}$}
            \Let{\chain}{\chain \concat B}
        \EndIf
    \EndFor
\State \Return $\chain$
\EndFunction
\end{algorithmic}
\end{algorithm}

\subsection{Tendermint with the Finality Gadget}
\label{sec:tendermint-finality-gadget}

The attacks above imply that extracting the adversary's secret keys must somehow involve double-signing of messages with the same context in any attack scenario.
To achieve this functionality, we add a layer of \emph{\EOTS signatures} for the confirmed blocks (\Cref{alg:validator-finality}).
More specifically, the finality gadget replaces the original confirmation rule of Tendermint with a \emph{finalization} rule based on \EOTS signatures.

\paragraph{Generating the Context Tuples.}
Each validator $\validator$ periodically generates context tuples $(\sct_{H,\validator}, \pct_{H,\validator}) \gets \mathsf{EOTS\textnormal{-}ContextGen}(1^{\kappa})$ for a window of $w_0$ Tendermint chain heights $H \in \{w+1, \ldots, w+w_0\}$ and publishes the public components $\pct_{H,\validator}$ on Bitcoin.
The initial set of components for heights $\{1, \ldots, w_0\}$ is published before the protocol starts.
Here, Bitcoin serves as a coordination layer which ensures that all validators and clients agree on the public context components for each height and validator.
The number $w_0$ is determined so that there is sufficient time for $\pct_{H,\validator}$, $H \in \{w+1, \ldots, w+w_0\}$ to appear on Bitcoin, if they are sent when the Tendermint chain is at height $w-w_0+1$.

Given $n$ validators, up to $n w_0$ public context components might be posted to Bitcoin at each iteration.
To reduce this load, in practice, validators can commit to the $w_0$ public context components via a Merkle tree and periodically post a Merkle root covering the Tendermint heights $\{w+1, \ldots, w+w_0\}$.
In this case, each EOTS signature must come with both the component $\pct_{H,\validator}$ and a Merkle proof with respect to the root covering height $H$.
For simplicity, we describe a version of the protocol, where the values $\pct_{H,\validator}$ are published on Bitcoin in the clear.

\paragraph{Generating the EOTS Signatures.}
Upon confirming a block $B$ for height $H$ within the Tendermint protocol, \ie, given $\text{decision}[H] = B$~\cite[Algorithm 1, line 49]{2018tendermint}, each validator $\validator$ sends a height $H$ \EOTS signature $\sigma_{H,B,\validator}$ for block $B$, if it had not already sent a height $H$ \EOTS signature.
Each \EOTS signature $\sigma_{H,B,\validator}$ by $\validator$ is an EOTS created with the secret signing key $\sk_\validator$ of the validator on the message $\mathsf{hash}(B)$ with private context $\sct_{H,\validator}$:
$$\sigma_{H,B,\validator} = \mathsf{EOTS\textnormal{-}Sign}(\sk_\validator, \sct_{H,\validator}, \mathsf{hash}(B))$$
(it can be verified with the corresponding public verification key $\pk_\validator = \mathsf{EOTS\textnormal{-}PK}(\sk_\validator)$ and public context $\pct_{H,\validator}$).
In other words, \EOTS signatures are EOTS with a message space of block hash values and a context space of heights $H \in \{0,1,\ldots\}$.
Other signatures used by Tendermint need not be EOTS and can be of any type.
For consistency with the Tendermint notation, we denote a height $H$ \EOTS signature for a block $B$ by $\langle \mathsf{Final}, H, \mathsf{hash}(B) \rangle$.
An honest validator sends a height $H$ \EOTS signature only after sending \EOTS signatures for the previous heights $1, \ldots, H-1$.
A client finalizes a block $B$ at height $H$ upon observing a quorum of $2f+1$ \emph{unique} height $H$ \EOTS signatures for block $B$, and after it has finalized blocks for all previous heights
(\Cref{alg:client-finality}).

\begin{algorithm}[t]
\caption{Key extraction by the forensic protocol.}
\label{alg:slashing-condition}
\begin{algorithmic}[1]\small
\Function{$\textsc{Key\textnormal{-}Extract}$}{$\mathrm{signatures}$}
    \Let{\mathsf{height}}{0}
    \Upon{$\langle \mathsf{Final}, H, \mathsf{hash}(B') \rangle_\validator\ \land\ \langle \mathsf{Final}, H, \mathsf{hash}(B) \rangle_\validator\ \land\ B \neq B'$}
        \State $\text{Identify } \validator \text{ as a protocol violator}$
        \Let{\sigma, \sigma'}{\langle \mathsf{Final}, H, \mathsf{hash}(B) \rangle, \langle \mathsf{Final}, H, \mathsf{hash}(B') \rangle}
        \Let{\sk_\validator}{\mathsf{EOTS\textnormal{-}Ext}(\pk_\validator, \mathsf{hash}(B), \sigma, \mathsf{hash}(B'), \sigma', H)}
    \EndUpon
    \State \Return $\sk_{\validator}$
\EndFunction
\end{algorithmic}
\end{algorithm}

\paragraph{Exposing the Adversarial Keys.}
The protocol identifies a validator as a protocol violator and returns its secret signing key upon receiving two \EOTS signatures created by the validator for the same height, but different blocks (\Cref{alg:slashing-condition}).

\paragraph{Security.}
We require Tendermint with the finality gadget to satisfy \emph{EOTS safety}, which captures the ability of the protocol to expose the secret keys of the adversarial validators after a safety violation.

\begin{definition}[EOTS safety]
\label{def:EOTS-safety}
A protocol provides EOTS safety with resilience $\fA$, if (i) when there is a safety violation, the secret signing keys of at least $\fA$ adversarial validators can be extracted by an efficient algorithm, and (ii) no efficient adversary can forge signatures on behalf of an honest validator. 
Such a protocol is said to provide \emph{$\fA$-EOTS safety}.
\end{definition}

Note that $(f+1)$-EOTS safety implies $(f+1)$-accountable safety (thus $f$-safety).

\begin{theorem}[EOTS Safety]
\label{thm:tendermint-accountability-extraction}
Tendermint with the finality gadget satisfies $(f+1)$-EOTS safety.
\end{theorem}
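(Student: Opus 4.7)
The plan is to verify the two requirements of Definition~\ref{def:DAPS-safety} separately for Tendermint augmented with the finality gadget. The no-equivocation-by-honest-validators half (condition (ii)) should be the easy part: by direct inspection of Algorithm~\ref{alg:validator-finality}, an honest validator tracks a local variable $\mathsf{height}$ that is only incremented monotonically, and it broadcasts a finality signature with context $h$ exclusively in the branch guarded by $h = \mathsf{height} + 1$, updating $\mathsf{height}$ immediately afterwards. Since the finality signatures are the only DAPS the augmented protocol ever produces, at most one finality signature per height/context is emitted by any honest validator, so the set $Q$ in Def.~\ref{def:DAPS-safety}-(ii) cannot contain two tuples with the same context but distinct messages.

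For the extraction half (condition (i)), I would combine a standard quorum intersection argument with the extractability of DAPS. A safety violation means two clients output finalized chains that are not prefixes of one another. Because Algorithm~\ref{alg:client-finality} only appends a block $B$ when $|B| = |\chain|+1$ and $\chain \preceq B$, finalized chains grow one block at a time from the common genesis. Hence, the first height of disagreement must carry two distinct finalized blocks $B \neq B'$ at some common height $h$, each witnessed by $2f+1$ unique finality signatures at context $h$. Using Tendermint's convention $n = 3f+1$ (Section~\ref{sec:tendermint-summary}), the intersection of the two signer sets has size at least $2(2f+1) - (3f+1) = f+1$. By the condition (ii) just established, none of these $f+1$ equivocators can be honest, so all are adversarial. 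Feeding each equivocator's pair of finality signatures on $(id(B),h)$ and $(id(B'),h)$ into Algorithm~\ref{alg:slashing-condition} then invokes $\mathsf{DAPS\textnormal{-}Ext}$, which by the $\mathsf{EXT\textnormal{-}SCMA}$ property of DAPS outputs the validator's secret signing key with overwhelming probability.

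The main obstacle is cleanly reducing the abstract notion of ``safety violation'' (non-prefix confirmed chains in two clients' views) to the concrete combinatorial object needed for DAPS extraction (two finalized blocks sharing a height with overlapping quorums). This requires a short induction on chain length showing that the prefix-preserving update rule in Algorithm~\ref{alg:client-finality} forces the first point of divergence to be a single height at which both clients have finalized a block, so that the quorum intersection argument applies at a \emph{shared} context $h$ rather than at two unrelated contexts. Once this reduction is in place, the DAPS extractability step and the upgrade to condition (i) are routine applications of the primitives from Section~\ref{subsec:DAPS:primitive}.
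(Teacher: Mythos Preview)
Your proposal is correct and follows essentially the same approach as the paper: reduce a safety violation to a first height $h$ with two distinct finalized blocks, apply quorum intersection to get $f+1$ equivocators, invoke $\mathsf{EXT\textnormal{-}SCMA}$ to extract their keys, and verify condition~(ii) by inspecting Algorithm~\ref{alg:validator-finality}. Your write-up is in fact slightly more careful than the paper's in two places: you explicitly justify why the first point of divergence is a shared height (via the one-block-at-a-time, prefix-preserving update in Algorithm~\ref{alg:client-finality}), and you explicitly use condition~(ii) to conclude the $f+1$ equivocators are adversarial, whereas the paper leaves both points implicit.
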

\Cref{thm:tendermint-accountability-extraction} follows from a quorum intersection argument and the $\mathsf{EXT\textnormal{-}SCMA}$ security of the EOTS.
Its proof is given in~\Cref{sec:appendix-proof-thm-daps-safety}.

Looking ahead, this theorem holds not only for protocols with a static validator set, but also dynamic ones, as long as the clients agree on the validator set for each height, which is ensured by the protocol in Section~\ref{sec:timestamping-protocol}.

Although the finality gadget ensures EOTS safety,
it does so by imposing a stronger \emph{finality condition}, \ie, the existence of $2f+1$ \EOTS signatures, as opposed to the original confirmation (decision) rule of Tendermint. 
We must thus ensure that the finality gadget retains the liveness of the Tendermint protocol under honest supermajority.
\begin{theorem}[Liveness]
\label{thm:tendermint-liveness}
If the number of adversarial validators is less than or equal to $f$, Tendermint with the finality gadget satisfies liveness with finite latency after GST.
\end{theorem}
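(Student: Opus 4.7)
The plan is to prove the two claims separately, in each case reducing to the properties of the underlying Tendermint protocol together with basic combinatorial facts about the finality signatures produced by Alg.~\ref{alg:validator-finality}.

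For the safety part, I would argue by contradiction. Suppose two clients $\client_1$ and $\client_2$ finalize conflicting blocks $B$ and $B'$ at some earliest height $h$. By Alg.~\ref{alg:client-finality}, each client must have observed $2f+1$ unique height-$h$ finality signatures for its respective block. Since $n \leq 3f+3$, any two quorums of size $2f+1$ intersect in at least $f+1$ validators, hence in at least one honest validator $\validator$. But Alg.~\ref{alg:validator-finality} tracks a $\mathsf{height}$ counter and broadcasts at most one finality signature per height, so $\validator$ could not have issued both $\langle \mathsf{Final}, h, id(B) \rangle_\validator$ and $\langle \mathsf{Final}, h, id(B') \rangle_\validator$, a contradiction. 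A secondary subtlety is that an honest validator only signs a block that Tendermint's decision rule has returned for height $h$; here I would invoke the standard safety of Tendermint under an $f$-bounded adversary to argue that no two honest validators ever confirm conflicting blocks at the same height, ruling out the alternative scenario where the two honest signers signed genuinely different decisions.

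For the liveness part, I would chain together the post-GST liveness of Tendermint with the behavior of the gadget. The underlying Tendermint protocol is live with some bounded latency $\Tconfirm^{\mathrm{TM}}$ after GST, so any transaction $\tx$ input to an honest validator at time $t$ appears in a block $B_h$ confirmed at some height $h$ by every honest validator by time $\max(\mathrm{GST}, t) + \Tconfirm^{\mathrm{TM}}$. Each honest validator then broadcasts a height-$h$ finality signature as soon as it has broadcast signatures for all previous heights, which a straightforward induction on heights (using post-GST $\Delta$-delivery) shows happens within a finite additional delay. Since at least $2f+1$ validators are honest, every client eventually receives $2f+1$ height-$h$ finality signatures for $B_h$ (and, inductively, the signatures needed to finalize heights $1,\ldots,h-1$), and by Alg.~\ref{alg:client-finality} appends $B_h$ to its finalized chain within a bounded additional delay $T_{\mathrm{fin}}$.

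The main obstacle I expect is managing the strict sequential ordering imposed by both Alg.~\ref{alg:validator-finality} (broadcast only when $h = \mathsf{height}+1$) and Alg.~\ref{alg:client-finality} (append only when $|B| = |\chain|+1$). A single stalled height would permanently block progress, so the core technical step is an induction showing that, after GST, every height is eventually confirmed by Tendermint, signed in order by all honest validators, and delivered to every client, with the induction hypothesis quantifying the cumulative latency. The w.o.p. qualification then comes from the $\mathsf{sEUF\textnormal{-}CMA}$ security of DAPS (Section~\ref{subsec:DAPS:primitive}), which is needed to exclude the negligible-probability event that the adversary forges a finality signature on behalf of an honest validator and thereby causes a client to finalize a block not confirmed by $2f+1$ genuine signers.
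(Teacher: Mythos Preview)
Your proposal is correct and close to the paper's argument, with one organizational difference worth noting. For safety, you argue by quorum intersection on the two sets of $2f+1$ finality signatures, obtaining an honest validator in the overlap who cannot have signed twice; this is exactly the argument the paper uses for Theorem~\ref{thm:tendermint-accountability-extraction} (DAPS safety), but in the proof of Theorem~\ref{thm:tendermint-liveness} the paper instead routes safety through Tendermint's agreement: since the underlying $\text{decision}[\cdot]$ values are consistent across honest validators, all $2f+1$ honest finality signatures at each height go to a single block, and the $f$ adversarial validators cannot assemble a competing quorum. Your ``secondary subtlety'' invoking Tendermint safety is therefore redundant in your own argument (quorum intersection already yields the contradiction), though it is precisely the lever the paper pulls. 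For liveness, your induction-on-heights plan matches the paper's reasoning; the paper makes the same point more compactly by defining $\chain_t = \bigcup_{\text{honest } \validator} \chain^\validator_t$ and arguing in one shot that every block in $\chain_t$ is signed by all honest validators within $\Delta$ of $\max(t,\mathrm{GST})$, giving an explicit latency of $\Tconfirm + 2\Delta$. Your explicit mention of $\mathsf{sEUF\textnormal{-}CMA}$ is a reasonable hedge the paper leaves implicit here.
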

The proof follows from the liveness of Tendermint and is given in~\Cref{sec:appendix-proof-thm-liveness}.

\paragraph{Performance and Discussion.}
Each validator has to send a single \EOTS signature per height.
This implies a linear communication complexity for these signatures in the number of heights and validators, a small overhead on top of the complexity of Tendermint (\cf~\Cref{sec:implementation} for concrete numbers).
Here, hierarchical deterministic wallets can be used to store a single EOTS key per validator.

Our finality gadget can be composed with any SMR consensus protocol with a fixed-sized validator set to equip the protocol with accountable and EOTS safety.
Therefore, our Bitcoin staking protocol can be instantiated with any such SMR protocol as the consumer chain.
Then, clients can choose between outputting the blocks as soon as they are confirmed, thus ensuring liveness in the absence of \EOTS signatures, or when they are attested by the \EOTS signatures, thus ensuring EOTS safety (\cf~\cite{NTT21,SWK+21,NTT22} for the nested ledger paradigm).

\section{Bond Contract}
\label{sec:bond-contract}

We now explain how the extracted keys of the adversarial validators are used to slash their stake for economic safety.
\paragraph{Bond Contract.} 
The validators stake their bitcoins in a \emph{bond contract} on Bitcoin.
These deposits remain locked for a \emph{predetermined} duration measured in the number of Bitcoin blocks (indefinitely for static stake), during which the validators must participate in the consumer chain's consensus.
Once the lock expires, each validator can retrieve its stake by sending it to an address it controls.
To enforce slashing, the bond contract must ensure that a validator's stake can only be sent to an unspendable output while it is locked.
One way to achieve this functionality is to use a timelock along with a \emph{covenant} to restrict the spending address until the lock expires.
Then, to slash a coin, it is sufficient to input a spending transaction (called the \emph{slashing transaction}) to Bitcoin, upon which the contract sends the coin to an unspendable address ($\mathsf{OP\_RETURN}$ output) specified by the covenant (see~\Cref{alg.bond.contract} in~\Cref{sec:appendix-bond-contract-covenant}).
If the validator's secret key is exposed, anyone can use the key to create a slashing transaction.
Hence, an exposed validator cannot avoid slashing, even if it colludes with some of the miners.

\paragraph{Covenant Emulation.}
As covenants are not enabled as part of the Bitcoin script, we instead emulate their functionality with a \emph{covenant committee}.
We structure the bond contract as an $(m+1)$-out-of-$(m+1)$ multi-signature, such that $m+1$ signatures by the $m$ members of the covenant committee and the staked validator are required to spend the deposit before the validator's duties end (Alg.~\ref{alg.bond.contract.musig}). 
The committee co-signs a slashing transaction at the creation time of the bond contract, so that anyone can complete and execute it if the validator's secret key is exposed. 

The covenant committee is trusted to never co-sign a different transaction collectively, as that would break the covenant. 
The committee members should ideally delete their signing keys after generating their signatures, to ensure that a future attacker cannot break the covenant, even if they compromise the committee. 
If at least one of the $m$ members is honest and manages to keep its signing key private (existential honesty), then the covenant becomes unbreakable.
The more committee members there are, the more plausible this assumption becomes. 
Natural choices of committee members are PoS chain users with high-value transactions.

The downside of covenant committees is that it is \emph{permissioned}: if only a single member is offline or refuses to participate, then the committee cannot complete its signature.
The chance of defection by a committee member increases as the committee size grows.
To recover from such attacks, the committee must be reformed to exclude the members that halt the signing process.
However, to sustain our $1$-out-of-$m$ assumption, an objective measure is required to distinguish between the case of a single malicious member halting the progress, and the case, where $m-1$ malicious members try to exclude the only honest member from the committee.
We can achieve this by requiring the committee members to publish their nonces, public keys and partial signatures on Bitcoin when the signature is not completed within some acceptable time-frame.
This allows all users to observe which committee members published a correct signature on time, and which ones refused to sign and thus must be excluded from the next signing attempt.

\paragraph{Efficiency of Emulated Covenants.} 
The committee can be represented by a space-efficient multi-signature scheme, \eg, MuSig2~\cite{nick2021musig2}. 
Then, the size of an emulated covenant on Bitcoin is around $100$ bytes, consisting of a $32$-bytes aggregate public key and a $64$-bytes signature.
Optimistically, all committee members are honest and the signature is promptly created off-chain. 
When parties must post their partial signatures to Bitcoin due to unresponsive members (worst-case), emulation requires $16$ kBytes for $m = 100$, assuming $32$-bytes keys, $64$-bytes signatures and two $32$-byte nonces per member for delinearization.
If these partial signatures are posted as $\mathsf{OP\_RETURN}$ transactions, which allow attaching $80$ arbitrary bytes to the transaction output~\cite{op_return}, this costs less than $40$ USD as of 4 Feb '26!\footnote{Size of $1$ $\mathsf{OP\_RETURN}$ transaction is $205$ bytes, it takes $\sim1.2$ satoshi per byte to have a transaction mined within six blocks with a latency of $1$ hour (on 13 Nov '24)~\cite{transaction-fee}, and the average Bitcoin price on that day was $\$71{,}327.00$ USD~\cite{btc-price}.}

\begin{algorithm}
\caption{The bond contract, emulated with a deleted-key covenant. The committee pre-signs with MuSig2.}
\label{alg.bond.contract.musig}
\begin{verbatim}
OP_IF
    <1 year>
    OP_CHECKLOCKTIMEVERIFY OP_DROP
OP_ELSE
    <committee_pubkey>
    OP_CHECKSIGVERIFY
OP_ENDIF

<validator_pubkey>
OP_CHECKSIG 
\end{verbatim}

\end{algorithm}

\paragraph{From EOTS safety to Economic Safety.}
We next prove economic safety for a static validator set.
\begin{definition}[Economic Safety]
\label{def:economic-safety}
A protocol provides economic safety with resilience $\fA$ ($\fA$-economic-safety), if (i) when there is a safety violation, Bitcoin stake of at least $\fA$ adversarial validators are slashed, and (ii) no honest validator is ever slashed (w.o.p.).
\end{definition}

\begin{theorem}[Economic Safety]
\label{thm:economic-safety}
The static-stake Bitcoin staking protocol using a covenant committee satisfies $(f+1)$-economic safety if one of the committee members is honest.
\end{theorem}
Proof is in~\Cref{sec:appendix-proof-thm-economic-safety} and follows from~\Cref{thm:tendermint-accountability-extraction}.

\begin{remark}
Even if the whole covenant committee is adversarial, it cannot slash the stake of an honest validator; since the adversarial committee members cannot forge a slashing transaction for the honest validators as argued above.
\end{remark}
By~\cite{roughgarden_talk}, no PoS blockchain secured only by its native stake can slash malicious validators after a safety violation if their fraction exceeds $2/3$.
In light of this, the Bitcoin staking protocol with a covenant committee improves on native PoS staking by ensuring that at least $1/3$ of the validators are slashed after a safety violation assuming honesty of one of the committee members.
Indeed, if the committee members are the same entities as the validators, our solution reduces the requirement of having over $1/3$ honest validators for slashing to having a \emph{single} honest validator.

\section{Timestamping Protocol}
\label{sec:timestamping-protocol}

We now extend the protocol 
to support a \emph{dynamic validator set}, which comes with two main challenges:
(i) ensuring agreement on the validator set at each height of the consumer chain as stake shifts hands, and (ii) slashing the adversarial validators' stake on Bitcoin after a safety violation on the consumer chain, but before their stake is unbonded.
Our protocol relies on the timestamps of the Bitcoin blocks \emph{within the consumer chain} to ensure agreement on the validator set; while using the timestamps of the consumer blocks \emph{within Bitcoin} to help slash the adversarial validators before they unbond.
We next state the security properties of the complete Bitcoin staking protocol and provide a high-level description highlighting its components.
The full protocol description and the algorithms are in~\Cref{sec:appendix-timestamping-protocol}.
Going forward, we will denote Bitcoin heights by $h$ and consumer chain heights by $H$ to avoid confusion.

\subsection{Economic Security}

\begin{theorem}[Economic Safety]
\label{thm:safety-smart-contracts}
The dynamic-stake Bitcoin staking protocol using a covenant committee satisfies $(f+1)$-economic safety if one of the committee members is honest.
\end{theorem}
To give intuition about the proof, in the sections below, we describe potential safety attacks exploiting the dynamic nature of the stake, and how the protocol prevents them.
We opted to highlight these specific attacks as they motivate different components of the Bitcoin staking protocol. 
These parts are in turn proven to be sufficient to ensure economic safety against \emph{all} attacks in~\Cref{sec:proof-of-thm-safety}.

\begin{theorem}[Liveness]
\label{thm:liveness-smart-contracts}
If the number of adversarial validators in any window of Bitcoin blocks is $\leq f$, the Bitcoin staking protocol satisfies liveness with finite latency under a synchronous network.
\end{theorem}
Proof of Theorem~\ref{thm:liveness-smart-contracts} is given in~\Cref{sec:proof-of-thm-liveness}.
It shows that when there are sufficiently few adversarial validators, the timestamps on Bitcoin do not affect the consumer chain output by the clients under a synchronous network.
Liveness then follows from the consumer chain's liveness.
Note that unlike the earlier results proven under partial synchrony, for liveness under a dynamic stake, we do require a synchronous network (see \Cref{sec:proof-of-thm-liveness} for an extended discussion).

\subsection{Overview of the Timestamping Protocol}

\paragraph{Bonding and Unbonding.}
To bond its stake, a validator locks it in a bond contract on Bitcoin via a \emph{bonding transaction}.
The stake remains locked for $K_a + k_u$ blocks, where $K_a$ denotes the predetermined period during which the staker must fulfill its validator duties toward the consumer chain.
The parameter $k_u = \Theta(k_c(m \cdot \Tconfirm) + k_f)$ is called the \emph{unbonding delay}, during which the stake stays locked even though the validator no longer participates in the consumer chain's consensus (see~\Cref{prop:bitcoin-secure} for the function $k_c(\Delta t)$ and the constant $k_f$).
Some minimal unbonding delay is necessary to tolerate delays in posting messages to Bitcoin, such as evidence of protocol violation. 

A proposer of consumer blocks must include the hash of the highest confirmed Bitcoin block it observes within its consumer block.
Let $b$ be the highest Bitcoin block (at some Bitcoin height $h$), whose hash is referred to by the consumer blocks of height lower than $H$.
Then, the validator set for a consumer block at height $H$ consists of the validators who have bonded their stake at Bitcoin blocks of height greater than $h-K_a$.
Thus, timestamps of the Bitcoin blocks within the consumer blocks ensure agreement on the validator set for each height of the consumer chain.

\paragraph{Generating the Context Tuples.} Prior to bonding its stake, each validator $\validator$ generates $(\sct_{H,\validator}, \pct_{H,\validator}) \gets \mathsf{EOTS\textnormal{-}ContextGen}(1^{\kappa})$ for an estimated window of consumer chain heights and publishes the public components $\pct_{H,\validator}$ on Bitcoin.
It periodically posts new $\pct_{h,\validator}$ during the $K_a$ Bitcoin blocks of its time as a validator.

Formally, let $w$ denote the height of the consumer chain right before $\validator$ inputs its bonding transaction to Bitcoin.
Then, $\validator$ initially generates $\pct_{H,\validator}$ for the heights $H \in \{w+1, \ldots, w+w_0\}$.
In the worst case, $\validator$ is selected as a validator for a consumer chain height greater than $w$, and some context components remain unused.
However, in any case, $\validator$ will be able to generate EOTS signatures that verify with respect to the $\pct_{H,\validator}$ on Bitcoin for all the consumer chain heights it serves as a validator.

\paragraph{Timestamping on Bitcoin.}
\label{sec:smart-contracts-timestamping-protocol}
Validators send periodic timestamps of the consumer chain to Bitcoin.
These timestamps consist of the hash of the timestamped consumer block, a quorum of $2f+1$ \EOTS signatures on this hash, and the block's height.
Timestamps can be sent for every consumer block, or for every $m$-th consumer block for some $m > 1$.
When there is a safety violation on the consumer chain, and a client observes conflicting consumer blocks finalized by the finality gadget, blocks with earlier timestamps take precedence over those with later timestamps on Bitcoin.

Timestamping of the consumer blocks on Bitcoin is crucial for ensuring economic security when adversarial validators create conflicting consumer chain forks after unbonding their Bitcoin stake (long-range attacks); as the later fork can now be ignored by the clients due to its later timestamp.

Note that the timestamping protocol does not affect the latency of the consumer chain: each consumer block is finalized as soon as a quorum of $2/3$ \EOTS signatures is obtained on it.
The timestamps are rather used to preserve economic safety in the case of attacks via the stopping rules below, which are triggered only in the event of successful attacks on the consumer chain by a majority of Bitcoin-staked validators.

\paragraph{Stopping Rules.}
The timestamping protocol also imposes \emph{stopping rules} for the clients.
These rules require clients to stop adding new consumer blocks to their ledgers and to send timestamps to Bitcoin for the latest finalized consumer blocks in their views:

\smallskip
\noindent
\textbf{Safe-stop rule 1:} Clients stop outputting new consumer blocks when they observe a so-called \emph{data-unavailable} timestamp on Bitcoin such that the underlying, timestamped consumer blocks are not available in the client's view.

\smallskip
\noindent
\textbf{Block output rule:} Clients do not output consumer blocks that were confirmed by an old validator sets, determined by an old Bitcoin block $b$, unless these consumer blocks were timestamped on Bitcoin within a few blocks of $b$.

\smallskip
\noindent
\textbf{Safe-stop rule 2:} Clients do not output the consumer blocks timestamped by some old Bitcoin block $b'$, if these consumer blocks conflict with other consumer blocks whose timestamp appears within a few Bitcoin blocks of $b'$.

\paragraph{Slashing.}
\label{sec:smart-contracts-slashing}

In certain cases, clients send timestamps to Bitcoin in addition to the periodic timestamps 
to warn other clients about potential safety violations.
If there were indeed a safety violation, these extra timestamps ensure the slashing of the adversarial validators.
Details about these conditions can be found in~\Cref{sec:appendix-timestamping-protocol}.

\subsection{Intuition behind the Stopping Rules}
In this section, we motivate the stopping rules above by describing how they mitigate certain types of attacks.
Although we showcase these specific attacks, these rules suffice to ensure the protocol's security against \emph{all} attacks as proven by~\Cref{thm:safety-smart-contracts,thm:liveness-smart-contracts}.
\subsubsection{Data availability attack and safe-stop rule 1}
\label{sec:stopping-rule-1}

\begin{figure*}
{
    \centering
    \includegraphics[width = 0.9\linewidth]{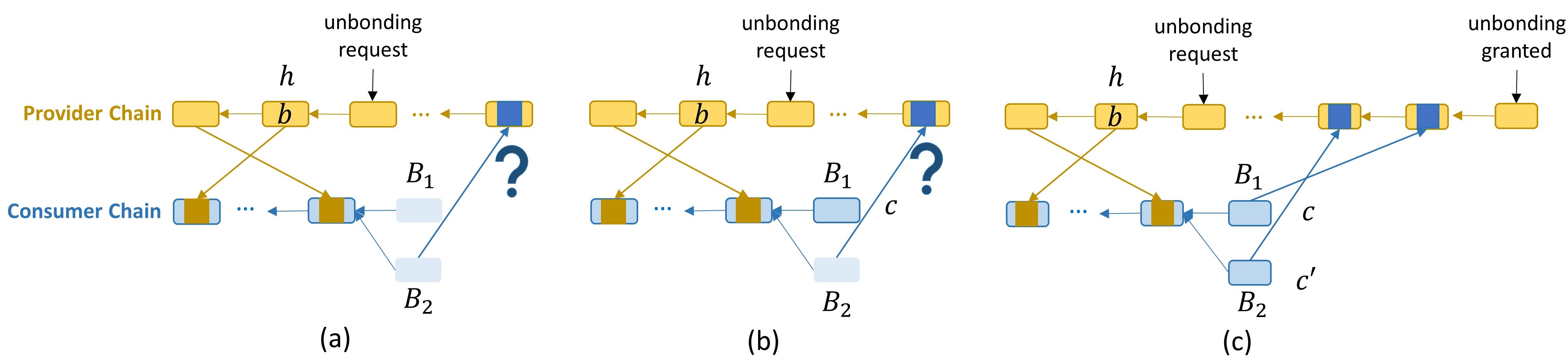}
    \caption{Illustration of the data availability attack and the safe-stop rule 1 (\cf Section~\ref{sec:stopping-rule-1}). Yellow squares within the consumer blocks represent the hashes of Bitcoin blocks. 
    Similarly, blue squares within Bitcoin blocks (called provider chain for symmetry) represent the timestamps of the consumer blocks.
    Light blue blocks denote unavailable consumer blocks.}
\label{fig:safe-stop-1}
}
\end{figure*}

If the clients observe a data-unavailable timestamp on Bitcoin such that the consumer block $B$ of the timestamp or a block in $B$'s prefix is (partially or fully) unavailable or not finalized, they stop outputting new consumer blocks to prevent non-slashable safety violations (safe-stop rule 1).
This so-called data availability attack was first discussed in~\cite{btc-pos} and is illustrated in Fig.~\ref{fig:safe-stop-1}.
In the figure, the Bitcoin block at height $h$ denotes the highest Bitcoin block referred by the earlier finalized consumer blocks.
Over $2/3$ of the validators specified by $h$ are adversarial and create two conflicting consumer blocks, $B_1$ and $B_2$.
They subsequently send a timestamp to Bitcoin for $B_2$, but initially keep both blocks private (Fig.~\ref{fig:safe-stop-1}-a).
Then, the adversary reveals $B_1$ to a client $\client$, but keeps $B_2$ hidden (Fig.~\ref{fig:safe-stop-1}-b).
At this point, if $\client$ outputs $B_1$ as part of its ledger, it would cause a safety violation.
This is because the adversary can reveal both blocks after unbonding its stake to a late-coming client $\client'$, which would output $B_2$ instead of $B_1$, as blocks with earlier timestamps take precedence over those with later ones (Fig.~\ref{fig:safe-stop-1}-c).
Moreover, the adversary cannot be slashed as it has already unbonded.
Hence, to prevent non-slashable safety violations, upon observing data-unavailable timestamps, clients stop adding new consumer blocks to their ledgers.

\subsubsection{Escaping stake attack and block output rule}
\label{sec:stopping-rule-1-5}

\begin{figure*}
{
    \centering
    \includegraphics[width = 0.9\linewidth]{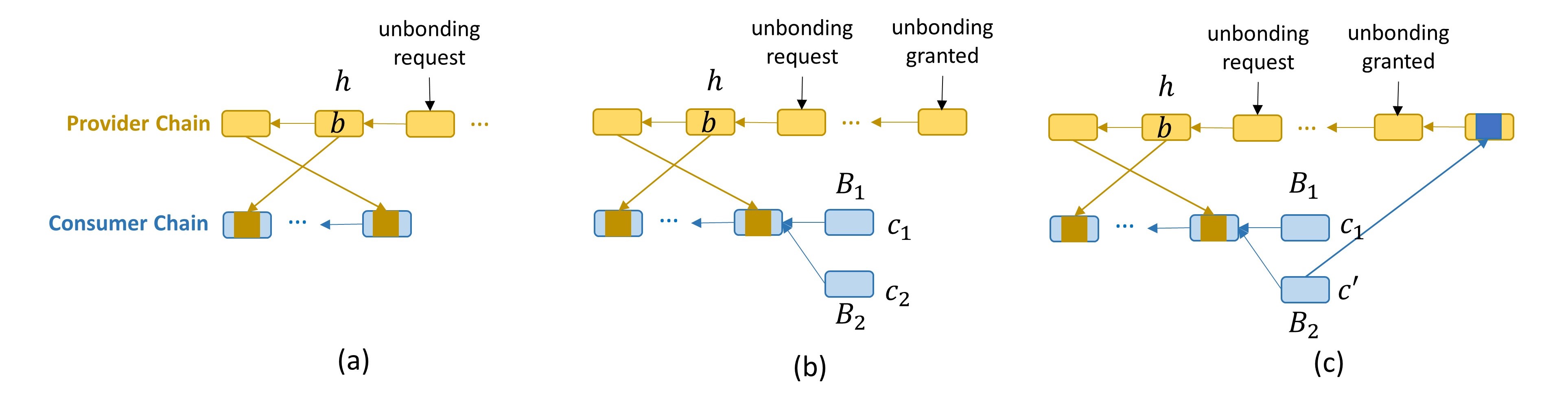}
    \caption{Illustration of the escaping stake attacks and the block output rules (\cf Section~\ref{sec:stopping-rule-1-5}).}
\label{fig:safe-stop-1-5}
}
\end{figure*}

We next describe \emph{escaping stake attacks} that exploit the fact that the stake is maintained on Bitcoin rather than the consumer chain.
Again, suppose over $2/3$ of the validators specified by $b$ are adversarial.
In the first attack, the adversarial validators send an unbonding request to Bitcoin (Fig.~\ref{fig:safe-stop-1-5}-a), and once the request is granted, create two conflicting finalized consumer blocks $B_1$ and $B_2$ (Fig.~\ref{fig:safe-stop-1-5}-b).
They show the blocks $B_1$ and $B_2$ to the clients $\client_1$ and $\client_2$ respectively, yet, keep block $B_2$ hidden from $\client_1$ and vice versa.
At this point, if the clients choose to output their respective blocks, then they risk a non-slashable safety violation, as the adversarial validators have unbonded (\ie, the stake has \emph{escaped}).
In the second attack on Fig.~\ref{fig:safe-stop-1-5}-c, a late-coming client $\client'$ outputs $B_2$ upon observing its timestamp, thus conflicting with $\client_1$ that has output $B_1$ before, after the adversarial stake has escaped.

To avoid these attacks, clients refuse to output consumer blocks finalized by old validator sets determined by an old Bitcoin block $b$, if these consumer blocks have not been timestamped on Bitcoin shortly after $b$.
Thus, they would reject blocks $B_1$ and $B_2$, as block $b$ has become too deep in Bitcoin by the time the clients observe $B_1$ and $B_2$ (Fig.~\ref{fig:safe-stop-1-5}-b).
Similarly, 
client $c'$ would reject block $B_2$, as its timestamp appears long after block $b$ (Fig.~\ref{fig:safe-stop-1-5}-c). 
Indeed, if the majority of validators were honest, a consumer block's timestamp would appear on Bitcoin, long before Bitcoin grows by more than $k_u$ blocks, the unbonding delay.
In~\Cref{sec:stopping-rule-2}, we discuss a similar type of attack called the mismatched timestamp attack and how it is prevented by the safe-stop rule 2.

\section{Implementation and Measurements}
\label{sec:implementation}

\subsection{CPU and Memory Usage}

Bitcoin staking has been implemented and running continuously in production for over a year to secure an anonymous blockchain. In this system, $60$ Bitcoin-staked validators (called finality providers) secure a Tendermint blockchain. We use a version of this open-source implementation to evaluate the operational costs of running a validator, measured in its CPU and memory usage. 
The implementation covers the entire lifecycle of a validator, including submitting a Bitcoin transaction to lock up funds, maintaining EOTS key pairs, monitoring the Tendermint (consumer chain) consensus protocol, and creating \EOTS signatures.

To simulate a production environment, we set up an end-to-end testbed with the following components: a private Bitcoin blockchain running \texttt{bitcoind}, a Tendermint blockchain implemented using the Cosmos SDK, a covenant committee, a monitoring program that slashes equivocating validators, and a validator. The components reside on the same physical server, and are isolated in separate Docker containers. We configure Tendermint to produce a block every 5 seconds. The validator communicates with a Tendermint blockchain client, a \emph{separate process}, and signs each block confirmed by it. In steady state, the validator uses $179$ MB of memory, and less than $10\%$ of a core on a Xeon E5 2698 v4 CPU on top of the resources consumed by the original Tendermint blockchain client.
This implies that the validator is lightweight, and fits in even the smallest cloud VM instances. Further optimization is possible in a less portable implementation.

Clients of the consumer chain use the chain's original confirmation rule and the quorum of the EOTS signatures together to finalize blocks.
Thus, both the PoS validators and the clients run light clients of Bitcoin to verify that the EOTS signatures correspond to the Bitcoin addresses with stake. 
This adds little overhead for the clients (and validators) as Bitcoin is light-weight (24 MB/hour) compared to most PoS protocols (for Ethereum, 300 MB/hour).

In our experiment, the covenant committee consists of $9$ members, and the covenant is emulated by a $6$-out-of-$9$ multisig.
Memory and compute requirements for covenant committee members are negligible compared to validators.

\subsection{Latency, Unbonding Delay and Cost}

The Bitcoin staking protocol increases the confirmation latency of the consumer chain by only one round of communication ($\delta$ time given an actual network delay of $\delta$).

An important contribution of the timestamping protocol is reducing the unbonding delay from weeks as in many PoS blockchains~\cite{cosmos_delay} to a matter of hours.
A careful security analysis in~\Cref{sec:appendix-timestamping-protocol,sec:proof-of-thm-safety,sec:proof-of-thm-liveness} shows that, when the timestamps of consumer blocks are sent to Bitcoin as soon as they are finalized, setting the unbonding delay to $k_u = 2 k_c(m \cdot \Tconfirm) + 4 k_f$ suffices to satisfy Theorems~\ref{thm:safety-smart-contracts} and~\ref{thm:liveness-smart-contracts}.
Since our protocol uses the confirmed Bitcoin chain, we make the confirmation delay $k$ explicit in the expression for $k_u = 2 k_c(m \cdot \Tconfirm) + 4 k_f + k$, re-defining $k_f$ as Bitcoin's transaction inclusion delay, and $k_c(m \cdot \Tconfirm)$ as the worst-case Bitcoin chain growth during one consumer chain period of $m$ heights (\Cref{prop:bitcoin-secure}).

Note that reducing $k_c(m \cdot \Tconfirm)$ by having shorter consumer chain periods (\ie, smaller $m$) implies frequent timestamps on Bitcoin (\Cref{sec:appendix-timestamping-protocol}), which in turn means a larger cost for the consumer chain in transaction fees.
Similarly, reducing $k_f$ by ensuring faster inclusion in Bitcoin implies larger transaction fees for each timestamp.

Each timestamp consists of the SHA256 hash of the timestamped consumer block ($32$ bytes), an aggregate BLS signature on this hash, and the block height ($8$ bytes).
The aggregate signature has size $48$ bytes along with a bit map of signers ($8$ bytes for $n = 60$ validators).
Then, each timestamp consists of $96$ bytes, which can be posted to Bitcoin as part of two $\mathsf{OP\_RETURN}$ transactions, each of $205$ bytes.

We demonstrate the dependency between cost and unbonding delay by Table~\ref{tab:measurements}.
Achieving $k_f = 1$\footnote{$k_f$ and $k$ were originally measured in the number of newly mined Bitcoin blocks. For simplicity, we report their wall-clock time equivalents.} (high transaction fee) and $6$ (low transaction fee) hours cost respectively $2.01$ and $0.27$ satoshis per Byte in transaction fees~\cite{transaction-fee}.
A frequency of timestamping every $1$ (high frequency) and $6$ hours (low frequency) imply $8760$ and $1460$ timestamps per year respectively (in these cases, $k_c(m \cdot \Tconfirm)$ corresponds to $1$ and $6$ hours in Bitcoin height in the $k_u$ equation).
We take the $k=20$ blocks deep prefix of the longest Bitcoin chain in our calculations, which corresponds to an extra $\sim3$ hours of delay for unbonding.
Depth $20$ was chosen to have a low probability ($10^{-7}$) of safety violation~\cite[Figure 2]{btc-latency}\footnote{For an adversary with $10\%$ hash rate and network delay $\Delta = 10$ sec.}.
\begin{table}[h]
\centering
\begin{tabular}{|c|c|c|}
\hline
  & high BTC tx fee & low BTC tx fee  \\
\hline
 \makecell{high \\ timestamp \\ frequency} & \makecell{cost $\approx 5,500$ USD \\ delay $\approx 9$ hrs} & \makecell{cost $\approx 910$ USD \\ delay $\approx 29$ hrs} \\
\hline
 \makecell{low \\ timestamp \\ frequency} & \makecell{cost  $\approx 740$ USD \\ delay $\approx 19$ hrs} & \makecell{cost $\approx 120$ USD \\ delay $\approx 39$ hrs} \\
\hline
\end{tabular}
\caption{Unbonding delays and the associated yearly cost in transaction fees given the Bitcoin price ($\$75,825$), and transaction fees on April 29th~\cite{btc-price}. Unbonding delay is $k_u = 2 k_c(m \cdot \Tconfirm) + 4 k_f + k$.}
\label{tab:measurements}
\end{table}

\section{Security Cost for Bitcoin staking}
\label{sec:cost}

\begin{figure}[h]
    \centering
    \includegraphics[width=\linewidth]{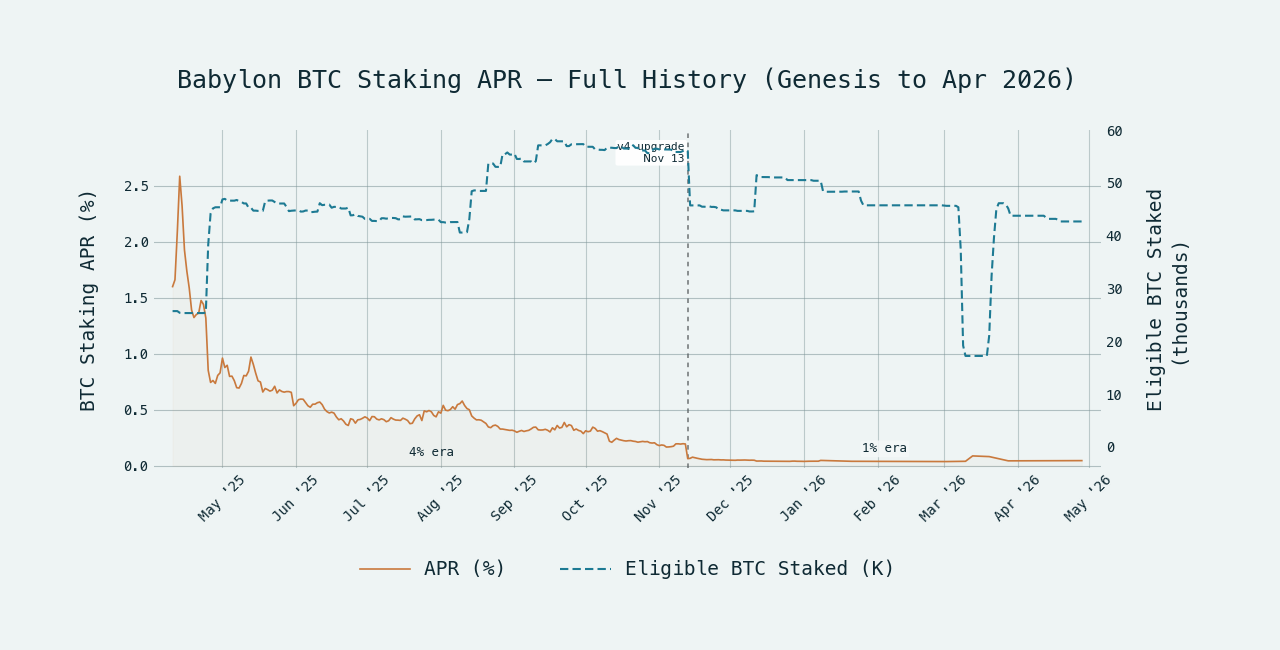}
    \caption{APRs for Bitcoin staking on Babylon PoS chain. Staking rewards for the Bitcoin stakers comes from the inflation of the BABY token, the native token of the Babylon chain. A governance change in Nov 2025 reduces the original inflation rate of $4\%$ to $1\%$.}
\label{fig:history}
\end{figure}

Security is not free. In a PoW chain, miners have costs such as electricity and hardware. In a PoS chain, stakers have capital costs. Chains pay for these costs of security by programmatically generating rewards for the miners or stakers. These rewards usually come from the inflation of the native token of the chain or the transaction fees. An interesting question is how does the security cost of Bitcoin staking compare to PoS staking secured by the native token?

Data from Babylon \cite{btc-staking}, the first PoS chain which adopts Bitcoin staking, sheds some light on this question. Figure \ref{fig:security_cost} compares current staking reward rates (APRs) of the top $20$ PoS chains. The APR for Bitcoin staking is 2 to 3 orders of magnitude smaller than all the other chains which are secured by their native token. Figure \ref{fig:history} shows the APR from the time of the launch of Babylon to April 2026. Interestingly, even though the APR decreases over time, the amount of Bitcoin stake increases steadily over time.
This data suggests that from a PoS blockchain point of view,  Bitcoin staking is a very economical way to obtain a high level of economic security compared to staking using native tokens.

\section*{Acknowledgements}
The work of ENT was conducted mostly while at Stanford University.

\bibliographystyle{plain}
\bibliography{pubs}

\appendix

\section{Ethical Considerations}

We follow the stakeholder-based ethics analysis and discuss impacts from both (i) our research procedures (design, implementation, and evaluation), and (ii) publication of the protocol and results.

\paragraph{Stakeholder identification.}
Direct stakeholders include (1) the authors of the paper, (2) implementers and deployers of Bitcoin staking protocol as outlined in the paper, and (3) any entities operating components of the deployed protocol such as covenant committees.
Indirect stakeholders include (4) Bitcoin holders who may stake, (5) end users and applications of PoS consumer chains secured by remote Bitcoin stake, (6) the broader Bitcoin ecosystem (miners and full-node operators) affected by additional transaction load, and (7) regulators and auditors responsible for oversight. Finally, (8) adversaries are also stakeholders insofar as they may attempt to exploit or misuse the protocol.

\paragraph{Research procedures and data.}
Our work involves no interaction with human subjects and no collection of non-public personal data. 
Our empirical evaluation uses a controlled testbed deployment and public blockchain data in aggregated form (e.g., staking totals and reward rates), rather than attempting to identify individuals behind addresses. 
We did not conduct experiments intended to degrade or disrupt third-party production systems.

\paragraph{Potential benefits.}
Bitcoin staking is intended to improve the security of PoS systems by enabling “security sharing” with Bitcoin while providing automatic, on-chain economic penalties for provable safety violations.

\paragraph{Potential harms and risks.}
\textit{Financial harms to stakers and operators.} The protocol is explicitly designed to slash misbehaving validators by enabling extraction of a validator secret key after certain safety violations. 
This creates the intended economic penalty, but also raises risks: (i) operational mistakes could trigger slashing; (ii) implementations may contain bugs that lead to incorrect behavior; and (iii) if a validator reuses the same Bitcoin key material outside of staking, exposure of that key could unintentionally endanger unrelated funds. 
Even when the protocol’s intended slashing sends funds to an unspendable destination, accidental key exposure can still be harmful if key reuse occurs.

\textit{Harms to consumer-chain users.} Defensive ``safe-stop" rules (used to avoid non-slashable safety violations) can reduce availability during adversarial conditions, potentially causing outages or delayed finality for applications built on the consumer chain.

\textit{Trust and centralization risks.} When covenants are emulated via a covenant committee, additional trust assumptions and operational dependencies are introduced. Committee members may face coercion or targeted attacks, and committee unavailability can complicate operations (e.g., reformation procedures). Even if the committee cannot unilaterally steal funds, reliance on committee behavior can shift power and risk toward a small set of actors.

\textit{Ecosystem externalities.} Timestamping and related transactions may increase load on Bitcoin block space and fees, creating a small negative externality for other Bitcoin users.
Additionally, strengthening the security of PoS systems is dual-use: it can support socially beneficial applications, but may also support harmful or illicit applications that prefer robust infrastructure.

\paragraph{Mitigations.}
We mitigate these risks in several ways.
First, the protocol design aims to prevent slashing of honest validators under the stated assumptions (``trustless staker safety''), and uses explicit unbonding delays and timestamping rules intended to avoid scenarios where stake can escape before accountability and slashing can be enforced.
Second, to reduce unintended losses from key exposure, we recommend that deployments enforce \emph{key separation}: validators should use dedicated keys and addresses for staking only, and avoid address/key reuse for other funds. Deployments should adopt standard validator operational safeguards (e.g., single-instance signing policies and slashing-protection tooling) to reduce accidental double-signing.
Third, where covenant committees are used, deployers should minimize concentrated trust by choosing diverse and independent committee members, using well-vetted threshold multi-signature constructions, and applying strong operational security practices (including key deletion after presigning where applicable). Operational procedures should be documented so that failures or refusal-to-sign events can be detected and addressed transparently.
Fourth, we describe assumptions, tradeoffs, and parameter dependencies to reduce the chance of unsafe deployment, and we encourage independent security review before custodying significant value.

\paragraph{Decision to conduct and publish.}
We proceeded with this research because remote staking and Bitcoin-based security sharing are already active areas of deployment, and publishing a clear protocol specification, threat model, and analysis enables broader scrutiny, auditing, and safer implementations. We judge that the benefits of transparency and improved economic safety outweigh the risks of misuse, provided deployers follow strong operational practices and comply with applicable law and regulation. Nonetheless, we recognize that misuse and unsafe deployment remain possible; we therefore emphasize the above mitigations and the importance of independent review.

\section{Open Science}

An anonymized version of the open-source implementation of the Bitcoin staking protocol, discussed in \Cref{sec:implementation}, is available at \url{https://anonymous.4open.science/r/ccs-2026-657F/}.

\section{Security Proofs for Extractable One-Time Signatures}
\label{sec:eots-proof}

We next provide the security definitions and proofs for the proposed EOTS scheme.

\paragraph{Correctness}
\begin{definition}[Correctness]\label{def:eots-correctness}
An EOTS scheme is \emph{correct} if for every $\kappa \in \mathbb{N}$
and every message $m \in \mathcal{M}$, the following probability is $1$.
\begin{align*}
\Pr\!\left[
\mathsf{EOTS\text{-}Verify}(\pk, m, \pct, \sigma) = 1
\;\middle|\;
\begin{aligned}
&\sk\sample \mathsf{EOTS\text{-}KeyGen}(1^\kappa) \\
&\pk \leftarrow \mathsf{EOTS\text{-}PK}(\sk) \\
&(\sct, \pct)\sample \mathsf{EOTS\text{-}ContextGen}(1^\kappa) \\
&\sigma\sample \mathsf{EOTS\text{-}Sign}(\sk, \sct, m)
\end{aligned}
\right]
\end{align*}
\end{definition}

\begin{theorem}[Correctness of the EOTS Construction]
\label{thm:eots-correctness}
The EOTS scheme of Section~4 satisfies Definition~\ref{def:eots-correctness}.
\end{theorem}

\begin{proof}
Let $\sk = x \in \mathbb{Z}_q^\times$ be sampled by $\mathsf{EOTS\text{-}KeyGen}(1^\kappa)$, so that $\pk = g^x$.
Let $(\sct, \pct) = (k, \hctx(g^k)) \sample \mathsf{EOTS\text{-}ContextGen}(1^\kappa)$ with $k \in \mathbb{Z}_q^\times$.
Then for any message $m \in \{0,1\}^*$, $\mathsf{EOTS\text{-}Sign}(x, k, m)$ outputs $\sigma = (r, s)$ with $r = g^k$ and $s = k + x \cdot \hsig(r \| m) \bmod q$.
We verify that $\mathsf{EOTS\text{-}Verify}(\pk, m, \pct, \sigma) = 1$.

The verification algorithm computes
\[
r_v \;=\; g^s \cdot \bigl(\pk^{\hsig(r \| m)}\bigr)^{-1}
\;=\; g^{k + x \cdot \hsig(r \| m)}
\cdot g^{-x \cdot \hsig(r \| m)}
\;=\; g^k \;=\; r,
\]
so the second check $r_v \stackrel{?}{=} r$ passes.

Moreover, $\hctx(r) = \hctx(g^k) = \pct$ by construction of $\pct$ in $\mathsf{EOTS\text{-}ContextGen}$, so the first check $\hctx(r) \stackrel{?}{=} \pct$ passes as well.  
Hence $\mathsf{EOTS\text{-}Verify}(\pk, m, \pct, \sigma) = 1$ with probability $1$.
\end{proof}

\paragraph{\textsc{PK-DER} Security}
Besides the other properties, we additionally prove \textsf{PK-DER} security, which states that the algorithm \textsf{EOTS-PK} works as expected: no adversary can come up with a key pair ($\sk$, $\pk$) such that (i) \textsf{EOTS-PK} produces $\pk$ on input $\sk$, and (ii) the adversary is able to produce verifying signatures, yet (iii) $\sk$ does not enable producing verifying signatures via the signing algorithm.
This property is necessary when the keys are not generated through a trusted setup.

\begin{algorithm}
\caption{Game for Public Key Derivation Security (\textsf{PK-DER})}
\label{daps:game:pk}
\begin{algorithmic}[1]\small
\State $(\sk, \pk, \pct^*, m_1, \sigma, m_2) \gets \mathcal{A}$
\State $(\sct, \pct) \sample \mathsf{EOTS\textnormal{-}CtxGen}(1^{\kappa})$
\State \Return $\mathsf{EOTS\textnormal{-}PK}(\sk) = \pk \wedge \mathsf{EOTS\textnormal{-}Verify}(\pk,
m_1, \pct^*, \sigma) \wedge \neg \mathsf{EOTS\textnormal{-}Verify}(\pk, m_2, \pct, \mathsf{EOTS\textnormal{-}Sign}(\sk, \sct, m_2))$
\end{algorithmic}
\end{algorithm}

\begin{definition}[\textsf{PK-DER} security]
\label{daps:sec:pk} The \textsf{PK-DER} game formalizes the public key derivation security. 
\begin{equation*}
\forall \mathcal{A} \in \mathsf{PPT},\ 
\Advantage^{\mathsf{PK\text{-}DER}}_{\mathsf{EOTS}}(\Adv) \deq \Pr[\mathsf{PK\textnormal{-}DER}^{\mathcal{A}}(1^{\kappa}) = 1]
< \negl
\end{equation*}
\end{definition}

\begin{theorem}[\textsf{PK-DER} security]
\label{thm:der}
The EOTS scheme of Section~\ref{sec:daps-constr} is \textsf{PK-DER}
secure.
\end{theorem}

\begin{proof}
Consider the output $(x, y, \hashctx_{r^*}, m_1, \sigma, m_2)$ of a \textsf{PPT} adversary $\mathcal{A}$.
We analyze the probability that $\mathcal{A}$ wins the \textsf{PK-DER} game.
Note that for any tuple $(k, \hashctx_r) \sample \mathsf{EOTS\textnormal{-}CtxGen}(1^{\kappa})$, it must be $\hashctx_r = \hctx(g^k)$.
Suppose $\mathsf{EOTS\textnormal{-}PK}(x) = y$, which implies $y = g^x$.
Then, given
\begin{equation*}
\mathsf{EOTS\textnormal{-}Sign}(x, k, m_2) = (g^k, k+x\hsig(g^k||m_2))
\end{equation*}
it holds that
\begin{align*}
&\ g^{\mathsf{EOTS\textnormal{-}Sign}(x, k, m_2)}(y^{\hsig(g^k || m_2)})^{-1} \\
=&\ g^{k+x\hsig(g^k || m_2)}(g^{x\hsig(g^k || m_2)})^{-1} = g^k
\end{align*}
and $\hctx(g^k) = \hashctx_r$, which implies
\begin{equation*}
\mathsf{EOTS\textnormal{-}Verify}(y, m_2, \hashctx_r, \mathsf{EOTS\textnormal{-}Sign}(x, k, m_2)) = 1
\end{equation*}
In other words, $\mathcal{A}$ wins the game with probability $1$.
\end{proof}

Note that the secret key in our construction can be any number in $\mathbb{Z}_q^{\times}$ and all operations are done in $\mathbb{Z}_q^{\times}$ or $G$.
Therefore it is impossible for $\mathcal{A}$ to provide an invalid key pair that passes the $\mathsf{EOTS\textnormal{-}PK}(x) = y$ assertion but fails the $\mathsf{EOTS\textnormal{-}Verify}(y, r, m_1, s)$ assertion, thus the latter assertion is redundant for the proof of Theorem~\ref{thm:der}.

\begin{algorithm}
\caption{Game for Strong Existential Unforgeability under 
Chosen Message Attacks with Distinct Contexts (\textsf{sEUF-CMADC})}
\label{daps:game:seuf-cmadr}
\begin{algorithmic}[1]\small
\State $\sk \sample \mathsf{EOTS\textnormal{-}KeyGen}(1^{\kappa})$; $\pk
\gets \mathsf{EOTS\textnormal{-}PK}(\sk)$
\For{$i = 1$ to $N$} \Comment{Here, $N$ is $\poly(\kappa)$.}
    \State $(\sct_i, \pct_i) \sample \mathsf{EOTS\textnormal{-}CtxGen}(1^{\kappa})$
\EndFor
\State $Q \gets \emptyset$
\State $\Qctx \gets \emptyset$
\State $(m^*, \pct^*, \sigma^*) \gets \mathcal{A}^{\OEOTS(\cdot),\OEOTSctx(\cdot)}(\pk, \{\pct_i\}_{i=1,\ldots,N})$ \Comment{$\mathcal{A}$ can call the oracle multiple times}
\State \Return $(m^*, \pct^*, \sigma^*) \notin Q \wedge \mathsf{EOTS\textnormal{-}Verify}(\pk, \pct^*, m^*,
\sigma^*)$
\label{daps:game:seuf-cmadr:return}
\Statex

\Function{$\OEOTSctx$}{$m, \pct$} \Comment{Signing oracle with context}
  \If{$\pct \in \{\pct_i\}_{i = 1, \ldots, N}$ and $\pct \notin \Qctx$}
    \State $\sigma \sample \mathsf{EOTS\textnormal{-}Sign}(\sk, \sct_{i'}, m)$ where $\pct = \pct_{i'}$ \Comment{$\sct_{i'}$ is the secret context component corresponding to $\pct_{i'}$.}
    \State $Q \gets Q \cup \{(m, \pct, \sigma)\}$
    \State $\Qctx \gets \Qctx \cup \{\pct\}$
    \State \Return $(\pct, \sigma)$
  \Else
    \State \Return $\bot$
  \EndIf
\EndFunction
\Statex

\Function{$\OEOTS$}{$m$} \Comment{Signing oracle without context}
    \State $(\sct, \pct) \sample \mathsf{EOTS\textnormal{-}CtxGen}(1^{\kappa})$
    \State $\sigma \sample \mathsf{EOTS\textnormal{-}Sign}(\sk, \sct, m)$
    \State $Q \gets Q \cup \{(m, \pct, \sigma)\}$
    \State \Return $(\pct, \sigma)$
\EndFunction
\end{algorithmic}
\end{algorithm}

\paragraph{\textsf{sEUF-CMADC} Security}

\begin{definition}[\textsf{sEUF-CMADC} security] The \textsf{sEUF-CMADC} game formalizes the existential unforgeability guarantee for the EOTS scheme. Namely, for all PPT $\Adv$,
\label{eots:sec:seuf-cmadc}
\begin{equation*}
\Advantage^{\mathsf{sEUF\text{-}CMADC}}_{\mathsf{EOTS}}(\Adv) \deq
\Pr[\mathsf{sEUF\textnormal{-}CMADC}^{\mathcal{A}}(1^{\kappa}) = 1]
< \negl
\end{equation*}
\end{definition}
This security property guarantees that signatures are, except with negligible probability, unforgeable when the secret key is unknown, even in the face of multiple signature queries.

\begin{algorithm}
\begin{algorithmic}[1]\small
\caption{Game for Strong Existential Unforgeability under Chosen Message Attacks (\textsf{sEUF-CMA})}
\label{daps:game:seuf-cma}
\State $(\sk, \pk) \sample \mathsf{Schnorr\textnormal{-}KeyGen}(1^{\kappa})$
\State $\mathcal{R} \gets \emptyset$
\State $(m^*, \sigma^*) \gets \mathcal{A}^{{\OSch}(\cdot)}(\pk)$
\State \Return $(m^*, \sigma^*) \notin \mathcal{R} \wedge \mathsf{Schnorr\textnormal{-}Verify}(\pk, m^*, \sigma^*)$
\label{daps:game:seuf-cma:return}
\Statex

\Function{$\OSch$}{$m$}
  \State $\sigma \sample \mathsf{Schnorr\textnormal{-}Sign}(\sk, m)$
  \State $\mathcal{R} \gets \mathcal{R} \cup \{(m, \sigma)\}$
  \State \Return $\sigma$
\EndFunction
\end{algorithmic}
\end{algorithm}

\begin{algorithm}
\caption{Schnorr key generation algorithm: $\mathsf{Schnorr\textnormal{-}KeyGen}()$}
\label{daps:schnorr:keygen}
\begin{algorithmic}[1]\small
\State $x \sample U(\mathbb{Z}^{\times}_q)$
\State \Return $(x, g^x)$
\end{algorithmic}
\end{algorithm}

\begin{algorithm}
\caption{Schnorr signature algorithm: $\mathsf{Schnorr\textnormal{-}Sign}(x, m)$}
\label{daps:schnorr:sign}
\begin{algorithmic}[1]\small
\State $k \sample U(\mathbb{Z}^{\times}_q)$
\State $r \gets g^k$
\State $e \gets \mathcal{H}(r||m)$
\State $s \gets k+xe$
\State \Return $(r, s)$
\end{algorithmic}
\end{algorithm}

\begin{algorithm}
\caption{Schnorr verification algorithm: $\mathsf{Schnorr\textnormal{-}Verify}(y, m, (r, s))$}
\label{daps:schnorr:verify}
\begin{algorithmic}[1]\small
\State $r_v \gets g^s(y^{\mathcal{H}(r || m)})^{-1}$
\State \Return $r_v = r$
\end{algorithmic}
\end{algorithm}

\medskip
\noindent\textbf{Assumption.} 
For the security proof, we assume that the Schnorr signature scheme with the hash function $\hsig$ and group $\G$ of prime order $q$ is \textsf{sEUF-CMA} secure in the random oracle model whenever the hardness of discrete log holds (the algorithms are defined in~\Cref{daps:schnorr:keygen,daps:schnorr:sign,daps:schnorr:verify}). Namely, for all PPT $\Adv$,
\[
\Advantage^{\mathsf{sEUF\text{-}CMA}}_{\mathsf{Schnorr}}(\mathcal{A}) = \Pr[\mathsf{sEUF\textnormal{-}CMA}^{\mathcal{A}}(1^{\kappa}) = 1]
\;\le\; \negl.
\]

\begin{theorem}
\label{thm:sEUF-CMADC}
Under the above assumption, the EOTS scheme of Section~4 is
sEUF-CMADC secure: for every PPT adversary $\mathcal{B}$, there exists a PPT adversary $\Adv$ such that
\[
\Advantage^{\mathsf{sEUF\text{-}CMADC}}_{\mathsf{EOTS}}(\mathcal{B})
\;\le\;
\Advantage^{\mathsf{sEUF\text{-}CMA}}_{\mathsf{Schnorr}}(\mathcal{A})
\;+\; \negl
\]
\end{theorem}

\begin{proof}[Proof of~\Cref{thm:sEUF-CMADC}]
Let $\mathcal{B}$ be a PPT adversary winning the \textsf{sEUF-CMADC} game (\Cref{daps:game:seuf-cmadr}) with probability $\varepsilon(\kappa)$.  Let
$q_H, q_{\mathsf{ctx}}, q_{\mathsf{nc}}$ be upper bounds, polynomial in $\kappa$,
on the number of $\mathcal{B}$'s queries to $\hctx$, to $\OEOTSctx$,
and to $\OEOTS$ respectively, and set
$Q_S := q_{\mathsf{ctx}} + q_{\mathsf{nc}}$ (the total number of
calls $\mathcal{A}$ will make to $\OSch$).  We construct a PPT
adversary $\mathcal{A}$ playing the sEUF-CMA game with Schnorr signatures and show that
$\Advantage^{\mathsf{sEUF\text{-}CMA}}_{\mathsf{Schnorr}}(\mathcal{A}) \;\ge\; \varepsilon(\kappa) - \negl$.

\smallskip
\noindent
\textbf{Construction of $\mathcal{A}$.}
$\mathcal{A}$ receives a Schnorr public key $\pk = g^x$ together with oracle access to the Schnorr signing oracle $\OSch$ and to the oracle $\hsig$. 
It maintains a table $\Tctx$ implementing $\hctx$ and a record $\mathcal{R}$ of every triple $(m,r,s)$ for $(r,s)$ obtained as the answer to $\OSch(m)$.
It initially sets $Q := \varnothing$, $\Qctx := \varnothing$, and $\mathcal{R} = \varnothing$.
\begin{enumerate}
    \item \textbf{Setup.} $\mathcal{A}$ samples $\pct_1, \ldots, \pct_N \sample \{0,1\}^\kappa$ \\ and returns $(\pk, \{\pct_i\}_{i=1}^N)$ to $\mathcal{B}$.
    \item \textbf{$\hctx$ oracle.} Upon receiving some query $r \in \G$, if $\Tctx[r]$ is defined, return it; otherwise sample $u \sample \{0,1\}^\kappa$, set $\Tctx[r] \deq u$, and return $u$.
    \item \textbf{$\hsig$ oracle.} Forward each query verbatim to the $\hsig$ oracle and return the answer.
    \item \textbf{$\OEOTSctx(m, \pct)$.} If $\pct \notin \{\pct_1, \ldots, \pct_N\}$ or $\pct \in \Qctx$, return $\bot$.  Otherwise:
    \begin{enumerate}[leftmargin=*]
        \item Query $(r, s) \leftarrow \OSch(m)$ and append $(m, r, s)$ to $\mathcal{R}$.
        \item If $\Tctx[r]$ is already defined and $\Tctx[r] \neq \pct$, abort.
        \item Set $\Tctx[r] \deq \pct$, append $(m, \pct, (r,s))$ to $Q$, append $\pct$ to $\Qctx$, and return $(\pct, (r,s))$.
    \end{enumerate}
    \item \textbf{$\OEOTS(m)$.} Query $(r, s) \leftarrow \OSch(m)$ and append $(m, r, s)$ to $\mathcal{R}$.  If $\Tctx[r]$ is undefined, sample $u \sample \{0,1\}^\kappa$ and set $\Tctx[r] \deq u$. Set $\pct \deq \Tctx[r]$, append $(m, \pct, (r,s))$ to $Q$, and return $(\pct, (r,s))$.
    \item \textbf{Output.} When $\mathcal{B}$ returns $(m^*, \pct^*, \sigma^*)$ with $\sigma^* = (r^*, s^*)$, $\mathcal{A}$ outputs $(m^*, \sigma^*)$ as its Schnorr forgery candidate.
\end{enumerate}

\smallskip
\noindent
\textbf{Bad Events.} Let $\Bad := \mathsf{B1} \cup \mathsf{B2}$, where:
    \begin{description}[leftmargin=*]
        \item[$\mathsf{B1}$:] In the real-world experiment, two of the setup-time secret nonces $k_1, \ldots, k_N$ collide (so the corresponding $g^{k_i}$ are not pairwise distinct).
        \item[$\mathsf{B2}$:]  At some $\OEOTSctx(m, \pct)$ call that passes the check, the nonce $r$ returned by $\OSch(m)$ satisfies $\Tctx[r] = u$ for some $u \neq \pct$ defined prior to the call (Step~4b above aborts).
\end{description}

The Schnorr signing oracle samples its nonce $k$ uniformly at random from $\Z_q^\times$, so $r = g^k$ is uniform on $\G \setminus \{1\}$ and independent of $\mathcal{B}$'s prior view.
Then, since $q$ is exponential in $\kappa$, 
$$\Pr[\mathsf{B1}] = {N \choose 2} \frac{1}{q-1} = \negl$$
Moreover, at any $\OEOTSctx$ call, $\Tctx$ is defined on at most $q_H + Q_S - 1$ elements of $\G$, which implies $\Pr[\mathsf{B2}] \;\le\; q_{\mathsf{ctx}} \cdot (q_H + Q_S - 1) \,
(q-1)^{-1} = \negl$.
Therefore, $\Pr[\Bad] = \negl$.

\smallskip
\noindent
\textbf{View Equivalence.} We claim that, conditioned on $\neg \Bad$, $\mathcal{B}$'s view in the simulation is identically distributed to its view in the real sEUF-CMADC game.

\textit{Public values at setup.}
In the real game, the challenger samples $k_1, \ldots, k_N \sample \Z_q^\times$ independently and publishes $\pct_i := \hctx(g^{k_i})$ for each $i$.
Conditioned on $\neg \mathsf{B1}$, the values $g^{k_1}, \ldots, g^{k_N}$ are pairwise distinct, so $\hctx(g^{k_1}), \ldots, \hctx(g^{k_N})$ are $N$ i.i.d uniform $\kappa$-bit strings.
In the simulation, the $\pct_i$ are sampled directly as $N$ i.i.d.\ uniform $\kappa$-bit strings.
Hence the marginal distribution of the public setup is identical in both worlds.

\textit{Random oracle queries.}
$\hsig$ is forwarded directly, so its distribution is unchanged.
For $\hctx$, sampling realizes a uniformly random function on points where no programming has occurred.
Programming events occur in two places: (i) Step~4c of $\OEOTSctx$, where $\Tctx[r] \deq \pct$, and (ii) $\OEOTS$, where $\Tctx[r] \deq u$ for a uniform $u$.
Each programmed value is therefore a uniform $\kappa$-bit string, independent of all other defined entries, exactly as in a true random oracle, implying that the output of $\hctx$ queries is identically distributed in both worlds.

\textit{Signing oracle queries.}
Conditioned on $\neg \mathsf{B2}$, every programming step in $\OEOTSctx$ targets a previously undefined target entry; so $\OEOTSctx(m, \pct)$ queries do not abort.
Consider any such query that does not return $\bot$ and does not abort.  
In the real game, the answer is $(\pct, (r_i, s_i))$ where $r_i = g^{k_i}$ and $\pct_i = \hctx(r_i)$, and $s_i = k_i + x \cdot \hsig(r_i \| m) \bmod q$.
In the simulation, the answer is $(\pct, (r, s))$ where $(r, s) = \OSch(m)$ for some $r = g^k$, $k \sample \Z_q^\times$ and $s = k + x \cdot \hsig(r \| m) \bmod q$.

Note that in both worlds, the public nonce $r$ is uniform in $\G \setminus \{1\}$, and the $s$-component is deterministic once the $r$ component is given.
The constraint $\hctx(r) = \pct$ is enforced in both cases, either by the setup-time sampling in the real game, or by setting $\Tctx[r] \deq \pct$ in the simulation.
Therefore, the output of $\OEOTSctx(m, \pct)$ (and by a similar reasoning, that of $\OEOTS(m)$) queries is identically distributed in both worlds.

\smallskip
\noindent
\textbf{Freshness of the Forgery.}
Suppose $\mathcal{B}$ wins the simulated game and $\neg \Bad$ holds.
Write $\sigma^* = (r^*, s^*)$.  By the EOTS verification rule,
\begin{equation*}
\hctx(r^*) = \pct^* \quad\text{and}\quad \mathsf{Schnorr\text{-}Verify}\bigl(\pk, m^*, (r^*, s^*)\bigr) = 1.
\end{equation*}
We show that $(m^*, \sigma^*)$ is a fresh Schnorr forgery, \ie, the exact pair $(m^*, r^*, s^*)$ is not in $\mathcal{R}$.

Suppose, for contradiction, that $(m^*, r^*, s^*) \in \mathcal{R}$.
Then, $\mathcal{A}$ obtained $(r^*, s^*)$ from $\OSch(m^*)$ during the simulation, while answering one of $\mathcal{B}$'s signing queries on the message $m^*$.

\emph{Case~1: $\OEOTSctx(m^*, \pct')$ query.}
In this case, $\mathcal{A}$ programmed $\Tctx[r^*] := \pct'$ in Step~4c and stored $(m^*, \pct', (r^*, s^*))$ in $Q$.

\emph{Case~2: $\OEOTS(m^*)$ query.}
Then, $\mathcal{A}$ either reused a previously defined value $\Tctx[r^*] = \pct'$, or sampled a fresh $\pct' \sample \{0,1\}^\kappa$ and set $\Tctx[r^*] := \pct'$.
In either case, $(m^*, \pct', (r^*, s^*)) \in Q$.
After the call, $\Tctx[r^*] = \pct'$ in both cases.
This entry is never overwritten afterwards: given a $\hctx$ query $r^*$, the simulator returns the same $\pct'$.
Therefore, $\pct^* = \pct'$, implying that $(m^*, \pct', (r^*, s^*)) \in Q$, which is a contradiction.
Hence, it must be that $(m^*, r^*, s^*) \notin \mathcal{R}$ and $(m^*, \sigma^*)$ is a fresh Schnorr forgery.

Given the freshness of the forgery, we conclude that $\Adv$ wins the Schnorr \textsf{sEUF-CMA} game whenever $\mathcal{B}$ wins the EOTS \textsf{sEUF-CMADC} game and $\neg\mathsf{Bad}$ holds.
Therefore,
\[
\Advantage^{\mathsf{sEUF\text{-}CMADC}}_{\mathsf{EOTS}}(\mathcal{B})
\;\le\;
\Advantage^{\mathsf{sEUF\text{-}CMA}}_{\mathsf{Schnorr}}(\mathcal{A})
\;+\; \Pr[\Bad],
\]
where $\Pr[\Bad] = \negl$.
\end{proof}

\paragraph{\textsf{EXT-SCMA} Security}

\begin{algorithm}
\caption{Game for Extractability under Single Chosen Message
Attacks (\textsf{EXT-SCMA})}
\label{daps:game:ext-scma}
\begin{algorithmic}[1]\small
\State $(\pk, \pct, m_1, \sigma_1, m_2, \sigma_2) \gets \mathcal{A}$
\State \Return 
\begin{align*}
& m_1 \neq m_2\ \wedge \\ 
& \mathsf{EOTS\textnormal{-}Verify}(\pk, m_1, \pct, \sigma_1)\ \wedge \\
& \mathsf{EOTS\textnormal{-}Verify}(\pk, m_2, \pct, \sigma_2)\ \wedge \\
& \mathsf{EOTS\textnormal{-}PK}(\mathsf{EOTS\textnormal{-}Extract}(\pk, \pct, m_1, \sigma_1, m_2, \sigma_2))
\neq \pk
\end{align*}
\end{algorithmic}
\end{algorithm}

\begin{definition}[\textsf{EXT-SCMA} security]
\label{EOTS:sec:ext-scma} The \textsf{EXT-SCMA} game formalizes the extractability guarantee for the EOTS scheme.
\begin{equation*}
\forall \mathcal{A} \in \mathsf{PPT},
\Advantage^{\mathsf{EXT\text{-}SCMA}}_{\mathsf{EOTS}}(\Adv) \deq \Pr[\mathsf{EXT\textnormal{-}SCMA}^{\mathcal{A}}(1^{\kappa}) = 1]
< \negl
\end{equation*}
\end{definition}
Lastly, \textsf{EXT-SCMA} security guarantees that two valid signatures on distinct messages with the same public key and public context component can be used to extract the expected secret key, except with negligible probability.

\begin{theorem}[\textsf{EXT-SCMA}]
\label{thm:cma}
The EOTS scheme of Section~\ref{sec:daps-constr} is \textsf{EXT-SCMA} secure whenever $\hctx$ and $\hsig$ are modelled as random oracles.
\end{theorem}

\begin{proof}

Consider the output $(y, \hashctx_r, m_1, \sigma_1, m_2, \sigma_2)$ of a \textsf{PPT} adversary $\mathcal{A}$.
We analyze the probability that $\mathcal{A}$ wins the \textsf{EXT-SCMA} game.
Suppose $m_1 \neq m_2$, and both $\mathsf{EOTS\textnormal{-}Verify}(y, m_1, \hashctx_r, \sigma_1) = 1$ and $\mathsf{EOTS\textnormal{-}Verify}(y, m_2, \hashctx_r, \sigma_2) = 1$.
Next, given for $\sigma_1 = (r_1, s_1)$ and $\sigma_2 = (r_2, s_2)$, we consider the following two disjoint events:

\textbf{Event 1:} $r \deq r_1 = r_2$ and $\hsig(m_1 || r) \neq \hsig(m_2 || r)$:
Since $g$ is a generator of group $G$ with size $q$, for each $b
\in G$ there is a unique $a \in \mathbb{Z}^{\times}_q$ such that
$g^a = b$.
Let $x, k$ be the unique elements of
$\mathbb{Z}^{\times}_q$ such that $g^x = y$ and $g^k = r$.
Then,
\begin{gather*}
\mathsf{EOTS\textnormal{-}Verify}(y, m_1, r, s_1) = 1 \Rightarrow
r = g^{s_1}(y^{\hsig(r || m_1)})^{-1} \Rightarrow \\
g^k = g^{s_1}(g^{x\hsig(r ||
m_1)})^{-1} \Rightarrow \hsig(r || m_1) = \frac{s_1 - k}{x} \enspace.
\end{gather*}
Likewise
\begin{equation*}
\mathsf{EOTS\textnormal{-}Verify}(y, m_2, \hashctx_r, \sigma_2) = 1 \Rightarrow \hsig(r ||
m_2) = \frac{s_2 - k}{x} \enspace.
\end{equation*}
Then, since $\hsig(m_1 || r) \neq \hsig(m_2 || r)$,
\begin{align*}
&\ \mathsf{EOTS\textnormal{-}PK}(\mathsf{EOTS\textnormal{-}Extract}(y, r, m_1, s_1, m_2, s_2)) \\
=&\ \mathsf{EOTS\textnormal{-}PK}\left(\frac{s_1-s_2}{\hsig(r||m_1)-\hsig(r||m_2)}\right) \\
=&\ \mathsf{EOTS\textnormal{-}PK}\left(\frac{s_1-s_2}{\frac{s_1-k}{x} - \frac{s_2-k}{x}}\right) \\
=&\ \mathsf{EOTS\textnormal{-}PK}(x) = g^x = y \enspace,
\end{align*}
which implies that $\mathcal{A}$ loses the \textsf{EXT-SCMA} game.

\textbf{Event 2:} $r_1 \neq r_2$, or $r_1 = r_2 = r$ and $\hsig(m_1 || r) = \hsig(m_2 || r)$:
In this case, $\hctx(r_1) = \hctx(r_2)$ for $r_1 \neq r_2$, and $\hsig(m_1 || r) = \hsig(m_2 || r)$ for $(m_1 || r) \neq (m_2 || r)$.
However, since $\hctx$ and $\hsig$ are random oracles, the probability that the adversary finds such $m_1, m_2, r_1, r_2$ is $\negl$, concluding the proof.
\end{proof}
In fact, for proving $\textsf{EXT-SCMA}$-security, it is sufficient for $\hctx$ and $\hsig$ to be collision-resistant. 

\section{Bond Contract Implementation using Covenants}
\label{sec:appendix-bond-contract-covenant}

\begin{algorithm}
\caption{A simple bond contract implemented in Bitcoin Script using OP\_CHECKTEMPLATEVERIFY. In a year, the validator can take its deposit back. Until then, if its key is leaked, anyone can execute the slashing transaction.}
\label{alg.bond.contract}
\begin{verbatim}
OP_IF
    <1 year>
    OP_CHECKLOCKTIMEVERIFY OP_DROP
OP_ELSE
    <hash_of_slashing_transaction>
    OP_CHECKTEMPLATEVERIFY
OP_ENDIF

<validator_pubkey>
OP_CHECKSIG 
\end{verbatim}

\end{algorithm}

\section{Proof of~\Cref{thm:tendermint-accountability-extraction}}
\label{sec:appendix-proof-thm-daps-safety}

\begin{proof}[Proof of~\Cref{thm:tendermint-accountability-extraction}]
Suppose the clients $\client_1$ and $\client_2$ finalized two conflicting chains.
Then, there must be an earliest height $H$, at which two conflicting blocks, $B_1$ and $B_2$ are finalized in their views.
Hence, $\client_1$ and $\client_2$ must have respectively observed two quorums of $2f+1$ height $H$ \EOTS signatures $\langle \mathsf{Final}, H, \mathsf{hash}(B_1) \rangle$ and $\langle \mathsf{Final}, H, \mathsf{hash}(B_2) \rangle$ for $B_1$ and $B_2$.
Upon obtaining the two quorums from the clients $\client_1$ and $\client_2$, an efficient extraction algorithm identifies the $f+1$ validators at the intersection of the two quorums as protocol violators, since they have satisfied the condition in~\Cref{alg:slashing-condition}.
By \textsf{EXT-SCMA} security of EOTS (\Cref{EOTS:sec:ext-scma}), the algorithm can then extract their secret signing keys (w.o.p.).
Moreover, since honest validators create at most one \EOTS signature per height, 
by \textsf{sEUF-CMADC} security of EOTS (\Cref{eots:sec:seuf-cmadc}), no PPT adversary can forge signatures on behalf of an honest validator.
Thus, Tendermint with the finality gadget satisfies $(f+1)$-EOTS safety.
\end{proof}

\section{Proof of~\Cref{thm:tendermint-liveness}}
\label{sec:appendix-proof-thm-liveness}
\begin{proposition}
\label{prop:context}
Suppose the duration of a Tendermint round is lower bounded by some $\epsilon > 0$.
Then, there exists a fixed $w_0 \in \mathbb{Z}^+$ such that if the public context components $\pct_{H,\validator}$ for Tendermint heights $H \in \{w+1, \ldots, w+w_0\}$ is input to Bitcoin before the Tendermint chain reaches height $w-w_0+1$, then these components appear in the confirmed Bitcoin chain of all validators and clients by the time the Tendermint chain reaches height $w$.
\end{proposition}
\Cref{prop:context} follows directly from~\Cref{prop:bitcoin-secure}: $w_0$ is chosen so that the wall-clock duration of $w_0$ Tendermint rounds (at least $w_0 \cdot \epsilon$) exceeds the maximum Bitcoin tx-inclusion delay $k_f$ in wall-clock time, guaranteeing that any $\pct$ broadcast at Tendermint height $w-w_0+1$ has been confirmed on Bitcoin by Tendermint height $w$.

\begin{proof}[Proof of~\Cref{thm:tendermint-liveness}]
Let $\chain^\validator_{t}$ denote the sequence of Tendermint blocks confirmed (decided) by an honest validator $\validator$ following the original confirmation rule of Tendermint~\cite[Algorithm 1, line 49]{2018tendermint}.
Note that \Cref{alg:validator-finality} only invokes EOTS signing after a Tendermint decision is reached---the gadget adds an output hook without modifying the prevote/precommit logic---so the Tendermint protocol code executed by honest validators is unaffected by the finality gadget.
Thus, when the number of adversarial validators is less than or equal to $f$, Tendermint satisfies agreement, validity, and after GST, termination by \cite[Lemmas 3, 4, 7]{2018tendermint}.
Then, for any honest validators $\validator$ and $\validator'$ and times $t$ and $t'$, (i) $\chain^\validator_{t} \preceq \chain^{\validator'}_{t'}$ or vice versa, (ii) if a block $B$ appears in $\chain^\validator_{t}$ at height $H$ at some time $t$, then $B$ appears within $\chain^{\validator'}_{t'}$ at the same height by time $t' = \max(t,\text{GST})+\Delta$, (iii) these chains satisfy liveness per Definition~\ref{def:security}.
By property (i), for all times $t$ and honest validators $\validator$, $\chain^\validator_t \preceq \chain_t := \cup_{\text{honest } \validator'} \chain^{\validator'}_t$, and for all times $t$ and $t' > t$, $\chain_t \preceq \chain_{t'}$.
Thus, if a block at height $H$ conflicts with $\chain_t$ at some time $t$, it eventually conflicts with the height $H$ blocks within any honest validator $\validator$'s chain $\chain^{\validator}$, and vice versa.
Therefore, by Alg.~\ref{alg:validator-finality} and (ii), an honest validator sends an \EOTS signature for each block in $\chain_{t}$ by time $t' = \max(t,\text{GST})+\Delta$, and only for the blocks within $\chain^\validator_{t}$ by time $t'$ (By~\Cref{prop:context}, for any honest validator and Tendermint height at which a block is confirmed, there is a public context component on Bitcoin).
Then, by the bound on the number of adversarial validators, each block $B \in \chain_t$ receives $2f+1$ \EOTS signatures
by round $\max(t,\text{GST})+\Delta$, which are observed by all clients at all times $t' \geq \max(t,\text{GST})+2\Delta$, \ie, all clients 
finalize the blocks in $\chain_t$ by $\max(t,\text{GST})+2\Delta$.

Finally, by (ii) and (iii), any transaction $\tx$ input to an honest validator at some time $t$ appears in $\chain_{t'}$ for all rounds $t' \geq \max(t,\text{GST}) + \Tconfirm + \Delta$.
Hence, $\tx$ appears in all finalized chains $\chain^{\client}_{t'}$ for all clients $\client$ and times $t' \geq \max(t,\text{GST}) + \Tconfirm + 2\Delta$, concluding liveness.
\end{proof}

\section{Proof of~\Cref{thm:economic-safety}}
\label{sec:appendix-proof-thm-economic-safety}

\paragraph{Watchtower assumption.}
The proof below assumes the existence of an honest \emph{forensic watchtower}: an online client that observes the EOTS-Final signatures broadcast by the validators and invokes the forensic protocol of \Cref{alg:slashing-condition} once two conflicting quorums are received.
Under partial synchrony, this assumption is discharged (post-GST) by any honest client connected to the validators' gossip channel.
The dynamic-stake protocol of \Cref{sec:timestamping-protocol} discharges it unconditionally by routing EOTS-Final signatures through Bitcoin's synchronous broadcast (see~\Cref{prop:bitcoin-secure}).

\begin{proof}[Proof of~\Cref{thm:economic-safety}]
By Theorem~\ref{thm:tendermint-accountability-extraction}, Tendermint with the finality gadget satisfies $(f+1)$-EOTS safety: when there is a safety violation, two conflicting EOTS-Final quorums exist at some height.
The watchtower observes both quorums and invokes the forensic protocol of~\Cref{alg:slashing-condition}, which (by \textsf{EXT-SCMA} security of EOTS, \Cref{EOTS:sec:ext-scma}) extracts the secret signing keys of at least $f+1$ adversarial validators (w.o.p.).
Using these extracted keys, the watchtower constructs and broadcasts slashing transactions to the bond contracts of the identified $f+1$ validators.

We claim that no race-spend by the adversary can divert the stake away from the burn address.
Before the timelock expires, every valid spend of the bond contract is constrained to send the deposit to an unspendable address: in the CTV variant of \Cref{alg.bond.contract}, this is enforced on-chain by \texttt{OP\_CHECKTEMPLATEVERIFY} against the pre-committed slashing template (whose output is $\mathsf{OP\_RETURN}$); in the deleted-key covenant-committee variant of \Cref{alg.bond.contract.musig}, the spend additionally requires a committee co-signature, and under the existential-honesty assumption (at least one committee member keeps its signing key private and refuses to co-sign anything other than the slashing transaction) no committee co-signature on a non-burn spend exists.
Hence, even if the adversary obtains the leaked secret key first and races to broadcast its own spend, that spend---if valid---must burn the stake.
Therefore, in the event of a safety violation, the bitcoin of $f+1$ adversarial validators is sent to the burn address (either by the honest client's slashing transaction or by an adversarial race-spend that nonetheless lands on the burn address), and is thus slashed.

On the other hand, by the existential unforgeability of the signatures of the honest validators (Def.~\ref{eots:sec:seuf-cmadc}), no slashing transaction can be created on behalf of an honest validator, and its stake cannot be slashed (w.o.p.).
Therefore, the protocol satisfies $(f+1)$-economic safety.
\end{proof}

\section{The Timestamping Protocol}
\label{sec:appendix-timestamping-protocol}

Each client and validator downloads the consumer blocks and track the timestamps and the bond contract on the provider chain.
In the rest of this and the following sections, for generality, we use the term provider chain for Bitcoin and assume that the consumer chain is a consensus protocol with the finality gadget.
Before describing the details of the timestamping protocol, we recall the parameters $k_c(\Delta t)$ and $k_f$ defined by Proposition~\ref{prop:bitcoin-secure}.
Intuitively, $k_c(\Delta t)$ denotes the number of provider blocks added to the provider chain during a time window of length $\Delta t$. Throughout the analysis, the relevant window is one consumer-chain period of $m$ heights, with duration $m \cdot \Tconfirm$, where $\Tconfirm$ is the consumer chain's confirmation latency (\Cref{def:security}); thus we use $k_c(m \cdot \Tconfirm)$ for the per-period bound.
Similarly, $k_f$ denotes the time, measured in the number of provider blocks, it takes for a message posted to the provider chain to appear in a confirmed provider block.

\setlength{\textfloatsep}{0.3cm}
\setlength{\floatsep}{0.3cm}
\begin{algorithm}
    \caption{The algorithm used by a client $\client$ to determine if a consumer block $B$ is available and \valid.
    It takes the consumer chain $\chain$ ending at $B$ and the confirmed provider chain $\powchain$ in $\client$'s view as input and outputs true if $B$ is available and \valid.
    The function $\textsc{GetValidators}$ %
    outputs the validator set determined for the next consumer chain height by an available and \valid consumer chain $\chain'$ and a confirmed provider chain $\powchain$ taken as input.
    It outputs $\bot$ if any provider block among those referred by the consumer blocks within $\chain'$ is not in $\powchain$.
    The function $\textsc{Signed}$ checks if there are $2f+1$ EOTS signatures on a given consumer block by the specified validator set.
    }
    \label{alg.bitcoin.valid}
    \begin{algorithmic}[1]\small
    \Function{\sc IsValid}{$\chain, \powchain$}
        \If{$\chain = B_0$}
            \Comment{If $C$ is the genesis consumer block}
            \State\Return $\mathrm{True}$
        \ElsIf{$\chain[0] \neq B_0$}
            \State\Return $\mathrm{False}$
        \EndIf
        \Let{B_0,\ldots,B_r}{\chain}
        \For{$i = 1$ to $r$}
            \Let{\rm vals}{\textsc{GetValidators}(\chain[:i-1], \powchain)}
            \If{$\neg\textsc{Signed}(\chain[i], \mathrm{vals}) \lor \mathrm{vals} = \bot$}
                \State\Return $\mathrm{False}$
            \EndIf
        \EndFor
        \State\Return $\mathrm{True}$
    \EndFunction
    \end{algorithmic}
\end{algorithm}

\setlength{\textfloatsep}{0.3cm}
\setlength{\floatsep}{0.3cm}
\begin{algorithm}
    \caption{The algorithm used by a client $\client$ to find the canonical consumer chain $\chain^\client_t$ at time $t$. 
    It takes as input a tree $\TT$ of available and \valid consumer blocks and the confirmed provider chain $\powchain$ in $\client$'s view at time $t$. 
    The function $\textsc{GetCkpts}$ outputs the ordered sequence of timestamps on the given provider chain $\powchain$. 
    The function $\textsc{IsCor}$ checks if a given timestamp is correct based on the height $H$ of the canonical consumer chain output so far, the tracked current period $\mathrm{per}$ and the validator set identified by this canonical consumer chain.
    The function $\textsc{ProH}$ returns the height of the provider block containing a given timestamp or referred by a given consumer block depending on its input.
    The function $\textsc{ConH}$ returns the height of the specified consumer chain or the height included within the specified timestamp.
    The function $\textsc{GetCh}$ returns the chain of available and \valid consumer blocks behind a given timestamp using $\TT$.
    It returns $\bot$ if there is an unavailable or \invalid block in the consumer chain defined by the block at the preimage of the given timestamp.
    The function $\textsc{GetProH}$ takes a consumer chain $\chain$ and a provider chain $\powchain$ as input and returns the height of the highest provider block in $\powchain$ among those referred by the consumer blocks within $\chain$ (if this highest provider block is not in $\powchain$, it returns $\bot$).
    The function $\textsc{IsLast}$ returns true if a given consumer chain ends at a block that is the last block of its period.
    The function $\textsc{GetChildren}$ returns the children of a given block. 
    }
    \label{alg.bitcoin.forkchoice}
    \begin{algorithmic}[1]\small
    \Function{\sc OutputConsumerCh}{$\TT, \powchain$}
        \Let{\mathrm{per}, \chain, H}{1, B_0, 0} \Comment{$H$: timestamped consumer height}
        \Let{\ckpt_1,\ldots, \ckpt_r}{\textsc{GetCkpts}(\powchain)}
        \Let{\mathrm{vals}}{\textsc{GetValidators}(B_0, \powchain)}
        \Let{h}{\textsc{ProH}(B_0)}
        \For{$i=1$ to $r$} \Comment{Get timestamped consumer chain}
            \If{$\textsc{IsCor}(\ckpt_i, \mathrm{vals}, H, \mathrm{per}) \land \textsc{ProH}(\ckpt_i) < h+k_d$}\label{line:isvalid}
                \Let{\chain_i}{\textsc{GetCh}(\TT, \ckpt_i)}
                \If{$\chain_i = \bot$}
                    \label{line:btc2}
                    \State\Return $\chain$~\Comment{Safe-stop rule 1}
                \ElsIf{$\chain \preceq \chain_i \land \textsc{ConH}(\chain_i)=\textsc{ConH}(\ckpt_i)$}
                    \label{line:btc1}
                    \If{$|\powchain| \geq h+k_d+k_f \land \exists \ckpt \colon (\textsc{ProH}(\ckpt) < h+k_d+k_f \land \ckpt \text{ conflicts with } \chain_i)$}
                        \label{line:btc4}
                        \State\Return $\chain$~\Comment{Safe-stop rule 2}
                    \Else
                        \Let{\chain}{\chain_i}
                        \label{line:update}
                        \Comment{Update $\chain$.}
                        \Let{H}{|\chain_i|}
                        \If{$\textsc{IsLast}(\chain_i)$}
                            \Let{h}{\textsc{GetProH}(\chain_i, \powchain)}
                            \Let{\rm per}{per+1}
                            \Let{\rm vals}{\textsc{GetValidators}(\chain_i, \powchain)}
                        \EndIf
                    \EndIf
                \EndIf
            \EndIf
        \EndFor
        \Let{\mathrm{chs}}{\textsc{GetChildren}(\TT,\chain[-1])}
        \Let{\mathrm{chs}}{\{B \colon B \in \mathrm{chs} \land  |\powchain| < \textsc{GetProH}(B.\chain, \powchain) + k_d\}}
        \label{line:btc3}
        \While{$|\mathrm{chs}| = 1$}
            \Let{\chain}{\chain \concat \mathrm{chs}}~\Comment{Add the new child to $\chain$}
            \Let{\rm chs}{\textsc{GetChildren}(\T,\mathrm{chs})}
            \Let{\mathrm{chs}}{\{B \in \mathrm{chs} \colon |\powchain| < \textsc{GetProH}(B.\chain, \powchain) + k_d\}}~\label{line:heightcheck}
        \EndWhile
        \State\Return $\chain$ 
    \EndFunction
    \end{algorithmic}
\end{algorithm}

\subsection{Determining the Validator Set}
\label{sec:pos-protocol-valset}

An honest validator includes the hash of the highest confirmed provider block in its view within the proposed consumer block.
When an honest validator first receives a consumer block $B$ proposed at a certain height, it checks if the provider block $b$ referred by the hash in $B$'s parent is confirmed in its view.
If $b$ becomes confirmed before the validator moves to the voting step of the protocol ($\mathsf{Prevote}$ in Tendermint), it continues to execute the consumer chain protocol as specified.
Otherwise, it ignores block $B$.

When a client $\client$ first downloads a consumer block $B$, it checks if the block is \emph{\valid} in its view.
The genesis consumer block, $B_0$, is assumed to be \valid and specifies the initial validator set by referring to a provider block containing the bonding transactions for this initial set.
Validity of any other consumer block $B$ at height $|B|$ is determined by $\client$ according to the following rules: (i) 
there are $2f+1$ \EOTS signatures for $B$ with context $|B|$ from the correct validator set for height $|B|$, (ii) the provider block $b$ referred by $B$ is confirmed in $\client$'s view, and (iii) $B$'s parent is \valid (Alg.~\ref{alg.bitcoin.valid}).
If so, $\client$ accepts $B$ as a \emph{\valid} consumer block.

The validator set stays fixed during periods of $m$ consecutive consumer chain heights, where $m$ can be as little as $1$.
Suppose a client $\client$ wants to determine the validator set for period $e$ after observing \valid blocks for the periods $1, \ldots, e-1$.
Let $b$ be the highest confirmed provider block referred by the consumer blocks from periods $1, \ldots, e-1$ in $\client$'s view at some time $t$.
Then, $\client$ determines the validator set for period $e$ as those who have bonded their stake (on the provider chain) by the provider block $b$, and whose validator duties have not ended by $b$. %

\begin{lemma}[Validator-set agreement]
\label{lem:valset-agreement}
Suppose Bitcoin is secure (\Cref{prop:bitcoin-secure}).
For any consumer-chain period $e$ and any two honest clients $\client_1, \client_2$ that have observed \valid consumer blocks for periods $1, \ldots, e-1$, there exists a finite time at which $\client_1$ and $\client_2$ agree on the validator set assigned to period $e$.
\end{lemma}
\begin{proof}
By~\Cref{alg.bitcoin.valid}, validity of a consumer block requires the referenced provider block to be confirmed in the client's view.
Since both $\client_1$ and $\client_2$ have observed \valid blocks for periods $1, \ldots, e-1$, both eventually observe the same confirmed provider chain prefix containing those references (by Bitcoin safety, \Cref{prop:bitcoin-secure}).
The validator set for period $e$ is a deterministic function of the bonding transactions referenced in this prefix, computed identically by all honest clients; hence $\client_1$ and $\client_2$ agree on the period-$e$ validator set as soon as both have processed the provider blocks referenced by the period $e-1$ consumer blocks.
\end{proof}

For ease of description, in the rest of this section, we assume that there are $n = 3f+1$ bonded validators, each with equal stake, at each period $e \in \mathbb{Z}^+$.
We note that our protocol accommodates different numbers of validators at different periods with inhomogeneous stake amounts.
In the latter case, the voting power of the validators must be scaled in proportion to the fraction of their stake within their period's total stake.
Then, we can guarantee that if a safety violation is committed across blocks within a period $e$ with a total stake of $p$, at least $p/3$ tokens belonging to the adversarial validators can be slashed.

\subsection{Bonding and Unbonding}
To join the validator set, a validator locks its stake in the provider chain's bond contract via a bonding transaction.
Upon bonding its stake at some provider block $b$, it can act as a validator for the consumer chain heights described above while it continues its validator duties.
The validator duties end at the provider block $b'$ that extends $b$ by some fixed amount $K_a$ determined by the protocol.
Afterwards, the validator can retrieve its stake at or after the provider block extending $b'$ by $k_u$ blocks (\ie, extending $b$ by $K_a+k_u$ blocks on Bitcoin), where $k_u = 2 k_c(m \cdot \Tconfirm) + 4 k_f$ and it is called the \emph{unbonding delay}.
To prevent early unbonding, the bond contract enforces a timelock on the bonded stake until the $(K_a+k_u)$-th block extending $b$ (\cf Algs.~\ref{alg.bond.contract} and~\ref{alg.bond.contract.musig}).

\subsection{Generating the Context Tuples} 
Each validator $\validator$ periodically generates context tuples 
$$(\sct_{H,\validator}, \pct_{H,\validator}) \gets \mathsf{EOTS\textnormal{-}ContextGen}(1^{\kappa})$$
for a window of $w_0$ consumer chain heights $H \in \{w+1, \ldots, w+w_0\}$ and publishes the public components $\pct_{H,\validator}$ on Bitcoin.
Prior to bonding its stake, $\validator$ posts $\pct_{H,\validator}$ for an estimated initial window of consumer chain heights for which it might be determined as a validator per the rule above.
It periodically posts new $\pct_{H,\validator}$ during the $K_a$ Bitcoin blocks of its time as a validator.

More specifically, let $w$ denote the height of the consumer chain right before $\validator$ inputs its bonding transaction to Bitcoin.
Then, $\validator$ initially generates $\pct_{H,\validator}$ for a window of heights $H \in \{w+1, \ldots, w+w_0\}$.
In the worst case, $\validator$ might be selected as a validator for a consumer chain height greater than $w$, and some context components remain unused.
However, in any case, $\validator$ will be able to generate EOTS signatures that verify with respect to the $\pct_{H,\validator}$ on Bitcoin for all the consumer chain heights at which it is selected as a validator.

As before, the validators can minimize the Bitcoin footprint by posting a Merkle root; but for simplicity, we assume that the components $\pct_{H,\validator}$ are published in the clear.

\subsection{Timestamping on the Provider Chain}
Each honest validator periodically sends the \emph{timestamp} of the last block of each period to the provider chain.
To avoid duplicate timestamps, a single client or validator called the \emph{watchtower} can be tasked with timestamping new blocks.
The timestamp of a consumer block consists of the hash of the block, its height and the quorum of $2f+1$ \EOTS signatures on the block (with context equal to its height) by the corresponding validator set.
Note that the period $e$ of a consumer block can be found by dividing its height $H$ with $m$ (\ie, $e = \lfloor H/m \rfloor +1$).

Two timestamps are said to conflict if they both include (i) the same consumer block height $H$, (ii) different consumer block hashes, and (iii) $2f+1$ \EOTS signatures with height $H$ as context on their respective consumer block hashes.

\subsection{Block Output Rules (Alg.~\ref{alg.bitcoin.forkchoice})}
When there is a posterior corruption attack, a client $\client$ might observe conflicting \valid consumer blocks finalized by the same validator set.
In this case, $\client$ wants to identify and output only the \emph{canonical} consumer chain consisting of blocks signed earlier in time.
Towards this goal, it first downloads the blocktree of all \valid consumer blocks.
Let $\ckpt_i$, $i \in [r]$, denote the sequence of timestamps on the provider chain, listed from the genesis to the tip of the chain (denoted by $\powchain^{\client}_t$) in $\client$'s view at time $t$.
Starting at the genesis consumer block, $\client$ constructs the canonical consumer chain (denoted by $\chain^{\client}_t$) one block at a time, by sequentially processing these timestamps.
For $i=1,\ldots, r$, let $\chain_i$ denote the chain ending at the consumer block (denoted by $B_i$), which is the preimage of the hash within $\ckpt_i$, if $B_i$ and its prefix are available and \valid in $\client$'s view at time $t$.
Suppose $\client$ has processed the timestamp sequence until some timestamp $\ckpt_j$ and constructed so far as its canonical consumer chain, the chain $\chain$ of available and \valid consumer blocks ending at some block $B$ with consumer chain height $H$ and period $e$.
Define $\tilde{e}=e+1$ if $B$ is the last block of its period; and $\tilde{e}=e$ otherwise.
Let $h_{\tilde{e}-1}$ denote the height of the highest confirmed provider block referred by the consumer blocks within the periods $1, \ldots, \tilde{e}-1$.
We call the next timestamp $\ckpt_{j+1}$ \emph{correct}, if (i) the height $H_{j+1}$ included in $\ckpt_{j+1}$ is larger than $H$ and $\lfloor H_{j+1}/m \rfloor+1 = \tilde{e}$, (ii) $\ckpt_{j+1}$ includes over $2f+1$ \EOTS signatures on its consumer block hash with context equal to height $H_{j+1}$ by the validator set of period $\tilde{e}$, and (iii) $\ckpt_{j+1}$ appears at a provider chain height less than $h_{\tilde{e}-1} + k_d$, where $k_d = 2 k_c(m \cdot \Tconfirm) + k_f$ is called the \emph{timestamp delay} (Line~\ref{line:isvalid}, Alg.~\ref{alg.bitcoin.forkchoice}).
The items (i) and (ii) above are checked for a timestamp by the function $\textsc{IsCor}(.)$ in Alg.~\ref{alg.bitcoin.forkchoice}.
Then;
\begin{enumerate}
    \item \textbf{Safe-stop Rule 1:} (Line~\ref{line:btc2}, Alg.~\ref{alg.bitcoin.forkchoice}) If (i) the timestamp $\ckpt_{j+1}$ is correct, and (ii) a block in $\chain_{j+1}$ is either unavailable or \invalid in $\client$'s view, then $\client$ stops going through the sequence $\ckpt_i$, $i \in [r]$, and outputs $\chain$ as its canonical consumer chain\footnote{Client $\client$ knows the correct validator set for all periods $e \leq \tilde{e}$. This is because every block in its current canonical consumer chain $\chain$ is available and \valid in its view. In particular, if $B$ is the last block of period $e$, $\client$ can infer the validator set of period $e+1$ from $\chain$.}.
    \item (Line~\ref{line:btc1}, Alg.~\ref{alg.bitcoin.forkchoice}) If (i) the timestamp $\ckpt_{j+1}$ is correct, (ii) every block in $\chain_{j+1}$ is available and \valid in $\client$'s view, (iii) the chain $\chain_{j+1}$ is of the height specified by $\ckpt_{j+1}$, and (iv) $\chain \preceq \chain_{j+1}$ (\ie, $\chain_{j+1}$ is consistent with the consumer chain output so far);
    \begin{itemize}
        \item \textbf{Safe-stop Rule 2:} (Line~\ref{line:btc4}, Alg.~\ref{alg.bitcoin.forkchoice}, Fig.~\ref{fig:safe-stop-1}) If $|\powchain^{\client}_t| \geq h_{\tilde{e}-1}+k_d+k_f$ and there is a correct timestamp at a provider chain height less than $h_{\tilde{e}-1}+k_d+k_f$ conflicting with $\chain_{j+1}$, then $\client$ stops going through the sequence $\ckpt_i$, $i \in [r]$, and outputs $\chain$ as its canonical consumer chain.
        \item \textbf{Update:} (Line~\ref{line:update}, Alg.~\ref{alg.bitcoin.forkchoice}) If $|\powchain^{\client}_t| < h_{\tilde{e}-1}+k_d+k_f$, or if $|\powchain^{\client}_t| \geq h_{\tilde{e}-1}+k_d+k_f$ and there is no correct timestamp at provider chain heights less than $h_{\tilde{e}-1}+k_d+k_f$ conflicting with $\chain_{j+1}$, then $\client$ sets $\chain_{j+1}$ as the new canonical chain ($\chain \leftarrow \chain_{j+1}$) and moves to $\ckpt_{j+2}$ as the next timestamp.
    \end{itemize}
    \item \textbf{Ignore:} If none of the cases above are satisfied, $\client$ ignores $\ckpt_{j+1}$ and moves to $\ckpt_{j+2}$ as the next timestamp.
\end{enumerate}

Unless one of the safe-stop rules is triggered, $\client$ processes all timestamps on its provider chain and identifies a timestamped canonical consumer chain ending at some block $B_\ell$ from period $e_\ell$.
Let $h_{\ell-1}$ denote the height of the highest confirmed provider block referred by the consumer blocks by the end of period $e_\ell-1$.
Then, starting at $B_\ell$, $\client$ complements the timestamped canonical chain by outputting a chain of available and \valid consumer blocks uniquely extending $B_\ell$, as long as the height of $\client$'s provider chain is less than $h_{\ell-1}+k_d$ (Line~\ref{line:heightcheck}, Alg.~\ref{alg.bitcoin.forkchoice}).
This ensures that the clients output consumer chain blocks as soon as they are finalized.

\subsection{Enforcing Slashing on the Provider Chain}

In certain cases, clients 
send timestamps to the provider chain in addition to the periodic timestamps for the last consumer block of every period, to warn other clients about a potential safety violation.
If there were indeed a safety violation, these extra timestamps ensure the extraction of the adversarial validators' secret keys, and thus slashing of their stake.
Let $\chain$ denote the canonical consumer chain in $\client$'s view and $e$ denote the period of the last block in $\chain$. 
Let $h$ denote the height of the highest provider block in $\client$'s provider chain, among those referred by the consumer blocks (all of which are \valid by definition) in $\chain$ from periods $1, \ldots, e-1$.
Then, if any of the following happens, $\client$ sends a timestamp to the provider chain for \emph{all} of the consumer blocks within $\chain$ that follow the last consumer block in $\chain$ with a correct timestamp on the provider chain \emph{before or at block} $h' - k_d - k_f$, where $h'$ denotes the height of $\client$'s provider chain: 
\begin{enumerate}
    \item Client $\client$ decides to go offline. (In this case, it must notify other clients about its view of the finalized consumer blocks; otherwise economic security is impossible.)
    \item The safe-stop rule 1 is triggered for client $\client$.
    \item The safe-stop rule 2 is triggered for client $\client$.
    \item Client $\client$ does not observe a correct timestamp on its provider chain for the last consumer block $B$ from period $e$ before the provider chain reaches height $h+k_d$. 
\end{enumerate}

Upon obtaining two quorums of $2f+1$ \EOTS signatures for the same height but two different block hashes, any client can extract the secret keys of $f+1$ validators using Alg.~\ref{alg:slashing-condition} and send this evidence to the respective bond contracts to slash their stake.

\subsection{Mismatched Timestamp Attack and Safe-stop Rule 2}
\label{sec:stopping-rule-2}

\begin{figure*}
{
    \centering
    \includegraphics[width = 0.9\linewidth]{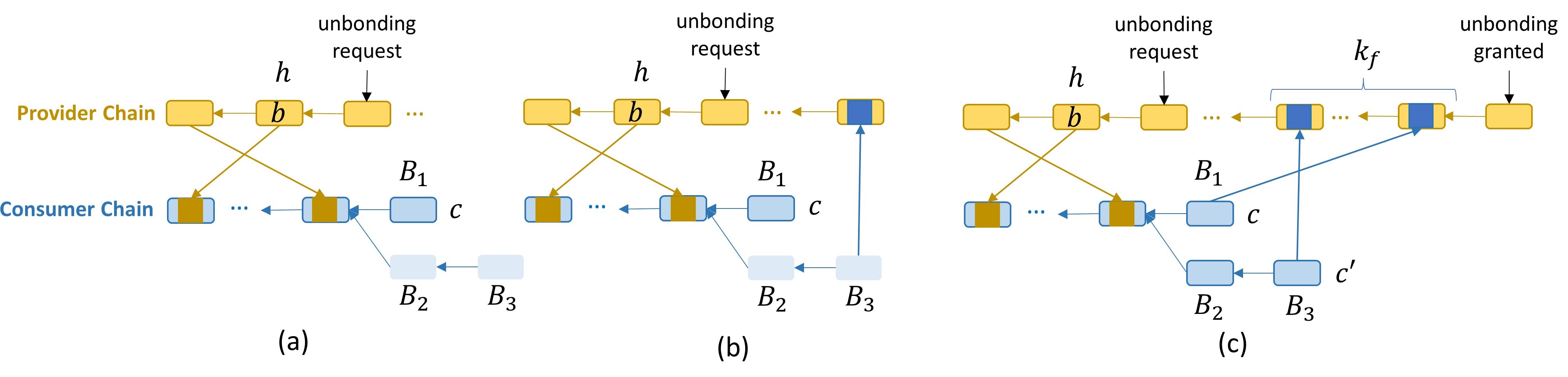}
    \caption{Illustration of the mismatched timestamp attack and the safe-stop rule 2 (\cf Section~\ref{sec:stopping-rule-2}). 
    }
\label{fig:safe-stop-2}
}
\end{figure*}

In a mismatched timestamp attack, the adversary exploits the fact that we cannot always extract the adversarial validators' keys by only inspecting the timestamps on Bitcoin (as opposed to observing the conflicting consumer blocks).
Suppose over $2/3$ of the validators specified by $b$ are adversarial and create three conflicting consumer blocks, $B_1$, $B_2$ and $B_3$.
Here, the adversary reveals $B_1$ to a client $\client$, but keeps $B_2$ and $B_3$ private (Fig.~\ref{fig:safe-stop-2}-a).
However, $B_3$ is timestamped before $B_1$ on Bitcoin (Fig.~\ref{fig:safe-stop-2}-b).
Upon seeing $B_3$'s timestamp, whose block $B_3$ is either unavailable, or conflicting with $B_1$, $\client$ sends a timestamp of $B_1$ to Bitcoin to notify the future clients about a potential safety violation.
$B_1$'s timestamp appears on Bitcoin within a few blocks of the timestamp for $B_3$.
Now, depending on the design of the protocol, there are two possibilities:

\noindent
\textbf{1) Frequent timestamps:} If every consumer block is to be timestamped in order, the adversary must first send a timestamp of $B_2$ to Bitcoin before $B_3$'s timestamp is accepted by the clients.
Then, by sending a timestamp of $B_1$, $\client$ would have notified the clients about the adversarial validators that have finalized conflicting blocks with their signatures.
Therefore, these validators can be slashed before they unbond.
However, this solution requires frequent timestamping, which might not be suitable for Bitcoin.
    
\noindent
\textbf{2) Rare timestamps:} 
In this case, the adversary can directly timestamp $B_3$ without sending a timestamp of $B_2$ or any of the blocks in $B_3$'s prefix, which are kept hidden from $\client$.
Then, the clients cannot necessarily expose the adversary's secret keys without observing the \EOTS signatures on $B_2$.
Thus, the adversarial validators will be allowed to unbond.
Finally, suppose a late-coming client $\client'$ observes the system after the adversary unbonded its stake (Fig.~\ref{fig:safe-stop-2}-c).
At this point, the adversary shows the previously unavailable blocks $B_2$ and $B_3$ to $\client'$.
If $\client'$ decides to output $B_2$ and $B_3$ instead of $B_1$ as chains with earlier timestamps take precedence, it would cause a non-slashable safety violation by conflicting with $\client$ that has output $B_1$.
Therefore, to prevent such safety violations, $\client'$ does not output consumer blocks whose timestamp conflicts with another timestamp within vicinity (\eg, $k_f$ blocks) of the original timestamp on Bitcoin (safe-stop rule 2).

\section{Proof of Theorem~\ref{thm:safety-smart-contracts}}
\label{sec:proof-of-thm-safety}

\begin{proof}[Proof of Theorem~\ref{thm:safety-smart-contracts}]
Suppose there are two clients $\client_1$ and $\client_2$ such that the canonical consumer chains $\chain^{\client_1}_{t_1}$ and $\chain^{\client_2}_{t_2}$ are not consistent.
Let $B_1$ denote the consumer block with the smallest height in $\chain^{\client_1}_{t_1}$ among those conflicting with $\chain^{\client_2}_{t_2}$.
Similarly, let $B_2$ denote the block with the smallest height in $\chain^{\client_2}_{t_2}$ among those conflicting with $\chain^{\client_1}_{t_1}$.
Let $B_0$ denote the common parent of $B_1$ and $B_2$.

Suppose $\client_1$ first outputted $B_1$ at some time $t_a$, and $\client_2$ first outputted $B_2$ at some time $t_b$ as part of its canonical consumer chain.
The validator set for the height of $B_1$ and $B_2$ is determined by the unique highest provider block $b \in \powchain^{\client_1}_{t_a}, \powchain^{\client_2}_{t_b}$, with height $h$, among those referred by the consumer blocks ending at the largest completed period at or before $B_0$ (this block is unique by the security of the provider chain).
Next, we consider the following cases:

\noindent
\textbf{Case A:} $|\powchain^{\client_1}_{t_a}| \geq h + k_d + k_f$ and $|\powchain^{\client_2}_{t_b}| \geq h + k_d + k_f$. 
Then, both $\client_1$ and $\client_2$ must have respectively output $B_1$ and $B_2$ at Line~\ref{line:update}, Alg.~\ref{alg.bitcoin.forkchoice} upon observing the correct timestamps $\ckpt_1, \ckpt_2 \in \powchain^{\client_1}_{t_a}, \powchain^{\client_2}_{t_b}$ at heights less than $h+k_d$.
Without loss of generality, suppose $\ckpt_1$ appears in the prefix of $\ckpt_2$.
In this case, if every block in the consumer chain determined by $\ckpt_1$ is available and \valid in $\client_2$'s view at time $t_b$, then $\client_2$ would also output $B_1$ upon observing $\ckpt_1$.
Thus, there must be a block within the consumer chain determined by $\ckpt_1$ that is unavailable or \invalid in $\client_2$'s view at time $t_b$.
However, in this case, the safe-stop rule 1 is triggered for $\client_2$ upon observing $\ckpt_1$, and it does not output $B_2$ (Line~\ref{line:btc2}, Alg.~\ref{alg.bitcoin.forkchoice}).
Hence, case A cannot happen.

\noindent
\textbf{Case B:} $|\powchain^{\client_1}_{t_a}| < h + k_d + k_f$ and $|\powchain^{\client_2}_{t_b}| < h + k_d + k_f$.
Then, one of the following cases must have happened:
\begin{itemize}
    \item \textbf{Case 1:} Safe-stop rule 1 is triggered for $\client_1$ before its provider chain reaches height $h + k_d$. 
    \item \textbf{Case 2:} Client $\client_1$ decides to go offline before its provider chain reaches height $h + k_d$. 
    \item \textbf{Case 3:} Neither of the cases 1 and 2 happen until $\client_1$'s provider chain reaches height $h + k_d$. However, $\client_1$ does not observe any correct timestamp for $B_1$ or its descendants by the time its provider chain reaches height $h + k_d$. 
    \item \textbf{Case 4:} Neither of the cases 1 and 2 happen until $\client_1$'s provider chain reaches height $h + k_d$. Client $\client_1$ observes a correct timestamp $\ckpt_1$ for $B_1$ or its descendants on its provider chain at a height less than $h + k_d$, and every block timestamped by $\ckpt_1$ is available and \valid in $\client_1$'s view.
\end{itemize}

\begin{itemize}
    \item \textbf{Case I:} Safe-stop rule 1 is triggered for $\client_2$ before its provider chain reaches height $h + k_d$. 
    \item \textbf{Case II:} Client $\client_2$ decides to go offline before its provider chain reaches height $h + k_d$. 
    \item \textbf{Case III:} Neither of the cases I and II happen until $\client_2$'s provider chain reaches height $h + k_d$. However, $\client_2$ does not observe any correct timestamp for $B_2$ or its descendants by the time its provider chain reaches height $h + k_d$. 
    \item \textbf{Case IV:} Neither of the cases I and II happen until $\client_2$'s provider chain reaches height $h + k_d$. Client $\client_2$ observes a correct timestamp $\ckpt_2$ for $B_2$ or its descendants on its provider chain at a height less than $h + k_d$, and every block timestamped by $\ckpt_2$ is available and \valid in $\client_2$'s view.
\end{itemize}

If cases 1, 2 or 3 (I, II or III) happen, $\client_1$ ($\client_2$) sends timestamps to the provider chain for \emph{all} of the blocks within its canonical consumer chain that follow the last consumer block with a correct timestamp on the provider chain at least before $h$, \ie, at least all blocks following $B_0$. 

Then, we can deduce the following:
\begin{itemize}    
\item \textbf{(1 and I), (1 and II), (1 and III), (2 and I), (2 and II), (2 and III), (3 and I), (3 and II), (3 and III):} In these cases, all online clients learn about the conflicting blocks $B_1$ and $B_2$, before the provider chain in its view reaches height $h + k_d + 2k_f$.
\item \textbf{(4 and I), (4 and II), (4 and III):} Either $\client_1$ learns about the conflicting blocks $B_1$ and $B_2$, or goes offline, before the provider chain reaches height $h + k_d + k_f$ in its view. 
In the latter case, $\client_1$ sends timestamps to the provider chain for all of its consumer blocks following $B_0$, and an online client learns about the conflicting blocks $B_1$ and $B_2$ before the provider chain reaches height $h + k_d + 2k_f$ in its view.
\item \textbf{(1 and IV), (2 and IV), (3 and IV):} This is the same as the cases above, with the roles of $\client_1$ and $\client_2$ reversed.
\item \textbf{(4 and IV):} 
Without loss of generality, assume that $\ckpt_2$ appears earlier than $\ckpt_1$ on the provider chain.
Suppose at time $t_a$, $\client_1$'s provider chain has height $\leq h+k_d$.
Then, $\client_1$ observes $\ckpt_2$ on its provider chain by the time it reaches height $h+k_d$.
In this case, either a block in the consumer chain determined by $\ckpt_2$
is unavailable or invalid in $\client_1$'s view, in which case safe-stop rule 1 is triggered for $\client_1$.
Or, $\client_1$ observes a correct timestamp ($\ckpt_2$) on its provider chain before height $h+k_d$, and the consumer chain at its preimage conflicts with $B_1$.
Thus, $\client_1$ either learns about the conflicting blocks, or sends timestamps for all of its consumer blocks following $B_0$ to the provider chain.

Now, suppose at time $t_a$, $\client_1$'s provider chain has height $>h+k_d$.
Now, if every block in the consumer chain determined by $\ckpt_2$ is available and \valid in $\client_1$'s view at that time, then $\client_1$ would also output $B_2$ upon observing $\ckpt_2$ by time $t_a$.
Thus, there must be a block within the consumer chain determined by $\ckpt_2$ that is unavailable or \invalid in $\client_1$'s view at that time.
However, in this case, the safe-stop rule 1 is triggered for $\client_1$ upon observing $\ckpt_2$, and it does not output $B_1$ (Line~\ref{line:btc2}, Alg.~\ref{alg.bitcoin.forkchoice}).
Hence, this case cannot happen.

Finally, by a symmetric argument, if at time $t_b$, $\client_2$'s provider chain has height $\leq h+k_d$, then $\client_2$ either learns about the conflicting blocks, or sends timestamps for all of its consumer blocks following $B_0$ to the provider chain by the time its provider chain reaches height $h+k_d$.
In this case, all online clients learn about the conflicting blocks $B_1$
and $B_2$, before the provider chain
reaches height $h + k_d + k_f$ in any view.
If at time $t_b$, $\client_2$'s provider chain has height $>h+k_d$, then either the safe-stop rule 1 is triggered for $\client_2$ upon observing $\ckpt_1$, or it learns about the conflicting blocks $B_1$
and $B_2$ at time $t_b$.
In either case, an online client learns about the conflicting blocks $B_1$ and $B_2$, before the provider chain reaches height $h + k_d + 2k_f$ in any view.
\end{itemize}

\noindent
\textbf{Case C:} $|\powchain^{\client_1}_{t_a}| < h + k_d + k_f$ and $|\powchain^{\client_2}_{t_b}| \geq h + k_d + k_f$.
In this case, $\client_2$ must have output $B_2$ at Line~\ref{line:update}, Alg.~\ref{alg.bitcoin.forkchoice} upon observing a timestamp $\ckpt_2 \in \powchain^{\client_2}_{t_b}$ at a height less than $h+k_d$.
Then, depending on which of the four cases 1-2-3-4 is true for $\client_1$, we investigate the following events:

\begin{itemize}    
\item \textbf{1-2 and $|\powchain^{\client_2}_{t_b}| \geq h + k_d + k_f$:} In this case, $\client_2$ observes a timestamp on its provider chain, before height $h+k_d+k_f$, that conflicts with $\ckpt_2$.
Then, $\client_2$ does not output $B_2$ upon observing the timestamp $\ckpt_2 \in \powchain^{\client_2}_{t_b}$ due to the safe-stop rule 2 (Line~\ref{line:btc4}, Alg.~\ref{alg.bitcoin.forkchoice}).
Hence, this case cannot happen.
\item \textbf{3-4 and $|\powchain^{\client_2}_{t_b}| \geq h + k_d + k_f$:}
Suppose at time $t_a$, $\client_1$'s provider chain has height less than $h+k_d$.
Then, $\client_1$ observes $\ckpt_2$ on its provider chain by the time it reaches height $h+k_d$.
In this case, either a block in the consumer chain determined by $\ckpt_2$
is unavailable or invalid in $\client_1$'s view, in which case safe-stop rule 1 is triggered for $\client_1$.
Or, $\client_1$ observes a correct timestamp ($\ckpt_2$) on its provider chain before height $h+k_d$, and the consumer chain at its preimage conflicts with $B_1$.
Therefore, $\client_1$ either learns about the conflicting blocks, or sends timestamps for all of its consumer blocks following $B_0$ to the provider chain before the provider chain reaches height $h + k_d$ in its view.
In the latter case, $\client_2$ does not output $B_2$ upon observing the timestamp $\ckpt_2 \in \powchain^{\client_2}_{t_b}$ due to the safe-stop rule 2 (Line~\ref{line:btc4}, Alg.~\ref{alg.bitcoin.forkchoice}).
Hence, this case cannot happen.

Now, suppose at time $t_a$, $\client_1$'s provider chain has height $h+k_d$ or more.
Now, if every block in the consumer chain determined by $\ckpt_2$ is available and \valid in $\client_1$'s view at that time, then $\client_1$ learns about the conflicting blocks at time $t_a$.
On the other hand, if there is a block within the consumer chain determined by $\ckpt_2$ that is unavailable or \invalid in $\client_1$'s view at time $t_a$, then the safe-stop rule 1 is triggered for $\client_1$ upon observing $\ckpt_2$, and it sends timestamps for all of its consumer blocks following $B_0$ to the provider chain before the provider chain reaches height $h + k_d + k_f$ in its view.
In either case, an online client learns about the conflicting blocks $B_1$ and $B_2$, before the provider chain reaches height $h + k_d + 2k_f$ in any view.
\end{itemize}

\noindent
\textbf{Case D:} $|\powchain^{\client_1}_{t_a}| \geq h + k_d + k_f$ and $|\powchain^{\client_2}_{t_b}| < h + k_d + k_f$. 
This is the same as case C, with the roles of $\client_1$ and $\client_2$ reversed.

Finally, we observe that in all possible cases, the watchtower $\client$ learns about the conflicting blocks $B_1$ and $B_2$ at the same height $h'$, along with
two quorums of $2f+1$ height $h'$ \EOTS signatures $\langle \mathsf{Final}, h', id(B_1) \rangle$ and $\langle \mathsf{Final}, h', id(B_2) \rangle$ for these blocks, both before the provider chain reaches height $h+k_d+2k_f$.
Upon obtaining the two quorums or the evidence, the watchtower invokes the forensic protocol, which identifies $f+1$ adversarial validators as protocol violators either by the accountable safety of the consumer chain, or by intersecting the two \EOTS signature quorums as they have satisfied the condition in Alg.~\ref{alg:slashing-condition}.
In the latter case, by \textsf{EXT-SCMA} security of EOTS (Def.~\ref{EOTS:sec:ext-scma}), the forensic protocol can extract their secret signing keys (w.o.p.), before the provider chain reaches height $h+k_d+2k_f$ in $\client$'s view.
Then, in either case, the watchtower $\client$ uses the extracted secret keys to construct and broadcast a slashing transaction to the bond contract, which is confirmed in the provider chain before it reaches height $h+k_d+3k_f = h+k_u$ in the view of any client.
Before this height, the bond contract restricts every valid spend to a burn-address output---enforced on-chain by \texttt{OP\_CHECKTEMPLATEVERIFY} in the CTV variant of \Cref{alg.bond.contract}, or by the deleted-key covenant committee under existential honesty in the variant of \Cref{alg.bond.contract.musig}.
Hence, even an adversarial race-spend with the leaked key can only burn the stake.
Since none of the $f+1$ identified validators can recover their stake to a self-controlled address before height $h+k_u$ (due to the timelock), and the slashing transaction (or an equivalent race-spend) confirms before that height, $f+1$ adversarial validators get slashed. 

Since honest validators send at most one \EOTS signature per height, for any honest validator, given the set $Q$ of message, height, signature tuples returned by the validator, $\forall (h, B, B')$ such that $ (id(B), h, .) \in Q \land (id(B'), h, .) \in Q$, it holds that $id(B) = id(B')$.
Thus, by Defs.~\ref{def:accountable-safety} and~\ref{eots:sec:seuf-cmadc}, no honest validator's stake can be slashed.
Therefore, the Bitcoin staking protocol satisfies $(f+1)$-economic safety.
\end{proof}

\section{Proof of Theorem~\ref{thm:liveness-smart-contracts}}
\label{sec:proof-of-thm-liveness}

For the liveness result, we assume a synchronous or a partially synchronous network, where GST is sufficiently bounded.
This is because an arbitrarily large GST prevents the liveness of the underlying consumer chain for long intervals, during which the validators assigned to the pending period of $m$ consumer blocks can unbond on the provider chain before their period is completed (which can only happen after GST).
We also assume the consumer-chain period length $m$ satisfies $m \geq f+1$, ensuring that each period contains at least one honest proposer under Tendermint's round-robin proposer selection (since at most $f$ adversaries can consecutively hold proposer slots).

\begin{proof}[Proof of Theorem~\ref{thm:liveness-smart-contracts}]

We prove the theorem by induction on the periods of consumer blocks.
Let $h$ denote the height of the provider block $b_0$ referred by the genesis consumer block $B_0$.
Over $2f+1$ validators within the initial validator set $S_0$ are honest.

\textbf{Induction Hypothesis:} Only a single \valid consumer block can gather $2f+1$ \EOTS signatures at any height of period $e$.
Safe-stop rules cannot be triggered for any client by the timestamps from periods $1, \ldots, e$.
By the time all relevant honest validators have entered period $e$, the highest provider block $b_{e-1}$ (at height $h_{e-1}$) referred by the blocks within the past periods $1, \ldots, e-1$ is at most $k_c(m \cdot \Tconfirm)$ deep in the provider chain of any client.
Correct timestamps of the available and \valid consumer blocks from period $e$ appear on the provider chain by height $h_{e-1} + k_d$.

\textbf{Base step:} Only a single \valid consumer block can gather $2f+1$ \EOTS signatures at any consumer chain height of the first period.
Therefore, all consumer blocks of period $e=1$ become confirmed and gather $2f+1$ \EOTS signatures within $\Theta(m \cdot \Tconfirm)$ time by the security of Tendermint (\cite[Lemmas 3, 4, 7]{2018tendermint}), during which the provider chain advances by less than $k_c(m \cdot \Tconfirm)$ blocks in the view of any client.
Thus, by the time all relevant honest validators have entered period $2$, the highest provider block $b_1$ referred by the consumer blocks of the first period is at most $k_c(m \cdot \Tconfirm)$ deep in the provider chain of any client (and no validator of period $1$ could have unbonded during period 1).
Furthermore, a timestamp of the blocks in the first period appears on all provider chains before height $h + k_c(m \cdot \Tconfirm) + k_f < h + k_d$, and all blocks attested by the timestamps of the first period are available, \valid and consistent, implying that the clients keep outputting confirmed consumer blocks and that the safe-stop rules 1 and 2 cannot be triggered for any client.

\textbf{Inductive step:} Suppose the induction hypothesis holds for all periods $1, \ldots, e-1$.
At least $2f+1$ of the validators assigned to period $e$ are honest by assumption.
By the induction hypothesis and the honesty assumption, only a single \valid consumer block can become confirmed and gather $2f+1$ \EOTS signatures at any height of period $e$.
For the same reason, all blocks attested
by the periods $e$ timestamps are available, valid and consistent, implying that the clients keep outputting
confirmed consumer blocks and that the safe-stop rules 1 and 2 cannot be triggered for any client by a period $e$ timestamp.

By the induction hypothesis, by the time all relevant honest validators have entered period $e$, the highest provider block $b_{e-1}$ (at height $h_{e-1}$) referred by the blocks within the past periods $1, \ldots, e-1$ is at most $k_c(m \cdot \Tconfirm)$ deep in the provider chain of any client.
All consumer blocks of period $e$ become confirmed (and gather $2f+1$ \EOTS signatures) within $\Theta(m \cdot \Tconfirm)$ time by the security of Tendermint (\cite[Lemmas 3, 4, 7]{2018tendermint}), during which the provider chain advances less than $k_c(m \cdot \Tconfirm)$ blocks in the view of any client.
Therefore, as $2 k_c(m \cdot \Tconfirm) < k_u$, no period $e$ validator could have unbonded during this time, and
by the time all relevant honest validators have entered period $e+1$, the highest provider block $b_e$ referred by the consumer blocks of periods $1, \ldots, e$ is at most $k_c(m \cdot \Tconfirm)$ deep in the provider chain of any client.
Furthermore, a timestamp of the blocks in period $e$ appears on the provider chain of all clients before height $h_{e-1} + k_f + 2 k_c(m \cdot \Tconfirm) \leq h + k_d$.

Finally, since there is an honest block among the $m$ finalized blocks of any periods w.o.p. and by the induction hypothesis,
liveness is satisfied w.o.p.
\end{proof}

Note that when the number of adversarial validators in any window of provider blocks is $\leq f$ and the clients remain online, validators do not send extra timestamps to the provider chain as the cases 1-2-3-4 used to enforce slashing on the provider chain are never triggered.

\section{Lack of Accountable Safety in Tendermint}
\label{sec:appendix-lack-of-acc-safety}
In this section, we show that Tendermint as it stands actually lacks accountable safety.
However, with our finality gadget, it can be made EOTS, and by implication, accountably-safe, thus, can be used as part of the remote staking protocol.
\subsection{Locking in Tendermint}
Each honest validator maintains four variables throughout the protocol execution: $\mathsf{lockedValue}$, $\mathsf{lockedRound}$, $\mathsf{validValue}$ and $\mathsf{validRound}$.
The $\mathsf{lockedValue}$ denotes the most recent block, \ie the one with the largest round, for which the validator sent a precommit, and $\mathsf{lockedRound}$ denotes the round of this precommit.
Similarly, $\mathsf{validValue}$ denotes the most recent block for which the validator has observed $2f+1$ prevotes by distinct validators, and $\mathsf{validRound}$ denotes this round.
If a validator has received a proposal $\langle \mathsf{Proposal}, H, r, v, vr \rangle$ from the round leader before entering the $\mathsf{Prevote}$ step and its $\mathsf{lockedRound} = -1$, \ie it is not locked on any block, it sends a prevote for the proposed block.
Otherwise, if its $\mathsf{lockedRound} > -1$, \ie it is locked on a block $\mathsf{lockedValue}$, the validator checks if either $v$ is the same as its $\mathsf{lockedValue}$ \textbf{(voting rule 1)} or if it has observed $2f+1$ round $vr$ prevotes $\langle \mathsf{Prevote}, H, vr, id(v) \rangle$ for $v$, such that $vr > \mathsf{lockedRound}$ \textbf{(voting rule 2)}.
If either of the voting rules is satisfied, it sends a prevote for the proposed block.
Otherwise, it sends a prevote with the \emph{nil} value.

\subsection{Accountable Safety for Tendermint}
\label{sec:acc-safety-rules}

If two clients finalize conflicting blocks $B$ and $B'$ at the same round, then they can identify the adversarial validators that sent precommits for both blocks by inspecting the $2f+1$ precommits for $B$ and $B'$.
However, when the conflicting blocks are finalized at different rounds $r$ and $r'>r$, they cannot use the quorum intersection argument directly on the two precommit quorums.
To understand this, consider an honest validator that sent a precommit for $B$ at round $r$.
Even though the validator locked on $B$ at round $r$ and set its $\mathsf{lockedValue} = B$ and $\mathsf{lockedRound} = r$, it might have observed a quorum of $2f+1$ prevotes for block $B'$ at a later round $r^*>r$.
In this case, upon observing the proposal $\langle \mathsf{Proposal}, H, r', B', r^* \rangle$, the honest validator would send a prevote for block $B'$ by \textbf{voting rule 2}, after which it could send a precommit.
Then, the naive intersection argument between the precommit quorums would identify this honest validator as adversarial, which violates accountable safety.

To find the validators culpable for the safety violation in the example above, we consider the first round $r^*$ such that a collection of $2f+1$ prevotes from round $r^*$, \ie, $\langle \mathsf{Prevote}, H, r^*, id(B') \rangle$, is formed for block $B'$.
The set of validators that sent these prevotes constitute the potential set of adversarial validators.
Suppose these validators broadcast prevotes for some proposal $\langle \mathsf{Proposal}, H, r^*, B', v, vr \rangle$.
Now, since $r^*$ is the first round greater than $r$, where a quorum of $2f+1$ prevotes is formed for $B'$, no validator could have observed a quorum for $B'$ at any round $vr \in (r, r^*)$.
Thus, the validators that were locked on $B$ at round $r$ should not have sent prevotes for $B'$ as none of the voting rules could have been satisfied in their views.
Sending a prevote in such circumstances is called the \emph{amnesia} attack since the adversarial validators \emph{forget} that they had an earlier lock on $B$ (\cf~\cite{tendermintacc}).
Consequently, to determine the set of adversarial validators, clients must find the intersection of the validator sets that have sent the $2f+1$ precommits for block $B$ at round $r$ and the $2f+1$ prevotes for $B'$ at round $r^*$.

\subsection{Lack of Accountable Safety under Partial Synchrony}
Unfortunately, the current version of Tendermint \cite{2018tendermint} does not allow clients to generate a proof of protocol violation in the case of an amnesia attack.
This is due to the indistinguishability of two worlds with different sets of adversarial validators under partial synchrony.

Consider a client that aims to identify the culpable validators in the attack above (by calling the forensic protocol), after collecting transcripts and observing the quorum of $2f+1$ round $r^*>r$ prevotes for $B'$.
For this purpose, the protocol must ascertain that $r^*$ is the earliest round, where a quorum of $2f+1$ prevotes was formed for block $B'$.
In world 1, this is indeed the case.
Then, the protocol can identify the validators that sent both a round $r^*$ prevote and a round $r$ precommit for $B$ as adversarial,
since there is no set of $2f+1$ prevotes for $B'$ from any round $r'' < r^*$ that could have prompted these validators to release their locks on $B$.
However, in world 2, there is a round $r''<r^*$, in which the adversarial validators sent a quorum of $2f+1$ round $r''$ prevotes for $B'$ to the honest validators.
No client (other than the honest validators) receives these prevotes for block $B'$ due to partial synchrony.
Thus, for the clients and the forensic protocol invoked by them, the two worlds are \emph{indistinguishable}.
Then, the adversary can convince the clients that an honest validator is a protocol violator by giving them the same proof output by the forensic protocol in world 1, which contradicts accountable safety.
\begin{theorem}
\label{thm:tendermint-not-accountable-psync}
Tendermint protocol does not provide accountable safety with resilience greater than one validator under a partially synchronous network.
\end{theorem}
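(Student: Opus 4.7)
The plan is to prove the negative result by an indistinguishability argument: I will exhibit two Tendermint executions $E_1$ and $E_2$, both ending in a safety violation between conflicting finalized blocks $B$ (confirmed at round $r$) and $B'$ (confirmed at a later round $r' > r$), such that (i) the two executions are transcript-identical from the viewpoint of any external client invoking the forensic protocol, yet (ii) the sets of validators that actually deviate from the protocol are different. Any deterministic forensic algorithm then returns the same output on both transcripts, and a pigeonhole argument shows that identifying more than one validator as a protocol violator must wrongly accuse an honest validator in at least one of the two worlds, contradicting Definition~\ref{def:accountable-safety}(ii) for every resilience $f_a \geq 2$.

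First, I would construct $E_1$ as the amnesia execution sketched in Section~\ref{sec:acc-safety-rules}: a set $A$ of $f$ corrupted validators joins $f+1$ honest validators to form a round-$r$ precommit quorum for $B$, locking everyone in the quorum on $B$; then, at the first round $r^*$ in which a quorum of $2f+1$ prevotes for $B'$ is formed, the validators in $A$ prevote and precommit for $B'$ without any prior round-$r''$ quorum for $B'$ having been formed, together with the $f+1$ other honest validators that never locked on $B$. In $E_2$, I swap the roles of one $v \in A$ with some honest $h$ that was locked on $B$ in $E_1$: now $v$ is honest and $h$ is corrupted, and $|A_2| = |A| = f$. The adversary in $E_2$ exploits partial synchrony to deliver, only to $h$, a valid quorum of $2f+1$ round-$r''$ prevotes for $B'$ with $r < r'' < r^*$; upon receiving it, $h$ legitimately releases its lock via voting rule 2 and prevotes for $B'$ at round $r^*$, while $v$, now honest, emits exactly the messages it emitted as a corrupted party in $E_1$, since by design those messages coincide with prescribed honest behavior under $v$'s view in $E_2$. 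Because the hidden round-$r''$ prevotes are never released from the network partition, no external client sees them, so the two executions yield identical external transcripts.

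The main obstacle is realizing the hidden round-$r''$ quorum in $E_2$ with actual valid signatures given that only $f$ parties are corrupted; I cannot forge the remaining $f+1$ honest signatures. I would address this by choosing $r''$ to coincide with a round whose leader is corrupted, who secretly delivers a proposal for $B'$ to a select subset consisting of the $f$ corrupted validators and the $f+1$ honest validators that were \emph{not} locked on $B$ at round $r$ (so their voting rules are trivially satisfied at $r''$); those $2f+1$ parties genuinely broadcast prevotes, but the adversary schedules all of them to arrive only at $h$ before round $r^*$ and withholds them from every other party until after GST (and after the forensic protocol is invoked). A secondary subtlety is verifying that $v$'s observed messages in $E_1$ are consistent with honest execution in $E_2$; this reduces to checking that the sequence of proposals, prevotes and precommits an honest $v$ with $E_2$'s delivery schedule would emit matches exactly what corrupted $v$ sent in $E_1$, which can be arranged by tailoring the delivery schedule to $v$ symmetrically in both worlds.
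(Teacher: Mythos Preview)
Your high-level indistinguishability strategy is the same as the paper's, but two quantitative gaps prevent your construction from establishing the stated bound.

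First, with only $|A|=f$ corrupted validators you cannot produce a safety violation in Tendermint at all. Your own arithmetic reveals this: ``$f+1$ honest validators'' locked on $B$ plus ``$f+1$ other honest validators that never locked on $B$'' plus $|A|=f$ totals $3f+2>n$. With the correct count of $2f+1$ honest validators, any split into locked and unlocked leaves one of the two quorums (round-$r$ precommits for $B$, round-$r^*$ prevotes for $B'$) one vote short. The paper therefore uses $f+1$ adversarial validators in each world, the minimum needed to break safety.

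Second, and more importantly, swapping a \emph{single} validator between the two worlds does not yield the ``at most one'' conclusion. If the adversarial sets are $A_1$ and $A_2=(A_1\setminus\{v\})\cup\{h\}$, then the forensic output $S$ common to both transcripts need only lie in $A_1\cap A_2$, which has size $|A_1|-1$; your pigeonhole claim that $|S|\geq 2$ forces a wrong accusation is simply false when $f\geq 3$. The paper instead partitions the validators into three blocks $P,Q,R$ of size $f$ plus one extra validator $x$, takes $R\cup\{x\}$ adversarial in world~1 and $P\cup\{x\}$ adversarial in world~2, so that the intersection is exactly $\{x\}$ and hence $|S|\leq 1$. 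The execution is then engineered across four rounds so that in world~1 the validators in $R$ commit amnesia at round~2, while in world~2 the (now honest) validators in $R$ receive, via the partially synchronous network and hidden from all clients, a genuine quorum of round-$1$ prevotes for $B'$ produced by $P\cup Q\cup\{x\}$, allowing them to release their lock legitimately by voting rule~2. A smaller slip in your write-up: in $E_2$ you deliver the hidden prevotes to the newly \emph{corrupted} $h$, but it is the newly \emph{honest} $v$ that needs them to justify its lock release; $h$ needs no justification.
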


\begin{proof}
Towards contradiction, suppose Tendermint provides accountable safety with resilience of greater than one validator.
Consider rounds $r=0,1,2$ and $3$ of some height $H$ before GST.
There are $3f+1$ validators.
Let $P$, $Q$ and $R$ denote disjoint sets of $f$ validators each.
Let $x$ denote the remaining validator.
We next consider the following two worlds.

\noindent
\textbf{World 1:}
Validators in $R$ and $x$ are adversarial and the rest are honest.

\emph{Round 0:}
At round $0$, the adversary delivers only the messages among the validators in $P \cup R \cup x$.
A new block $B$ is proposed at round $0$, and gathers $2f+1$ prevotes and precommits from the validators in $P \cup R \cup x$.
However, the honest validators in $P$ do not observe the precommits by those in $R$.
Thus, even though they lock on $B$, they do not decide $B$.

\emph{Round 1:}
At round $1$, the adversary delivers only the messages among the validators in $Q \cup R \cup x$.
A new block $B'$ is proposed by an honest validator in $Q$ and gathers $f$ round $1$ prevotes from the validators in $Q$. 
The adversarial validators in $R \cup x$ do not send round $1$ prevotes for $B'$.
Hence, the honest validators in $Q$ send precommits with the \emph{nil} value at round $1$, and $B'$ cannot be decided by the round $1$ prevotes and precommits.

\emph{Round 2:}
At round $2$, the adversary again delivers only the messages among the validators in $Q \cup R \cup x$.
The adversarial leader $x$ sends the proposal $\langle \mathsf{PROPOSAL}, H, r=2, B', vr=-1 \rangle$.
The block $B'$ gathers $2f+1$ round $2$ prevotes $\langle \mathsf{PREVOTE}, H, r=2, id(B') \rangle$ from the validators in $Q \cup R \cup x$; however, the adversarial validators in $R \cup x$ do not show their prevotes to the honest validators in $Q$.
Hence, the honest validators in $Q$ send precommits with the \emph{nil} value at round $2$, and $B'$ cannot be decided by the round $2$ prevotes and precommits.

\emph{Round 3:}
Finally, at round $3$, the adversary delivers only the messages among the validators in $P \cup R \cup x$.
An adversarial validator sends the proposal $\langle \mathsf{PROPOSAL}, H, r=3, B', vr=2 \rangle$, and the adversary delivers the $2f+1$ round $2$ prevotes for $B'$ to the honest validators in $P$.
Hence, the validators in $P$ unlock from $B$ and, along with the adversarial validators in $R \cup x$, send prevotes and precommits for $B'$.

\noindent
\textbf{Clients in World 1:}
A client $\client_1$ decides $B$ at the end of round $0$ upon observing the round $0$ prevotes and precommits for $B$ by the validators in $P \cup R \cup x$.
A different client $\client_2$ decides $B'$ at the end of round $3$ upon observing the messages sent by the validators in rounds $1$, $2$ and $3$.
Since Tendermint is accountably-safe with a resilience of greater than one validator, upon collecting the messages received by the clients, the forensic protocol outputs at least one validator from the set $R$ (otherwise, it must have identified an honest validator which would imply a contradiction).

\noindent
\textbf{World 2:}
Validators in $P$ and $x$ are adversarial and the rest are honest.

\emph{Round 0:}
At round $0$, the adversary delivers only the messages among the validators in $P \cup R \cup x$.
A new block $B$ is proposed at round $0$, and gathers $2f+1$ prevotes and precommits from the validators in $P \cup R \cup x$.
However, the honest validators in $R$ do not observe the precommits by those in $P$.
Thus, even though they lock on $B$, they do not decide $B$.

\emph{Round 1:}
At round $1$, the adversary delivers only the messages among the validators in $P \cup Q \cup x$.
A new block $B'$ is proposed by an honest validator in $Q$.
The block $B'$ gathers $2f+1$ round $1$ prevotes from the validators in $P \cup Q \cup x$; however, the adversarial validators in $P \cup x$ do not show their prevotes to the honest validators in $Q$.
Hence, the honest validators in $Q$ send precommits with the \emph{nil} value at round $1$, and $B'$ cannot be decided by the round $1$ prevotes and precommits.

\emph{Round 2:}
At round $2$, the adversary delivers only the messages among the validators in $Q \cup R \cup x$.
The adversarial leader $x$ sends two proposals: $\langle \mathsf{PROPOSAL}, H, r=2, B', vr=-1 \rangle$ to the validators in $Q$ and $\langle \mathsf{PROPOSAL}, H, r=2, B', vr=1 \rangle$ to the validators in $R$.
It also shows the $2f+1$ round $1$ prevotes for $B'$ to the validators in $R$.
Consequently, the block $B'$ gathers $2f+1$ round $2$ prevotes $\langle \mathsf{PREVOTE}, H, r=2, id(B') \rangle$ from the validators in $Q \cup R \cup x$; however, the adversarial validator $x$ does not show its prevote to the honest validators in $Q \cup R$.
Hence, the honest validators in $Q \cup R$ send precommits with the \emph{nil} value at round $2$, and $B'$ cannot be decided by the round $2$ prevotes and precommits.

\emph{Round 3:}
Finally, at round $3$, the adversary delivers only the messages among the validators in $P \cup R \cup x$.
An adversarial validator sends the proposal $\langle \mathsf{PROPOSAL}, H, r=3, B', vr=2 \rangle$, and the adversary delivers the $2f+1$ round $2$ prevotes for $B'$ to the honest validators in $R$.
Hence, all validators in $P \cup R \cup x$, send prevotes and precommits for $B'$.

\noindent
\textbf{Clients in World 2:}
A client $\client_1$ decides $B$ at the end of round $0$ upon observing the round $0$ prevotes and precommits for $B$ by the validators in $P \cup R \cup x$.
A different client $\client_2$ decides $B'$ at the end of round $3$ upon observing all round $1$, $2$ and $3$ messages, except the round $1$ prevotes by the validators in $P \cup x$ and the round $2$ proposal $\langle \mathsf{PROPOSAL}, H, r=2, B', vr=1 \rangle$ by $x$.
The adversarial validators in $P \cup x$ send the same messages  to the forensic protocol as they do in world 1.
Hence, the forensic protocol receives the same set of messages as in world 1 and identifies $x$ and the same subset of the validators in $R$ as in world 1 as protocol violators with overwhelming probability.
Since the validators in $R$ are honest in world 2, this is a contradiction with the definition of accountable safety.
\end{proof}

In Tendermint, proposals do not include the $2f+1$ prevotes that justify the leader's $\mathsf{validValue}$.
The protocol instead expects the validators to receive these prevotes from the network, which happens in a timely manner under synchrony.
However, Theorem~\ref{thm:tendermint-not-accountable-psync} holds even if these prevotes are broadcast alongside the proposals (as in HotStuff); since its proof already assumes that the clients expect to see the round $vr$ prevotes that justify a proposal $\langle \mathsf{Proposal}, H, r=2, B', vr \rangle$ before considering the proposal itself.

\subsection{Lack of Accountable Safety under Synchrony}

If the network is known to become synchronous when the forensic protocol is invoked, then the protocol can distinguish the two worlds above with different sets of honest validators by querying the honest validators and learning about the $2f+1$ prevotes from round $r''$ in world 2.
However, this is not sufficient to provide accountable safety, which requires the forensic protocol to generate a transferable proof of protocol violation.
As it is not possible to create a \emph{proof of absence}, each client must check for themselves which world they are in, \ie, they must verify the absence or presence of the $2f+1$ prevotes from some round $r'' < r^*$ by communicating with the honest validators.
This observation is formalized by the following theorem:
\begin{theorem}
\label{thm:tendermint-not-accountable-sync}
Tendermint protocol does not provide accountable safety with resilience greater than one validator, even if the network is known to become synchronous when the forensic protocol is invoked.
\end{theorem}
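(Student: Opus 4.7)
The plan is to extend the two-world argument underlying Theorem~\ref{thm:tendermint-not-accountable-psync} to the synchronous setting by exploiting the non-transferability of proofs of absence. First I would recall the two worlds from the partial-synchrony proof: in world~1 the adversary stages an amnesia attack in which validators locked on $B$ at round $r$ release their locks without ever seeing a round $r''<r^*$ quorum of prevotes for $B'$; in world~2 the same validators are honest and released their locks after the adversary delivered such a round $r''$ quorum to them and suppressed it from everyone else. In both worlds, the on-chain transcript available to a late-coming client is identical, and in both worlds a safety violation results with conflicting finalized blocks $B$ and $B'$.

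Next I would invoke the stated synchrony assumption and observe that an \emph{interactive} forensic protocol could, in principle, distinguish the two worlds by directly querying the honest validators about their round $r''$ view; this, however, does not suffice for accountable safety. By Definition~\ref{def:accountable-safety}, the forensic protocol must produce a \emph{transferable} proof, that is, a string that any other client can verify on its own without further interaction with the validator set. I would therefore pin down the interface: the forensic protocol reads the network, possibly queries honest validators, and outputs a bit-string certificate; downstream verification is a deterministic predicate applied to that certificate together with public parameters.

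The crux of the argument is that any such certificate reduces to a claim about the absence of a round $r''$ prevote quorum for $B'$, and this absence is not certifiable as a static string. Concretely, for any candidate certificate $\pi$ produced in world~1, the adversary in world~2 can arrange for the exact same $\pi$ to be produced, since the round $r''$ prevotes do exist in world~2 but the forensic protocol (run by some client who did not receive them) has no way to incorporate this fact into $\pi$. Hence the verifier, seeing $\pi$ alone, has the same view in both worlds and must output the same set of culprits in both. In world~1 that set must contain at least two adversarial validators if resilience exceeds one; in world~2 those same validators are honest, violating clause~(ii) of Definition~\ref{def:accountable-safety}. This contradiction forces the resilience to be at most one.

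The main obstacle I expect is making the non-transferability step fully rigorous: one has to specify the forensic protocol's interface so precisely that the two-world indistinguishability applies to the \emph{output certificate}, not merely to the interactive transcript the forensic protocol observes. I would handle this along the lines of the formalization in \cite{DBLP:conf/ccs/ShengWNKV21}, insisting that verification by a third party is a local, non-interactive predicate over the certificate and public keys, and then showing by an explicit adversarial scheduling that any such certificate admits both a world-1 and a world-2 realization in which the sets of honest and adversarial validators are swapped on the relevant quorum. Given that both worlds are consistent with the synchronous delivery assumption after the forensic protocol is invoked (the round $r''$ messages in world~2 were already delivered before GST in the distinguishing stage), the indistinguishability of certificates is preserved, completing the proof.
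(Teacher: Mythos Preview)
Your overall architecture matches the paper's: reuse the two worlds from Theorem~\ref{thm:tendermint-not-accountable-psync}, grant that an interactive forensic protocol could tell them apart under synchrony, and then argue that \emph{transferability} of the certificate defeats this. But your justification for the key step --- why the world-1 certificate $\pi$ can be produced in world~2 --- is stated in a way that contradicts the theorem's own premise. You write that ``the forensic protocol (run by some client who did not receive them) has no way to incorporate this fact into $\pi$.'' Under the hypothesis that the network is synchronous when the forensic protocol is invoked, every client running the forensic protocol \emph{does} receive the round-$r''$ prevotes in world~2; you cannot rely on suppressing them from an honest forensic invocation. If the honest forensic output may depend on all received messages, it need not equal $\pi$ in world~2, and your indistinguishability claim for the honestly produced certificate fails.

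The paper closes this gap with a cleaner argument that sidesteps what honest forensics can see: in world~2 (with synchrony at forensic time) the set of messages available is a strict \emph{superset} of the set available in world~1. Hence an \emph{adversarial} client in world~2 can restrict to the world-1 subset and recompute the exact certificate $\pi$ that the forensic protocol output in world~1. Because accountable safety requires that no honest validator ever be identified by any valid transferable proof --- regardless of who assembled it --- this $\pi$ still identifies a member of $R$, who is honest in world~2, yielding the contradiction. In short, replace ``the forensic protocol did not receive the prevotes'' with ``the adversary in world~2 can ignore the extra prevotes it did receive''; your proof-of-absence intuition is the right heuristic, but the rigorous mechanism is the superset-of-messages observation, not message suppression at the forensic stage.
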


If the network were known to become synchronous when the forensic protocol is invoked, the forensic protocol would receive the $2f+1$ round $1$ prevotes by the validators in $P \cup Q \cup x$ from the honest validators in $R$ (who observed these round $1$ prevotes in round $2$) and identify those in $P$ as protocol violators in world 2.

\begin{proof}
Towards contradiction, suppose Tendermint provides accountable safety with resilience greater than one validator.
At the invocation of the forensic protocol, the network has become synchronous. 
We next construct the following two worlds inspired by the proof of Theorem~\ref{thm:tendermint-not-accountable-psync}:

\noindent
\textbf{World 1:} This is the same as world 1 described by the proof of Theorem~\ref{thm:tendermint-not-accountable-psync}.
The forensic protocol does not receive any round $1$ messages from the validators in $P$ and generates a proof that irrefutably identifies a validator in $R$ as a protocol violator.

\noindent
\textbf{World 2:} This is the same as world 2 described by the proof of Theorem~\ref{thm:tendermint-not-accountable-psync}, except that since the network has become synchronous, the forensic protocol has also received the round $1$ prevotes by the validators in $P \cup x$ and the round $2$ proposal $\langle \mathsf{PROPOSAL}, H, r=2, B', vr=1 \rangle$.
Thus, the set of messages received by the forensic protocol in world 2 is a superset of the messages received in world 2 of the proof of Theorem~\ref{thm:tendermint-not-accountable-psync}, which is the same as the messages received in world 1.
This implies that given these messages, an adversarial client can generate the same proof as the one generated in world 1, which irrefutably identifies a validator in $R$ as a protocol violator.
However, since the validators in $R$ are honest in world 2, this is a contradiction with the definition of accountable safety.
\end{proof}

\subsection{Tendermint Made Accountably-safe}
\label{sec:tendermint-made-acc-safe}
Inspired by the HotStuff-view protocol in~\cite{DBLP:conf/ccs/ShengWNKV21}, we can change Tendermint so that each $\mathsf{Prevote}$ message includes the $\mathsf{validRound}$ number $vr$ within the proposal it supports. For instance, if a validator sends a prevote for the proposal $\langle \mathsf{Proposal}, H, r^*, B', vr \rangle$, then it includes $vr$ to its prevote as shown: $\langle \mathsf{Prevote}, H, 2, id(B'), vr \rangle$.
This small change suffices to make Tendermint accountably-safe and the proof of accountable safety proceeds similar to \cite[Theorem 5.1]{DBLP:conf/ccs/ShengWNKV21}.

\subsection{EOTS Finality Gadget Bypasses the Amnesia Obstruction}
\label{sec:eots-bypass-amnesia}

The negative results above (\Cref{thm:tendermint-not-accountable-psync,thm:tendermint-not-accountable-sync}) show that quorum-intersection arguments on the prevote/precommit transcript fail under amnesia attacks: an adversarial validator can equivocate across rounds at the Tendermint layer without producing two messages that a forensic protocol can use to identify it.
The EOTS finality gadget (\Cref{alg:validator-finality,alg:slashing-condition}) circumvents this obstruction by introducing an \emph{additional} signing round on a fresh per-height message $\langle \mathsf{Final}, H, \mathsf{hash}(B) \rangle$, signed with a separate EOTS key (independent of the Tendermint consensus key) under a per-height context $\pct_{H,\validator}$.
A block is considered finalized only if it gathers $2f+1$ EOTS-Final signatures at height $H$.
Hence, two conflicting finalizations at the same height require at least $f+1$ validators in the intersection of two such quorums to produce two EOTS signatures with the same $\pct$ on different messages---regardless of how those validators behaved at the prevote/precommit layer.
This is exactly the slashing condition of \Cref{alg:slashing-condition}, and \textsf{EXT-SCMA} security of EOTS (\Cref{EOTS:sec:ext-scma}) then extracts their secret keys.
\Cref{thm:tendermint-accountability-extraction} formalizes the resulting $(f+1)$-EOTS-safety; its proof appears in~\Cref{sec:appendix-proof-thm-daps-safety}.

\end{document}